\documentclass[final,1p,times]{elsarticle}


\usepackage{amsmath,amsthm,amssymb,amsfonts,amscd}
\usepackage[T1]{fontenc}
\usepackage{graphicx}

\def\R{\mathbb R}
\def\N{\mathbb N}
\def\C{\mathbb C}

\renewcommand{\d}{{\mathrm d}}
\renewcommand{\i}{{\mathrm i}}
\def\e{\mathrm e}
\def\O{\mathcal O}

\def\S{\mathcal{S}}

\def\P{\mathcal P}
\def\U{\mathcal{U}}

\DeclareMathOperator{\cotg}{cotg}
\DeclareMathOperator{\diag}{diag}

\newtheorem{lemma}{Lemma}[section]
\newtheorem{proposition}[lemma]{Proposition}

\newtheorem{observation}[lemma]{Observation}

\theoremstyle{definition}

\theoremstyle{remark}
\newtheorem{remark}[lemma]{Remark}
\newtheorem{notation}[lemma]{Notation}

\begin{document}

\begin{frontmatter}

\title{Potential-controlled filtering in quantum star graphs}
%
%
\author{Ond\v{r}ej Turek} \ead{ondrej.turek@kochi-tech.ac.jp}
\author{Taksu Cheon} \ead{taksu.cheon@kochi-tech.ac.jp}
\address{Laboratory of Physics, Kochi University of Technology, Tosa Yamada, Kochi 782-8502, Japan}

\date{\today}
\begin{abstract}

We study the scattering in a quantum star graph with a  F\"ul\"op--Tsutsui coupling in its vertex and with external potentials on the lines. We find certain special couplings for which the probability of the transmission between two given lines of the graph is strongly influenced by the potential applied on another line. On the basis of this phenomenon we design a tunable quantum band-pass spectral filter. The transmission from the input to the output line is governed by a potential added on the controlling line. The strength of the potential directly determines the passband position, which allows to control the filter in a macroscopic manner. Generalization of this concept to quantum devices with multiple controlling lines proves possible. It enables the construction of spectral filters with more controllable parameters or with more operation modes. In particular, we design a band-pass filter with independently adjustable multiple passbands. We also address the problem of the physical realization of F\"ul\"op--Tsutsui couplings and demonstrate that the couplings needed for the construction of the proposed quantum devices can be approximated by simple graphs carrying only $\delta$~potentials.

\end{abstract}

\begin{keyword}
quantum graphs \sep threshold resonance \sep
singular vertex coupling \sep  quantum control 
\PACS 03.65.-w \sep  03.65.Nk \sep 73.63.Nm
\end{keyword}

\end{frontmatter}

\section{Introduction}

Quantum mechanics on graphs 
is recently attracting more and more attention due to its current and prospective use in nanosciences, especially nanoelectronics \cite{AG05,EKST08,LBHS10}. 
In particular, since quantum graphs serve as effective models of graph-like structures of submicron sizes, they allow to study single electron devices based on interconnected nanoscale wires.
One of the first such applications emerged in the spectral filtering.
It is known that a line carrying the $\delta$-interaction is usable as a high-pass filter, and similarly, a line carrying the $\delta'$-interaction works as a low-pass filter. More complex spectral filters have been developed as well, for instance a trident filter~\cite{SCdL03}, or a Y-shaped branching filter, functionning as a high-pass/low-pass junction \cite{CET09}.

If a quantum device is easily controllable, its usefulness is markedly enhanced.
In principle, any device based on a quantum graph, including the aforementionned filters, can be controlled by adjusting its internal properties, for example, by varying the parameters of the couplings in its vertices. However, such a controllability is hard to be implemented in practice since it requires a real-time modification of a nanoscale object. It is, therefore, highly desirable to develop a device, controllable in a simpler manner.

In this paper we propose a quantum filtering device that can be controlled without affecting its internal structure. We consider a model in which constant potentials are applied to the lines of a star graph, and the control is achieved through the variation of the potential strengths.
The design of the device essentially relies on a special type of point interaction used in the star graph vertex, namely, on a scale invariant (F\"ul\"op--Tsutsui) coupling.
It should be emphasized that the F\"ul\"op--Tsutsui point interaction is very different from usual free coupling and the $\delta$~potential.
Its exotic scattering properties together with the external potentials on the lines induce a threshold resonance effect, on which the filtering function is based.

It is known that quantum graph vertices with F\"ul\"op--Tsutsui couplings can be approximately constructed from vertices carrying $\delta$-couplings \cite{CET10,CT10}. In this paper we improve the existing approximation schemes, and, in addition, demonstrate 
its convergence in terms of the scattering matrices. Since the vertices carrying the $\delta$-couplings can be 
approximated by regular, smooth potentials~\cite{Ex96b}, the realization of the required F\"ul\"op--Tsutsui couplings is within reach of the technology in the near future. Therefore, our quantum graph model of a quantum device is of a practical relevance, not just a mathematical abstraction.

The paper is organized as follows:
In Section~\ref{Section: Preliminaries} we prepare the necessary theoretical basis, including the scattering formalism of quantum graphs with external potentials on the lines.
In Section~\ref{Section: Device} we introduce the concept of a potential-controllable quantum filter.
The first device of this type is designed in Section~\ref{Section: n=3}, namely a band-pass spectral filter 
with one passband tunable by adjusting the potential on the controlling line.
Before generalizing the result to more complex devices, we devote Section~\ref{Section: Potential} to a practical question. 
Although the control potential is assumed 
to be constant on the whole line, we demonstrate that its small perturbation in a segment close to the vertex, which can be expected in practice, does not distort the filtering function.
The idea of Section~\ref{Section: n=3} is extended to filters with two controlling lines in Sections~\ref{Section: n=4} and~\ref{Section: band-pass}. The filter proposed in Section~\ref{Section: n=4} has two passbands, and the position of each of them is independently tunable by a control potential. The filter developed in Section~\ref{Section: band-pass} has one passband of adjustable bandwidth: its upper and lower cutoff energies are both tunable by the potentials on the controlling lines.
In Section~\ref{Section: n=2r} we introduce a spectral filter with $r$ controlling lines, $r\geq2$.  The filter has $r$ passbands, each of 
which can be independently adjusted (as well as disabled) by the potentials on the controlling lines. This result is achieved by generalizing the idea of Section~\ref{Section: n=4}. Finally, a filter with multiple independent outputs is studied in Section~\ref{Section: Branching}. The outputs work as individually tunable band-pass filters; the position of the passband at each output is adjustable by tuning a potential on a dedicated controlling line.

The construction of the F\"ul\"op--Tsutsui couplings is discussed in Sections~\ref{Section: Approximation} and~\ref{Section: S-matrix}. In Section~\ref{Section: Approximation} we demonstrate how they can be approximately represented by a small web carrying $\delta$-couplings in the vertices and vector potentials on the lines. In Section~\ref{Section: S-matrix} we calculate the on-shell scattering matrix of the approximating graph and prove its convergence to the scattering matrix of the required F\"ul\"op--Tsutsui vertex in the small size limit. The paper is concluded by Section~\ref{Section: Conclusion} in which we discuss possible extensions and related open problems.

\section{Preliminaries}\label{Section: Preliminaries}

A wave function of a particle confined to a graph having $N$ lines $E_1,\ldots,E_N$ of the lengths $\ell(e_1),\ldots,\ell(e_N)$ has $N$ components, $\Psi=(\psi_1,\psi_2,\ldots,\psi_N)^T$, and the corresponding Hilbert space $\mathcal{H}$ is given by $\bigoplus_{j=1}^N L^2(\ell(E_j))$. Let us assume that there are potentials $V_1,\ldots,V_N$ and vector potentials $A_1,\ldots,A_N$ imposed on the graph lines. The Hamiltonian action is given by
\begin{equation}
H
\begin{pmatrix}
\psi_1\\
\vdots\\
\psi_N
\end{pmatrix}
=\frac{1}{2m}
\begin{pmatrix}
\left(-\i\hbar\frac{\d}{\d x}-qA_1\right)^2\psi_1+V_1\cdot\psi_1\\
\vdots\\
\left(-\i\hbar\frac{\d}{\d x}-qA_N\right)^2\psi_N+V_N\cdot\psi_N
\end{pmatrix}
\,,
\end{equation}
where $m$ is the mass of the particle and $q$ is its charge. Note that if there is no vector potential, the Hamiltonian takes a simpler form
\begin{equation}
H
\begin{pmatrix}
\psi_1\\
\vdots\\
\psi_N
\end{pmatrix}
=
\begin{pmatrix}
-\frac{\hbar^2}{2m}\psi_1''+V_1\cdot\psi_1\\
\vdots\\
-\frac{\hbar^2}{2m}\psi_N''+V_N\cdot\psi_N
\end{pmatrix}\,.
\end{equation}

\subsection{Vertex couplings}

Let us consider a graph vertex at which $n$ lines are coupled. We assume that the wave function components at these lines are denoted by $\psi_1(x_1),\ldots,\psi_n(x_n)$ and that the coordinates are choosen such that $x_j=0$ corresponds to the vertex for all $j=1,\ldots,n$. We introduce the vectors
\begin{equation}
\Psi(0)=
\begin{pmatrix}
\psi_1(0) \\
\vdots \\
\psi_n(0)
\end{pmatrix}
\qquad\text{and}\qquad
\Psi'(0)=
\begin{pmatrix}
\psi'_1(0) \\
\vdots \\
\psi'_n(0)
\end{pmatrix} \, .
\end{equation}
Since the Hamiltonian is a second-order
differential operator, the boundary conditions at the vertex couple $\Psi(0)$ and $\Psi'(0)$. The most general form of the boundary conditions is
\begin{equation}\label{b.c.}
A\Psi(0)+B\Psi'(0)=0\,,
\end{equation}
where $A$ and $B$ are complex $n\times n$ matrices.

To ensure the self-adjointness of the Hamiltonian, which is in physical
terms equivalent to conservation of the probability current at the
vertex, the matrices $A$ and $B$ cannot be arbitrary but have
to satisfy the requirements
\begin{equation}\label{KS}
\begin{split}
\bullet \quad & \mathrm{rank}(A|B)=n,\\
\bullet \quad & \text{the matrix $AB^*$ is self-adjoint},
\end{split}
\end{equation}
where $(A|B)$ denotes the $n\times2n$ matrix with $A,B$ forming
the first and the second $n$ columns, respectively \cite{KS99}. The relation
\eqref{b.c.} together with the requirements~\eqref{KS} describe all possible vertex
boundary conditions giving rise to a self-adjoint Hamiltonian.

The requirements~\eqref{KS} can be partly included into the boundary conditions~\eqref{b.c.} by writing the matrices $A,B$ in a special form. One of the ways was discovered by Harmer~\cite{Ha00} and independently by Kostrykin and Schrader~\cite{KS00} who transformed \eqref{b.c.} with~\eqref{KS} into ``compact'' boundary conditions
\begin{equation}\label{V}
(\U-I)\Psi(0)+\i(\U+I)\Psi'(0)=0\,,
\end{equation}
where $\U$ is a unitary $n\times n$ matrix.
Other way was proposed in~\cite{CET10}. It consists in transforming~\eqref{b.c.} into the block form
\begin{equation}\label{ST}
\left(\begin{array}{cc}
I^{(r)} & T \\
0 & 0
\end{array}\right)\Psi'(0)=
\left(\begin{array}{cc}
S & 0 \\
-T^* & I^{(n-r)}
\end{array}\right)\Psi(0)\,,
\end{equation}
where $r\in\{0,1,\ldots,n\}$, $I^{(n)}$ is the identity matrix $n\times n$, $T$ is a general complex $r\times n-r$ matrix and $S$ is a Hermitian matrix of the order $r$. We call \eqref{ST} \emph{$ST$-form} of boundary conditions. It is simple and unique, but on the other hand it requires an appropriately chosen line numbering, see~\cite{CET10} for details.

In this paper we primarily use the $ST$-form~\eqref{ST} that best meets our needs. However, any form of boundary conditions has its pros and cons and is suitable for certain applications.

\subsection{$\delta$-couplings}

The $\delta$-coupling (or ``$\delta$~potential'') in a vertex of a quantum graph is characterized by boundary conditions
\begin{equation}\label{delta}
\psi_j(0)=\psi_\ell(0)=:\psi(0)\,, \quad j,\ell=1\ldots,n\,, \qquad
\sum^{n}_{j=1}\psi_j'(0)=\alpha\psi(0)\,,
\end{equation}
where $\alpha\in\R$ is a parameter of the coupling.
The $\delta$-coupling is the second most natural singular interaction (after the free coupling) by reason of its simple interpretation:
It can be understood as a limit case of properly scaled smooth potentials~\cite{Ex96b}.

\begin{remark}\label{delta V}
Strictly speaking, the $\delta$~potential of the strength $V_0$ ($V(x)=V_0\delta(x)$) is characterized by boundary conditions
\begin{equation}
\psi_j(0)=\psi_\ell(0)=:\psi(0)\,, \quad j,\ell=1\ldots,n\,, \qquad
\sum^{n}_{j=1}\psi_j'(0)=\frac{2m}{\hbar^2}V_0\psi(0)\,,
\end{equation}
where $m$ is the particle mass. However, since the particle in question is almost eclusively electron and, therefore, $m=m_e=\mathrm{const.}$, the term $\frac{2m}{\hbar^2}V_0$ is commonly written as one single constant $\alpha$, usually called parameter.
\end{remark}

\subsection{Scattering matrix}\label{Subsection: S-matrix}

If a quantum particle with mechanical energy $E$ living on the star graph comes in the vertex from the $\ell$-th line, it is scattered at the vertex into all the lines. The $j$-th component of the final-state wave function is given by
\begin{eqnarray}
\label{psi_ji}
\Psi^{(\ell)}_{j}(x)=\left\{\begin{array}{cl}
\frac{1}{\sqrt{k_j}}\e^{-{\rm i} k_j x}+\S_{jj}\frac{1}{\sqrt{k_j}}\e^{{\rm i} k_j x} & \quad\text{for } j=\ell\,, \\ \\
\S_{j\ell}\frac{1}{\sqrt{k_j}}\e^{{\rm i} k_j x} & \quad\text{for } j\neq \ell\,,
\end{array}\right. 
\end{eqnarray}
where $\S_{j\ell}$ are scattering amplitudes, $k_j$ are angular wavenumbers on the corresponding lines, and coefficients $1/\sqrt{k_j}$ are involved for proper normalization. For any $j$, the wavenumber $k_j$ is equal to
\begin{equation}\label{k_j}
k_j=\sqrt{\frac{2m(E-V_j)}{\hbar^2}}\,,
\end{equation}
where $V_j$ is the potential on the $j$-th line. The matrix $\S=\{\S_{j\ell}\}$ is the \emph{scattering matrix} of the graph. For a normalized wave function coming in from the $j$-th line, $\S_{ij}$ is interpreted as the complex amplitude of transmission into the $j$-th line (for $j\neq \ell$), whereas $\S_{\ell\ell}$ represents the complex amplitude of reflection.
The matrix $\S$ depends, in addition to the internal properties of the vertex, on the potentials $V_1,V_2,\ldots,V_n$, and on the particle energy $E$.
In order to express this dependence, we will often use the full symbol
\begin{equation}
\S(E;V_1,\ldots,V_n)\,.
\end{equation}
In case that $V_j=0$ for all $j=1,\ldots,n$, we will simplify the notation by omitting $V_j$, i.e.,
\begin{equation}
\S(E)\equiv\S(E;0,\ldots,0)\,.
\end{equation}

In order to derive a formula for $\S$, we define matrices $Y$ and $Y'$ such that $Y_{j\ell}=\Psi^{(\ell)}_{j}(0)$, $Y'_{j\ell}=(\Psi^{(\ell)}_{j})'(0)$.
With regard to~\eqref{psi_ji}, it holds
\begin{equation}\label{M}
\begin{split}
Y &= D^{-1} + D^{-1}\S=D^{-1}(I+\S), \\
Y' &=\i D^2(-D^{-1} + D^{-1}\S)=\i D(-I+\S),
\end{split}
\end{equation}
where $D$ is the diagonal matrix
\begin{equation}\label{K}
D = \diag\left(\sqrt{k_1},\ldots,\sqrt{k_n}\right) \, ,
\end{equation}
for $k_j$ given by~\eqref{k_j}.
Any wave function $\Psi(x)=(\psi_1(x),\ldots,\psi_n(x))^T$ (the superscript $T$ denotes the transposition) on the graph obeys the boundary conditions~\eqref{b.c.} determining the vertex. In particular, b.~c. must be satisfied by the final-state wave function $\Psi^{(\ell)}_{j}$ determined in~\eqref{psi_ji} for all $j$, hence $AY + BY' = 0$. When we substitute for $Y$ and $Y'$ from~\eqref{M}, we obtain
\begin{equation}
AD^{-1}(I+\S)+\i BD(-I+\S)=0\,,
\end{equation}
which
leads to the sought expression for the scattering matrix $\S(E;V_1,\ldots,V_n)$:
\begin{equation}\label{S}
\S(E;V_1,\ldots,V_n) = - (AD^{-1} + \i B D)^{-1}(AD^{-1} - \i B D)\,.
\end{equation}
Matrix $\S$ is always unitary, and squared moduli of its elements have the following interpretation: $|\S_{j\ell}|^2$ for $j\ne \ell$ represents the probability of transmission 
from the $\ell$-th to the $j$-th line, $|\S_{\ell\ell}|^2$ is the probability of reflection on the $\ell$-th line.

Note that if there are no potentials on the graph lines, i.e., $V_j=0$ for all $j=1,\ldots,n$, then $k_j=\sqrt{2mE/\hbar^2}$ for all $j$, and the scattering matrix can be written in the simpler form
\begin{equation}\label{S0}
\S(E) = -(A+\i kB)^{-1}(A-\i kB)\,, \qquad\text{where $k=\sqrt{2mE/\hbar^2}$}\,.
\end{equation}
Equation~\eqref{S0} allows to find a relation between the boundary conditions~\eqref{b.c.} and their form~\eqref{V}:
\begin{observation}
The boundary conditions~\eqref{b.c.} are equivalent to
\begin{equation}
\left(\S(E_1)-I\right)\Psi(0)+\i\left(\S(E_1)+I\right)\Psi'(0)=0\qquad\text{for } E_1=\frac{\hbar^2}{2m}\,.
\end{equation}
\end{observation}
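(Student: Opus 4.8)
The plan is to specialize the general scattering-matrix formula~\eqref{S0} to the energy $E_1=\hbar^2/2m$, at which $k=\sqrt{2mE_1/\hbar^2}=1$, and then read off the claimed equivalence. First I would observe that with $V_j=0$ for all $j$ (which is the setting in which $\S(E_1)$ is defined via~\eqref{S0}) the wavenumber is the same on every line, so~\eqref{S0} applies and gives $\S(E_1)=-(A+\i B)^{-1}(A-\i B)$, using $k=1$. Note that $A+\i B$ is invertible: this follows from $\rank(A|B)=n$ in~\eqref{KS} together with a standard argument (if $(A+\i B)v=0$ for some $v$, one uses the self-adjointness of $AB^*$ to conclude $Av=Bv=0$, contradicting the rank condition unless $v=0$).

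Next I would manipulate $\S(E_1)-I$ and $\S(E_1)+I$ directly. Writing $G:=(A+\i B)^{-1}$, we have $\S(E_1)=-G(A-\i B)$, hence
\begin{equation}
\S(E_1)-I=-G(A-\i B)-G(A+\i B)=-G\cdot 2A=-2GA\,,
\end{equation}
\begin{equation}
\S(E_1)+I=-G(A-\i B)+G(A+\i B)=G\cdot 2\i B=2\i GB\,.
\end{equation}
Therefore $\left(\S(E_1)-I\right)\Psi(0)+\i\left(\S(E_1)+I\right)\Psi'(0)=-2G\bigl(A\Psi(0)+\i\cdot\i B\Psi'(0)\bigr)\cdot(-1)$; being careful with signs, $-2GA\Psi(0)+\i\cdot 2\i GB\Psi'(0)=-2G\bigl(A\Psi(0)+B\Psi'(0)\bigr)$. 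Since $G$ is invertible, the left-hand side vanishes if and only if $A\Psi(0)+B\Psi'(0)=0$, which is exactly~\eqref{b.c.}. This establishes the equivalence.

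The only genuinely delicate point is the invertibility of $A+\i B$, which is needed both for~\eqref{S0} to make sense at $E_1$ and for the final biconditional; this is the step I would write out carefully, although it is standard (it is implicitly used already in deriving~\eqref{S0} and~\eqref{S}). Everything else is a one-line algebraic identity. One could alternatively deduce the statement by comparing~\eqref{S0} at $k=1$ with the Harmer--Kostrykin--Schrader form~\eqref{V}, since the map $\U\mapsto(\U-I,\i(\U+I))$ plays the same structural role as $A\mapsto(A,B)$ up to left multiplication by an invertible matrix; but the direct computation above is shorter and self-contained.
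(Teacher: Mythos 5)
Your proof is correct and takes essentially the same route as the paper, which simply substitutes $\S(E_1)$ from~\eqref{S0} (where $k=1$) and multiplies from the left by $A+\i B$ — exactly what your identities $\S(E_1)-I=-2GA$ and $\S(E_1)+I=2\i GB$ encode, with the invertibility of $A+\i B$ (standard under~\eqref{KS}, and already implicit in deriving~\eqref{S0}) justifying the biconditional. The only blemish is the garbled intermediate expression before ``being careful with signs''; the final identity $-2G\bigl(A\Psi(0)+B\Psi'(0)\bigr)$ and the conclusion are right.
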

\begin{proof}
It suffices to substitute for $\S(E_1)$ from~\eqref{S0}, and then to multiply the equation by the matrix $A+\i B$ from left.
\end{proof}

\subsection{F\"ul\"op--Tsutsui couplings}

Vertex couplings with $\S(E)$ independent of $E$ are called \emph{scale invariant}, or \emph{F\"ul\"op--Tsutsui}, couplings~\cite{FT00, NS00, SS02}. In this paper we use the latter term, since the name ``scale invariant'' sometimes leads to misinterpretations.
It can be easily shown that the scattering matrix $\S(E)$ corresponding to the boundary conditions~\eqref{ST} is energy-independent if and only if the Hermitian matrix $S$ in~\eqref{ST} satisfies $S=0$, i.e., iff the boundary conditions take the form
\begin{equation}\label{FT}
\left(\begin{array}{cc}
I^{(r)} & T \\
0 & 0
\end{array}\right)\Psi'(0)=
\left(\begin{array}{cc}
0 & 0 \\
-T^* & I^{(n-r)}
\end{array}\right)\Psi(0)\,.
\end{equation}
The scattering matrix corresponding to b.c.~\eqref{FT} can be explicitely expressed using formula~\eqref{S} to which we substitute
$A=-\begin{pmatrix}
0 & 0 \\
-T^* & I^{(n-r)}
\end{pmatrix}$,
$B=\begin{pmatrix}
I^{(r)} & T \\
0 & 0
\end{pmatrix}$:
\begin{equation}\label{SFT}
\S(E)=\left(\begin{array}{cc}
\left(I^{(r)}+TT^*\right)^{-1} \left(I^{(r)}-TT^*\right) & \left(I^{(r)}+TT^*\right)^{-1}2 T \\
\left(I^{(n-r)}+T^*T\right)^{-1}  2 T^* & -\left(I^{(n-r)}+T^*T\right)^{-1} \left(I^{(n-r)}-T^*T\right)
\end{array}\right)\,.
\end{equation}
This result can be found for example in~\cite{CT10}.

\subsection{Filtering property of the $\delta$-interaction}

The $\delta$-interaction can serve for a simple spectral filtering (Fig.~\ref{d-line}). This effect can be demonstrated with the help of the scattering matrix.
\begin{figure}[h]
\begin{center}
  \includegraphics[width=4.5cm]{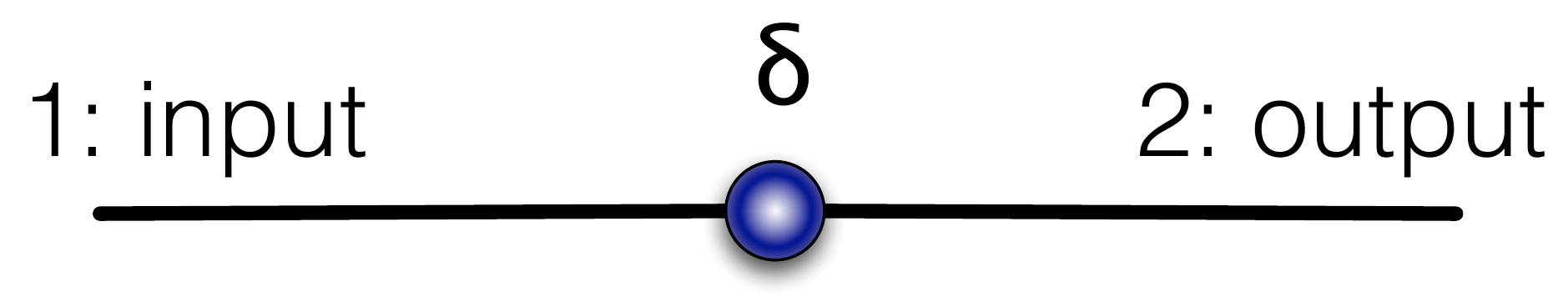}
\caption{$\delta$-interaction on the line as a high-pass spectral filter.}
\label{d-line}
\end{center}
\end{figure}
The boundary conditions~\eqref{delta} are equivalent to $-\begin{pmatrix}\alpha&0\\-1&1\end{pmatrix}\Psi(0)+\begin{pmatrix}1&1\\0&0\end{pmatrix}\Psi'(0)=0$, hence, using equation~\eqref{S0},
\begin{equation}
\S(E)=\left(\begin{array}{cc}
\frac{-\i\alpha}{2\sqrt{E}+\i\alpha} & \frac{2\sqrt{E}}{2\sqrt{E}+\i\alpha} \\
\frac{2\sqrt{E}}{2\sqrt{E}+\i\alpha} & \frac{-\i\alpha}{2\sqrt{E}+\i\alpha}
\end{array}\right)\,,
\end{equation}
where $\alpha\in\R\backslash\{0\}$ is the parameter of the $\delta$-interaction.
The transmission amplitude input~$\to$~output (1~$\to$~2) is given by $\S_{21}(E)$, hence we obtain the transmission probability
\begin{equation}
\P(E)=|\S_{21}(E)|^2=\frac{4E}{4E+\alpha^2} \, ,
\end{equation}
which satisfies
\begin{equation}
\text{$\P(E)\approx0$ \quad for $E\ll\alpha^2$ \qquad and \qquad $\P(E)\approx1$ \quad for $E\gg\alpha^2$}\,.
\end{equation}
To sum up, the system exhibits full reflection for small energies and full transmission for high energies, and, therefore, can be used as a high-pass spectral filter.

\section{Potential-controlled quantum device}\label{Section: Device}

The main goal of this paper is to design a quantum filter that is controllable by an external potential, as schematically illustrated in Figure~\ref{Device}. Any such 
device has, therefore, three (or more) lines that are used in the following way:
\begin{itemize}
\item Line 1 is \emph{input}. Particles of various energies are coming in the device along this line.
\item Line 2 is \emph{output}. Particles passed through the device are gathered on this line.
\item Line 3 is \emph{controlling line}. We assume that this line is subjected to a constant (but adjustable) external potential $U$. Adjustment of the potential directly controls the flow from the input to the output.
\end{itemize}
The concept can be naturally extended to devices having more than one controlling line, if more device parameters are to be controlled. Similarly, devices with several output lines, or several input lines, can be considered. We will deal with such cases in Sections~\ref{Section: n=4}--\ref{Section: Branching}.

\begin{figure}[h]
  \centering
  \includegraphics[width=4.5cm]{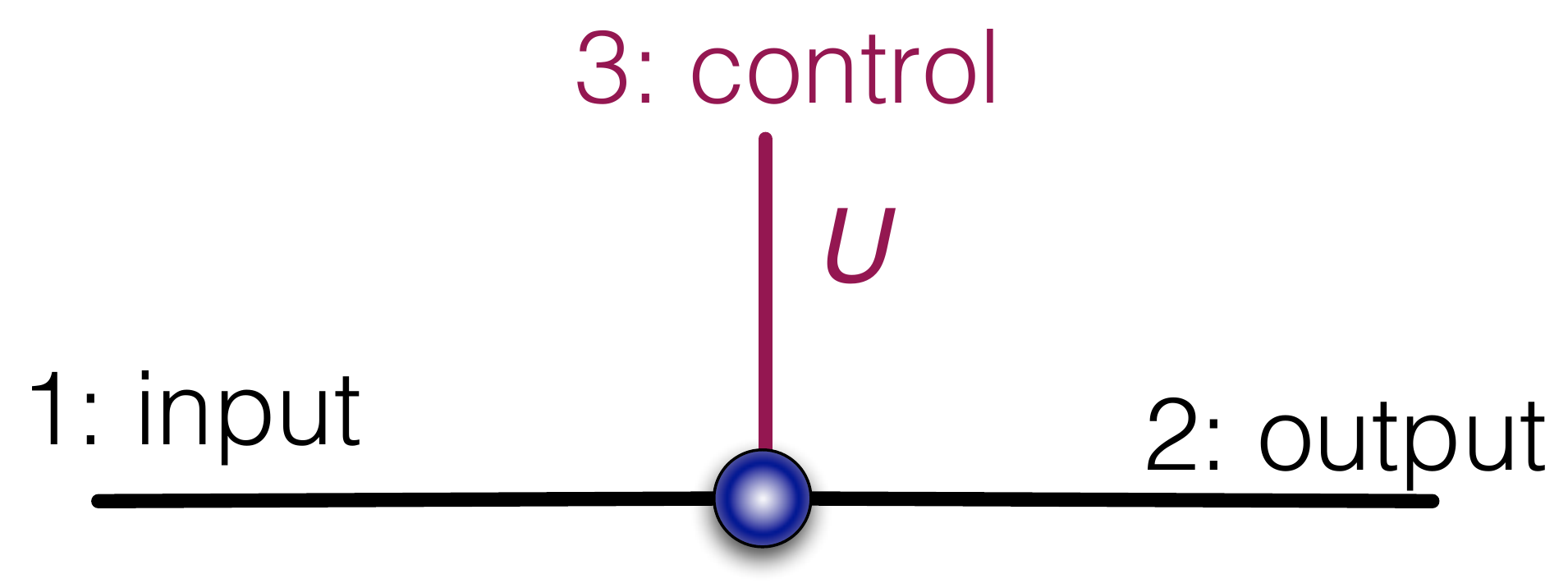}
  \caption{Schematic depiction of a quantum device controllable by an external potential. The transmission through the channel 1~$\to$~2 is being controlled by adjusting a potential $U$ on the controlling line 3.}
  \label{Device}
\end{figure}
It is desirable that the device is technically simple. We satisfy this requirement by designing the device 
modeled by a quantum star graph. In its vertex, we assume a special, F\"ul\"op--Tsutsui coupling. The parameters of the coupling will be specified later. They determine the transmission characteristics of the device.

In order to describe the effect of the adjustable control potentials on the scattering properties of the graph, we need to generalize the scattering matrix formula~\eqref{SFT}.

\begin{proposition}\label{S-matrix FT}
Let the vertex coupling in the center of a star graph is given by~\eqref{FT} and there be constant potentials $V_1,\ldots,V_n$ on the graph lines. Then the scattering matrix is given by
\begin{equation}\label{SFTpot}
\S(E;V_1,\ldots,V_n)=
-I^{(n)}+2\left(\begin{array}{c}
Q_{(1)} \\
Q_{(2)}T^*
\end{array}\right)
\left(Q_{(1)}^2+TQ_{(2)}^2T^*\right)^{-1}
\left(\begin{array}{cc}
Q_{(1)} & TQ_{(2)}
\end{array}\right)\,,
\end{equation}
where
\begin{equation}
Q_{(1)}=\diag\left(\sqrt[4]{1-\frac{V_1}{E}},\ldots,\sqrt[4]{1-\frac{V_r}{E}}\right)\,,\quad Q_{(2)}=\diag\left(\sqrt[4]{1-\frac{V_{r+1}}{E}},\ldots,\sqrt[4]{1-\frac{V_n}{E}}\right)\,.
\end{equation}
\end{proposition}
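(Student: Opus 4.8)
The plan is to specialize the general scattering-matrix formula~\eqref{S} to the F\"ul\"op--Tsutsui boundary conditions~\eqref{FT}, for which one takes $A=-\left(\begin{smallmatrix}0&0\\-T^{*}&I^{(n-r)}\end{smallmatrix}\right)$ and $B=\left(\begin{smallmatrix}I^{(r)}&T\\0&0\end{smallmatrix}\right)$ (as already noted before~\eqref{SFT}), but now keeping the full diagonal matrix $D=\diag(\sqrt{k_1},\ldots,\sqrt{k_n})$ from~\eqref{K}. The remark that makes the computation tractable is that $k_j=k\sqrt{1-V_j/E}$ with $k=\sqrt{2mE/\hbar^{2}}$, so $D=\sqrt{k}\,Q$, where $Q=\diag(Q_{(1)},Q_{(2)})$ is the block-diagonal matrix assembled from the two fourth-root matrices in the statement. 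Factoring $D^{-1}$ to the right in~\eqref{S} gives $AD^{-1}\pm\i BD=(A\pm\i k\,BQ^{2})D^{-1}$, and since the scalar $\sqrt k$ commutes with every matrix, \eqref{S} collapses to
\begin{equation*}
\S(E;V_1,\ldots,V_n)=-\,Q\,(A+\i k\,BQ^{2})^{-1}(A-\i k\,BQ^{2})\,Q^{-1}\,.
\end{equation*}
The task is thereby reduced to inverting one explicit $n\times n$ matrix, multiplying it by a second one, and conjugating the answer by the diagonal matrix $Q$.

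I would then carry out this linear algebra in block form with blocks of sizes $r$ and $n-r$. Since $BQ^{2}=\left(\begin{smallmatrix}Q_{(1)}^{2}&TQ_{(2)}^{2}\\0&0\end{smallmatrix}\right)$, we have $A\pm\i k\,BQ^{2}=\left(\begin{smallmatrix}\pm\i k\,Q_{(1)}^{2}&\pm\i k\,TQ_{(2)}^{2}\\T^{*}&-I^{(n-r)}\end{smallmatrix}\right)$. Inverting $A+\i k\,BQ^{2}$ by the Schur-complement formula relative to the lower-right block $-I^{(n-r)}$, a short computation shows that the relevant Schur complement equals $\i k\,G$ with $G:=Q_{(1)}^{2}+TQ_{(2)}^{2}T^{*}$, which in the physically relevant regime $E>\max_j V_j$ is a positive-definite diagonal plus a positive-semidefinite term, hence invertible. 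Plugging the resulting block form of $(A+\i k\,BQ^{2})^{-1}$ into the product and multiplying through, every factor of $k$ cancels and one is left with $I^{(n)}-2\left(\begin{smallmatrix}G^{-1}Q_{(1)}^{2}&G^{-1}TQ_{(2)}^{2}\\T^{*}G^{-1}Q_{(1)}^{2}&T^{*}G^{-1}TQ_{(2)}^{2}\end{smallmatrix}\right)$. Conjugating by $Q$ and using that diagonal matrices commute (so that, e.g., $Q_{(1)}G^{-1}Q_{(1)}^{2}Q_{(1)}^{-1}=Q_{(1)}G^{-1}Q_{(1)}$) recasts the block matrix as $\left(\begin{smallmatrix}Q_{(1)}\\Q_{(2)}T^{*}\end{smallmatrix}\right)G^{-1}\left(\begin{smallmatrix}Q_{(1)}&TQ_{(2)}\end{smallmatrix}\right)$, which is precisely~\eqref{SFTpot}.

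The whole argument is elementary, so the only real obstacle is keeping the signs and the factors $\i k$ straight through the block inversion and the subsequent matrix product; it is reassuring that the $k$-dependence is \emph{forced} to cancel, since the scattering matrix can depend on $E$ only through the dimensionless ratios $V_j/E$. As consistency checks I would verify that (i) setting $V_1=\ldots=V_n=0$, so that $Q_{(1)}=I^{(r)}$, $Q_{(2)}=I^{(n-r)}$ and $G=I^{(r)}+TT^{*}$, reproduces~\eqref{SFT} (matching the two lower blocks uses the push-through identity $T^{*}(I^{(r)}+TT^{*})^{-1}=(I^{(n-r)}+T^{*}T)^{-1}T^{*}$), and (ii) the resulting $\S$ is unitary. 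It is also worth noting that on a line with $E<V_j$ the quantity $k_j$ and the associated fourth root become purely imaginary, the channel is closed, and~\eqref{SFTpot} still holds by analytic continuation in $E$, the closed channels simply not carrying propagating waves.
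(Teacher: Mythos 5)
Your proposal is correct and follows essentially the same route as the paper, whose proof is simply the remark that \eqref{SFTpot} is obtained by substituting the F\"ul\"op--Tsutsui matrices $A,B$ into the general formula \eqref{S} exactly as was done for \eqref{SFT}; your write-up just makes explicit the factorization $D=\sqrt{k}\,Q$, the Schur-complement inversion with Schur complement $\i k\left(Q_{(1)}^2+TQ_{(2)}^2T^*\right)$, and the final conjugation by $Q$, all of which check out (including the sign bookkeeping that turns $-Q\left(I-2X\right)Q^{-1}$ into $-I^{(n)}+2QXQ^{-1}$). The consistency checks (recovering \eqref{SFT} at $V_j=0$ via the push-through identity, unitarity, and the analytic continuation for closed channels $E<V_j$) are sensible additions but not needed beyond the computation itself.
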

\begin{proof}
The result can be obtained from equation~\eqref{S} in a similar manner as formula~\eqref{SFT}.
\end{proof}

\section{Potential-controlled spectral band-pass filter}\label{Section: n=3}

The simplest device we are going to construct is based on a star graph with just $3$ lines. The F\"ul\"op--Tsutsui coupling in its center is described by the boundary condition~\eqref{FT} with two additional assumptions: (i) $r=1$, (ii) $T$ is real. The reason for assuming $T$ real will become obvious later in Section~\ref{Section: Approximation}. Hence $T=(a \quad b)$, $a,b\in\R$, and the boundary conditions read
\begin{equation}
\label{bc}
  \begin{pmatrix}
  1 & a & b \\ 0 & 0 & 0 \\ 0 & 0 & 0
  \end{pmatrix}
  \begin{pmatrix}
  \psi'_1(0) \\ \psi'_2(0) \\ \psi'_3(0)
  \end{pmatrix}
=
  \begin{pmatrix}
  0 & 0 & 0 \\ -a & 1 & 0 \\ -b & 0 & 1
  \end{pmatrix}
  \begin{pmatrix}
  \psi_1(0) \\ \psi_2(0) \\ \psi_3(0)
  \end{pmatrix}\,.
\end{equation}
The graph is schematically illustrated in Fig.~\ref{fig:m1}.
The roles of individual lines are as stated in Section~\ref{Section: Device}: 1 = input, 2 = output, 3 = controlling line.
\begin{figure}[h]
  \centering
  \includegraphics[width=4.5cm]{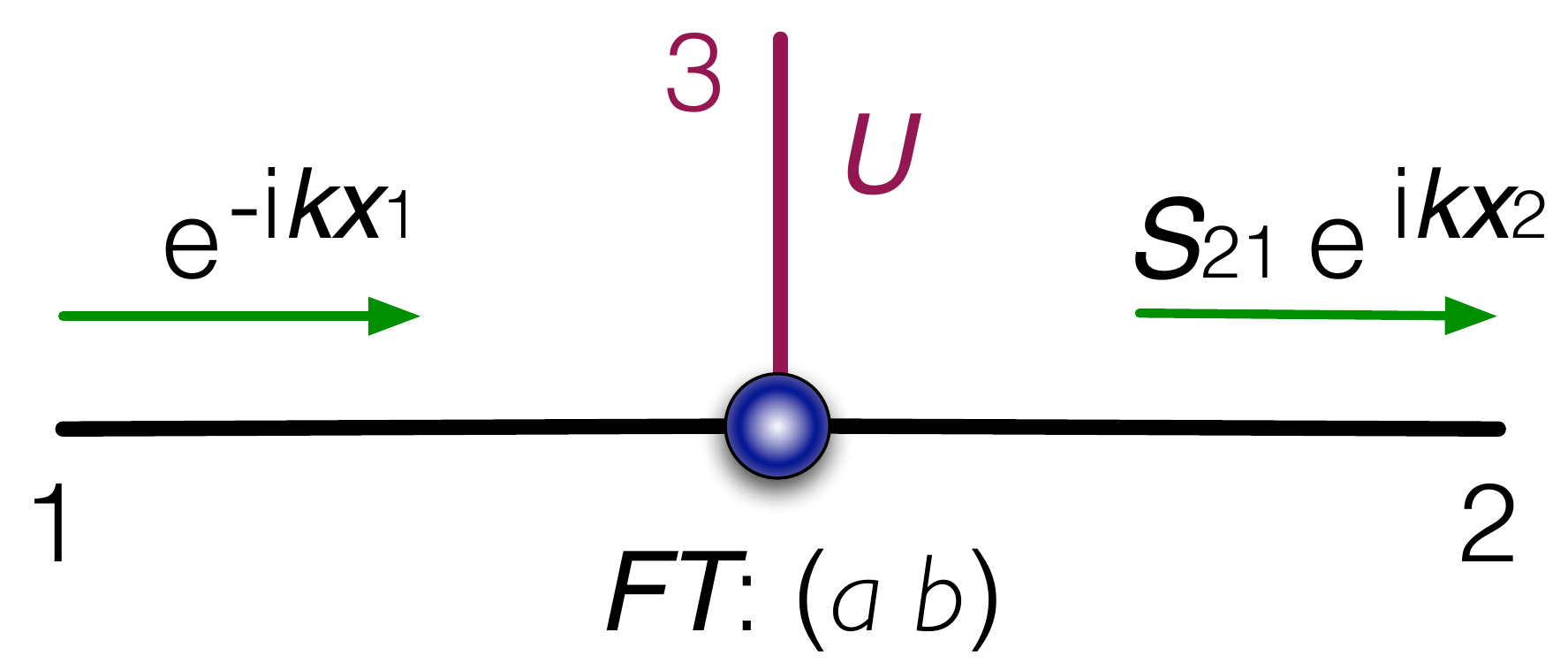}
  \caption{Scheme of a quantum spectral filter controllable by an external potential $U$ on the line 3.}
  \label{fig:m1}
\end{figure}
The scattering properties of the graph in question are determined by the scattering matrix $\S(E;0,0,U)$, which can be calculated using Proposition~\ref{S-matrix FT}. If we substitute $T=(a \quad b)$ and $V_1=V_2=0$, $V_3=U$ into equation~\eqref{SFTpot}, we easily obtain
\begin{equation}\label{S-matrix 3}
\S(E;0,0,U)=\frac{1}{1+a^2+b^2\xi}
\left(\begin{array}{ccc}
1-a^2-b^2\xi & 2a & 2b\sqrt{\xi} \\
2a & a^2-1-b^2\xi & 2ab\sqrt{\xi} \\
2b\sqrt{\xi} & 2ab\sqrt{\xi} & b^2\xi-1-a^2
\end{array}\right)\,,
\end{equation}
where $\xi=\sqrt{1-U/E}$ (thus $\sqrt{\xi}=\sqrt[4]{1-U/E}$).

A quantum particle with energy $E$ coming in the vertex from the input line~1 is scattered at the vertex into all the lines. The corresponding scattering amplitudes are given by the entries in the first column of~\eqref{S-matrix 3}. Let us denote for simplicity $\S_{j1}^{(U)}(E)\equiv[\S(E;0,0,U)]_{j1}$ for all $j=1,2,3$. The transmission amplitude input~$\to$~output equals
\begin{equation}
\label{S21}
\S_{21}^{(U)}(E) = \frac{2a}{ 1+a^2+ b^2 \sqrt{1-\frac{U}{E} } }\,,
\end{equation}
the reflection amplitude is
\begin{equation}
\S_{11}^{(U)}(E) = \frac{1-a^2- b^2 \sqrt{1-\frac{U}{E} } }{ 1+a^2+ b^2 \sqrt{1-\frac{U}{E} } }\,,
\end{equation}
and the transmission amplitude 1~$\to$~3 is given by
\begin{equation}\label{S31}
\S_{31}^{(U)}(E) = \frac{2b  \sqrt[4]{1-\frac{U}{E}} }
{ 1+a^2+ b^2 \sqrt{1-\frac{U}{E} } }\,\Theta(E-U)\,.
\end{equation}
The Heaviside step function $\Theta(E-U)$ is added in equation~\eqref{S31} to make the expression valid for all energies $E$, including $E<U$. It represents asymptotically no transmission to the line 3 below the threshold energy $E_\mathrm{th}=U$.

We are interested above all in the transmission probability from the input line~1 into the output line~2, which we denote by $\P^{(U)}(E)$.
Since $\P^{(U)}(E)=|\S_{21}^{(U)}(E)|^2$,
we have from~\eqref{S21}
\begin{equation}\label{P:3}
\P^{(U)}(E)=\left\{\begin{array}{cl}
\frac{4a^2}{\left(1+a^2+b^2\sqrt{1-U/E}\right)^2} & \quad\text{for } E>U, \\ [1em]
\frac{4a^2}{(1+a^2)^2+b^4(U/E-1)} & \quad\text{for } E<U.
\end{array}\right.
\end{equation}
We observe that for a given constant potential on the line~3, $\P^{(U)}(E)$ as a function of $E$ grows in the interval $(0,U)$,
decreases in the interval $(U,\infty)$,
and satisfies
\begin{align}
&
\lim_{E\to0}\P^{(U)}(E) = 0\,,
 \\
&
\lim_{E\to U}\P^{(U)}(E) = \left(\frac{2a}{1 + a^2}\right)^2 \label{P(sqrt(U),U)}\,,
\\
&
\lim_{E\to\infty}\P^{(U)}(E) = \left(\frac{2a}{1 + a^2 + b^2}\right)^2\,.
\end{align}
Therefore, if the parameters $a,b$ are chosen so that $b^4 \gg 4a^2$, the function $\P^{(U)}(E)$ has a sharp peak at $E=U$. Furthermore, equation~\eqref{P(sqrt(U),U)} implies that the peak is highest possible (attaining $1$) for $a=1$.
To sum up, the device from Figure~\eqref{fig:m1} built for $a,b$ satisfying
\begin{equation}\label{a=1,b>>1}
a=1 \quad \text{and} \quad b^4/4\gg1
\end{equation}
has the following property: The transmission probability input~$\to$~output is high for particles having energies $E\approx U$ and perfect for $E=U$, while it is very small for particles with other energies. It means that such a device works as an adjustable band-pass spectral filter, allowing to control the passband position by the potential put on the controlling line 3. The situation is numerically illustrated in Figure~\ref{fig:m2}.

Let~\eqref{a=1,b>>1} be satisfied. Filtering properties of the device, in particular the sharpness of the peak, are determined by the quantity $\beta:=b^2/2$.
With regard to the above results, it holds
$\P^{(U)}(E)\approx 0$ for $E\to0$ and $\P^{(U)}(E)\approx 1/(1+\beta)^2$ for $E\to\infty$.
The bandwidth $W$ of the filter, i.e., the width of the interval of energies $E$ for which $\P^{(U)}(E)>1/2$, can be calculated from equation~\eqref{P:3}. The bandwidth depends on $U$ and $\beta^2$, and equals
\begin{equation}
W=\frac{2\left(2-\sqrt{2}\right)\beta^2}{\left(\beta^2-3+2\sqrt{2}\right)(\beta^2+1)}U\,.
\end{equation}
Since $\beta^2\gg1$, it holds $W\approx 2(2-\sqrt{2})U/\beta^2 \approx 1.17 U/\beta^2$.

\begin{figure}[h]
  \centering
  \includegraphics[width=6.5cm]{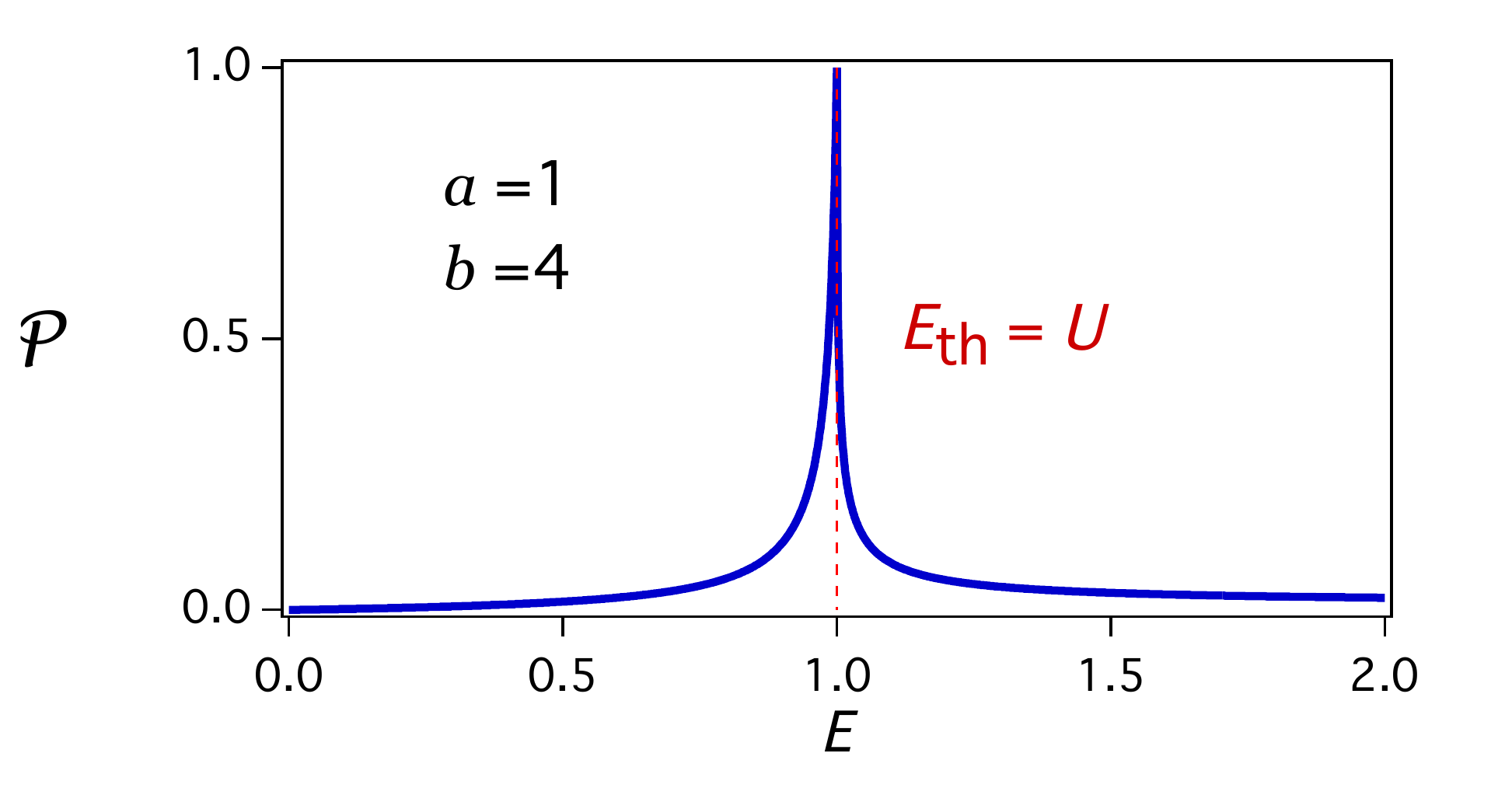} \quad \includegraphics[width=6.5cm]{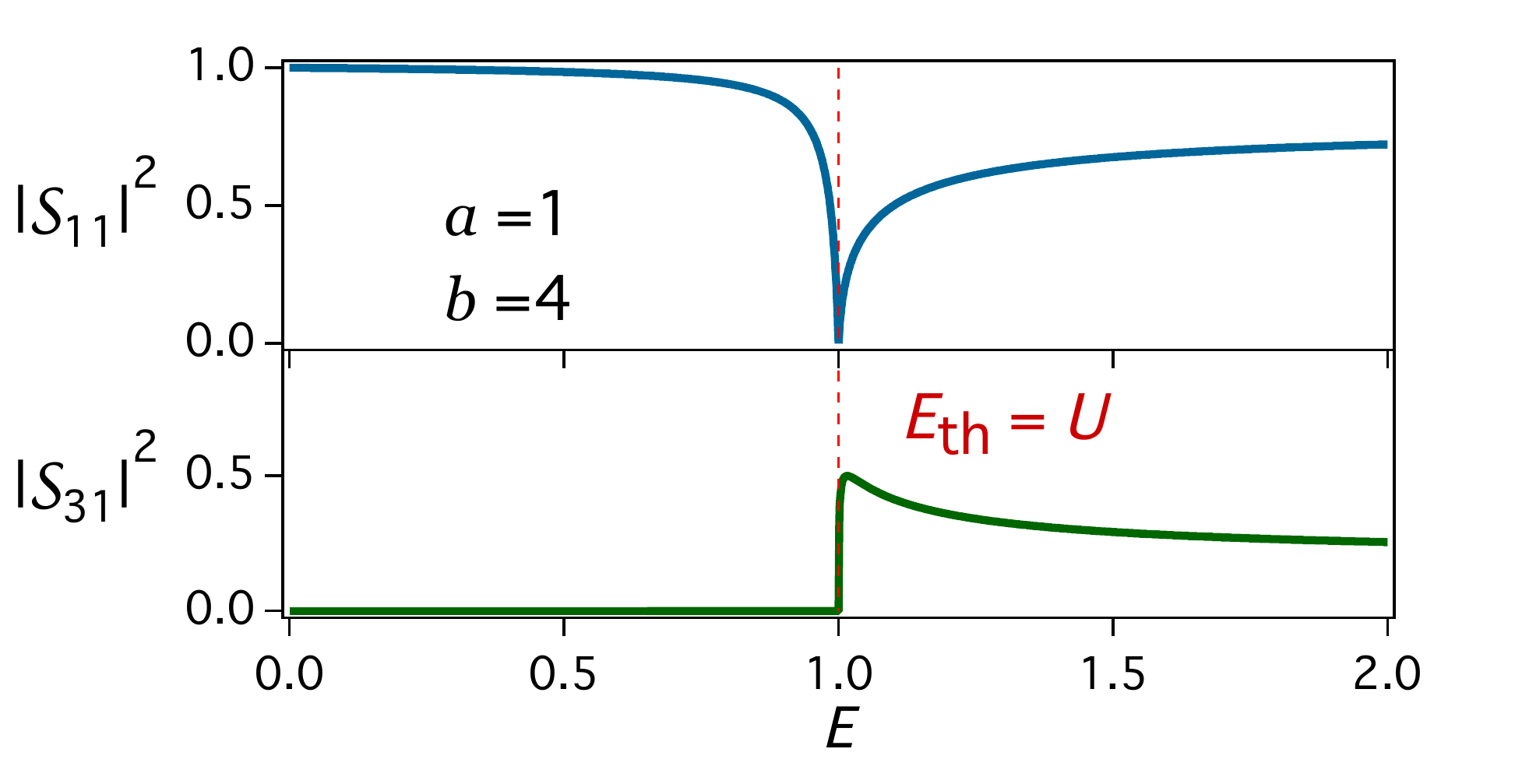}
  \caption{Scattering characteristics of the graph from Fig.~\ref{fig:m1} with parameters $a=1$, $b=4$ (i.e., $\beta=8$), plotted for the control potential set to $U=1$. The transmission probability $\P^{(U)}(E)$ as a function of $E$ is plotted in the left figure. The right figure shows the reflection probability $|\S_{11}^{(U)}(E)|^2$ and the probability of transmission to the controlling line $|\S_{31}^{(U)}(E)|^2$.}
  \label{fig:m2}
\end{figure}

\begin{remark}
The resonance at the threshold energy $E_\mathrm{th}=U$
is related to the pole of the scattering matrix which is located on the positive real axis at
\begin{equation}
E_{\mathrm{pol}}=\frac{b^4}{b^4-(1+a^2)^2}U
\end{equation}
on the unphysical Riemann surface which is connected to the physical Riemann surface at 
$E=U$.
\end{remark}

\begin{remark}
A \emph{band-stop} spectral filter can be constructed in a similar way. If we begin with $r=2$ and $T=\begin{pmatrix} c \\ d \end{pmatrix}$, and choose $c,d\in\R$ such that $c=d\gg1$, we obtain
\begin{equation}
\P^{(U)}(E)=\frac{4c^4}{\left|2c^2+\sqrt{\frac{E}{E-U}}\right|^2}\,,
\end{equation}
hence
$\P^{(U)}(E)\approx1$ everywhere except a certain narrow interval around $E=U$,
and $\lim_{E\to U}\P^{(U)}(E)\approx0$.
\end{remark}

\section{The role of the control potential}\label{Section: Potential}

The filtering property of the device proposed in Section~\ref{Section: n=3} has been derived on the assumption that the control potential $U$ is constant on the whole line 3. A natural question arises: What happens when the control potential is not constant precisely up to the junction and/or $U(x)$ is supported only by a certain segment of the line 3?
In this section we show that under certain conditions, the filtering property is not distorted. To obtain the result, we prove, for a general potential $U(x)$, that the transmission probability $\P^{(U(x))}(E)$ can be expressed in terms of the reflection amplitude calculated for the potential $U(x)$ on the real line. The formula we obtain will allow not only to answer the above question, but also to better understand the function of the filter.

In order to find the transmission amplitude $\S^{(U(x))}_{21}(E)$ for a given potential $U(x)$ on the controlling line, we need to apply boundary conditions~\eqref{bc} on the wave function components $\psi_1(x)$, $\psi_2(x)$, $\psi_3(x)$. The particle motion on the lines 1 and 2 is free, thus the boundary values $\psi_1(0),\psi_1'(0)$, $\psi_2(0),\psi_2'(0)$ at the vertex can be expressed in terms of the scattering amplitudes, cf. equation~\eqref{psi_ji},
\begin{equation}\label{psi12}
\begin{array}{ccc}
\psi_1(0)=\frac{1}{\sqrt{k}}(-1+\S_{11})\,, & \qquad & \psi_2(0)=\frac{1}{\sqrt{k}}\S_{21}\,, \\
\psi_1'(0)=\i\sqrt{k}(-1+\S_{11})\,, & \qquad & \psi_2'(0)=\i\sqrt{k}\S_{21}\,,
\end{array}
\end{equation}
where $\S_{11}$ is the reflection amplitude to the input line 1, $\S_{21}$ is the transmission amplitude to the output line 2, and
\begin{equation}
k=\frac{\sqrt{2mE}}{\hbar}\,.
\end{equation}
Our next aim is to express the boundary values $\psi_3(0)$ and $\psi_3'(0)$ in a convenient way. For that purpose, 
let us consider a line with the potential
\begin{equation}
\tilde{U}(x)=\left\{\begin{array}{cl}
U(x) & \text{for } x>0\,, \\
0 & \text{for } x\leq0\,,
\end{array}\right.
\end{equation}
see Figure~\ref{Fig.potential}. For $\psi_3(x)$ being the given wave function component on the controlling line, we define the function \begin{equation}\label{tilde psi}
\tilde{\psi}(x)=\left\{\begin{array}{cl}
\psi_3(x) & \text{for } x\geq0\,, \\
C\left(\frac{1}{\sqrt{k}}\e^{\i kx}+\mathcal{R}\frac{1}{\sqrt{k}}\e^{-\i kx}\right) & \text{for } x\leq0\,,
\end{array}\right.
\end{equation}
where $C$ and $\mathcal{R}$ are determined by the continuity of $\tilde{\psi}(x)$ and $\tilde{\psi}'(x)$ at $x=0$:
\begin{equation}\label{psi3}
\frac{1}{\sqrt{k}}C\left(1+\mathcal{R}\right)=\psi_3(0)\,, \qquad \i \sqrt{k}C\left(1-\mathcal{R}\right)=\psi_3'(0)\,.
\end{equation}
\begin{figure}[h]
  \centering
  \includegraphics[width=5.0cm]{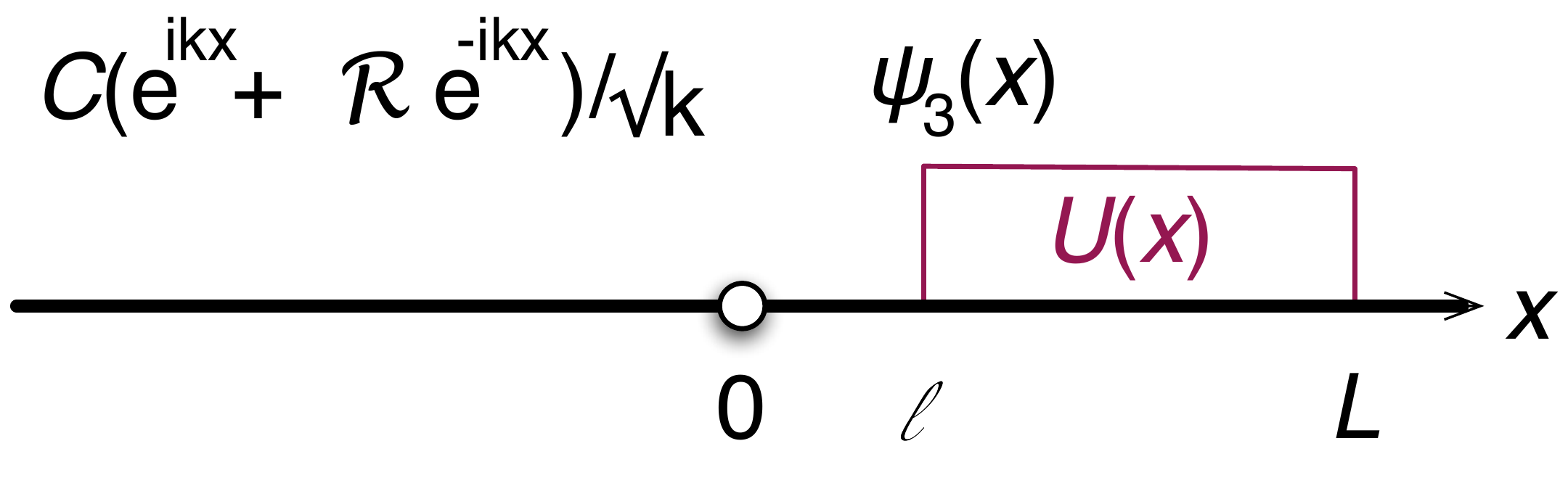}
  \caption{A line with the potential $U(x)$ in the part $x>0$.}
  \label{Fig.potential}
\end{figure}
The function $\tilde{\psi}(x)$ is obviously a solution of the Schr\"odinger equation for a particle on the line with the potential $\tilde{U}(x)$.
The term $C(\e^{\i kx}/\sqrt{k}+\mathcal{R}\e^{-\i kx}/\sqrt{k})$ corresponding to $x\leq0$ in definition~\eqref{tilde psi} implies that $\tilde{\psi}(x)$ is the wave function of a particle of energy $E$ moving along the line from the part $x<0$ to the part $x>0$, and $\mathcal{R}$ represents the reflection amplitude at the point $x=0$.

\begin{remark}\label{psi'/psi}
From equations~\eqref{psi3} it follows
$\mathcal{R}=\left(\i k\psi_3(0)-\psi_3'(0)\right)/\left(\i k\psi_3(0)+\psi_3'(0)\right)$. This formula can be generalized. It is easy to show that for any $x_0\geq0$, it holds
\begin{equation}
\mathcal{R}(x_0)=\frac{\i k\psi_3(0)-\psi_3'(0)}{\i k\psi_3(0)+\psi_3'(0)}\,,
\end{equation}
where $\mathcal{R}(x_0)$ represents the reflection amplitude corresponding to the shifted and truncated potential $U_{x_0}(x)=U(x-x_0)\cdot\Theta(x-x_0)$, measured at the point $x=0$.
\end{remark}

Now we use equations~\eqref{psi12} and~\eqref{psi3} to subsitute for $\psi_j(0)$ and $\psi_j'(0)$, $j=1,2,3$, in the boundary conditions~\eqref{bc}. After an easy manipulation we obtain the system of equations
\begin{equation}
\begin{pmatrix}
1 & a & b(1-\mathcal{R}) \\
a & -1 & 0 \\
b & 0 & -(1+\mathcal{R})
\end{pmatrix}
\begin{pmatrix}
\S_{11} \\
\S_{21} \\
C
\end{pmatrix}
=
\begin{pmatrix}
1 \\
-a \\
-b
\end{pmatrix}\,,
\end{equation}
therefore, the transmission amplitude input~$\to$~output equals
\begin{equation}
\S^{(U(x))}_{21}(E)=\frac{2a(1+\mathcal{R})}{(1+a^2)(1+\mathcal{R})+b^2(1-\mathcal{R})}\,.
\end{equation}
If $a=1$ and $\beta:=b^2/2$ as introduced in Section~\ref{Section: n=3}, then
\begin{equation}\label{P U(x)}
\P^{(U(x))}(E)=\left|\frac{1+\mathcal{R}}{1+\mathcal{R}+\beta(1-\mathcal{R})}\right|^2\,.
\end{equation}
Equation~\eqref{P U(x)} relates the effect of the control potential, encapsulated in the reflection amplitude $\mathcal{R}$,
and the actual transmission probability $\P^{(U(x))}(E)$ of the filter.

Since $\beta\gg1$ by assumption, we observe from equation~\eqref{P U(x)} that the filter is open for the particles that would be reflected by the potential $U(x)$ with the amplitude $\mathcal{R}\approx+1$, and closed for particles with $|\beta(1-\mathcal{R})|\gg1$. We can say that the control potential opens the filter by reflecting the particles entering the line 3 back to the vertex with the amplitude $+1$. In particular, if the particle has high enough energy to pass through the potential barrier $U(x)$, we have $\mathcal{R}\approx0$, hence $\P^{(U(x))}(E)\approx1/(1+\beta)^2\approx0$.

In case that the potential is finitely supported and/or separated from the vertex by a gap, the reflection amplitude $\mathcal{R}$ is usually oscillating, thus $\P^{(U(x))}$ is oscillating as well. Furthermore, if the constant part of the potential is long enough, the oscillations are very rapid and, therefore, imperceptible (the ``semiclassical'' short wavelength limit). In such cases we mollify the quickly oscillating function $\P^{(U(x))}(E)$ by a convolution with the Friedrichs' mollifier $\omega_\epsilon$ for a certain small $\epsilon>0$, where
\begin{equation}
\omega_\epsilon(z)=\left\{\begin{array}{cl}
\frac{1}{\epsilon}C_1\e^{-\frac{\epsilon^2}{\epsilon^2-z^2}} & \text{for } |z|<\epsilon\,, \\
0 & \text{otherwise}
\end{array}\right.
\end{equation}
and $C_1$ is chosen such that $\int_{-1}^1 C_1\e^{-\frac{1}{1-z^2}}\d z=1$. As we will see in examples below, the obtained mollified function
\begin{equation}
\P^{(U(x))}_\mathrm{moll}(E)=\left(\omega_\epsilon\ast\P^{(U(x))}\right)(E)
\end{equation}
is essentially identical with the transmission probability input~$\to$~output in the ideal case when the control potential is constant on the whole line 3.

\subsubsection*{Example 1: Control potential supported by a finite segment $[0,L]$}

If $U_{[0,L]}=U\cdot\chi_{[0,L]}$ for $\chi_{[0,L]}$ being the characteristic function of the interval $[0,L]$, we have
\begin{equation}\label{R[0,L]}
\mathcal{R}_{[0,L]}=\frac{(1-\xi^2)\sin\xi kL}{(1+\xi^2)\sin\xi kL+2\i\xi\cos\xi kL}\,,
\end{equation}
where $\xi=\sqrt{1-U/E}$.
It can be shown that the transmission probability, given by equation~\eqref{P U(x)} with $\mathcal{R}$ obeying~\eqref{R[0,L]}, grows in the interval of energies $[0,U]$. Furthermore, if the support of $U(x)$ is long enough in the sense $L\gg\hbar/\sqrt{2mU}$, then $kL\gg1$ for all particles with energies $E>U$, and one can prove that
\begin{itemize}
\item $\lim_{E\to U}\P^{(U_{[0,L]})}(E)\approx1$,
\item in the region of energies $E\in[U,\infty)$, the transmission probability $\P^{(U_{[0,L]})}(E)$ quickly oscillates between $\P_{\min}\approx1/(1+\beta)^2\approx0$ and $\P_{\max}\approx1/(1+(1-U/E)\beta)^2$.
\end{itemize}
The situation is illustrated in Figure~\ref{Fig. Finite supp} (left). The figure shows also the mollified transmission probability $\P^{(U_{[0,L]})}_\mathrm{moll}(E)=(\omega_\epsilon\ast\P^{(U_{[0,L]})})(E)$. The function $\P^{(U_{[0,L]})}_\mathrm{moll}(E)$
is very close to the transmission characteristics obtained in Section~\ref{Section: n=3} for the control potential constant on the whole line~3.

The condition $L\gg\hbar/\sqrt{2mU}\equiv\lambda_U/(2\pi)$ means that the support of the control potential is substantially longer than the de Broglie wavelength of the particle with energy $U$. Such a condition is usually easily satisfied by any macroscopical length $L$. Indeed, if we consider an electron and, for instance, $U=0.1$~eV, then $\hbar/\sqrt{2mU}\approx 2\cdot10^{-10}\,\text{m}=0.6\,\text{nm}$.

\begin{figure}[h]
  \centering
  \includegraphics[width=6.5cm]{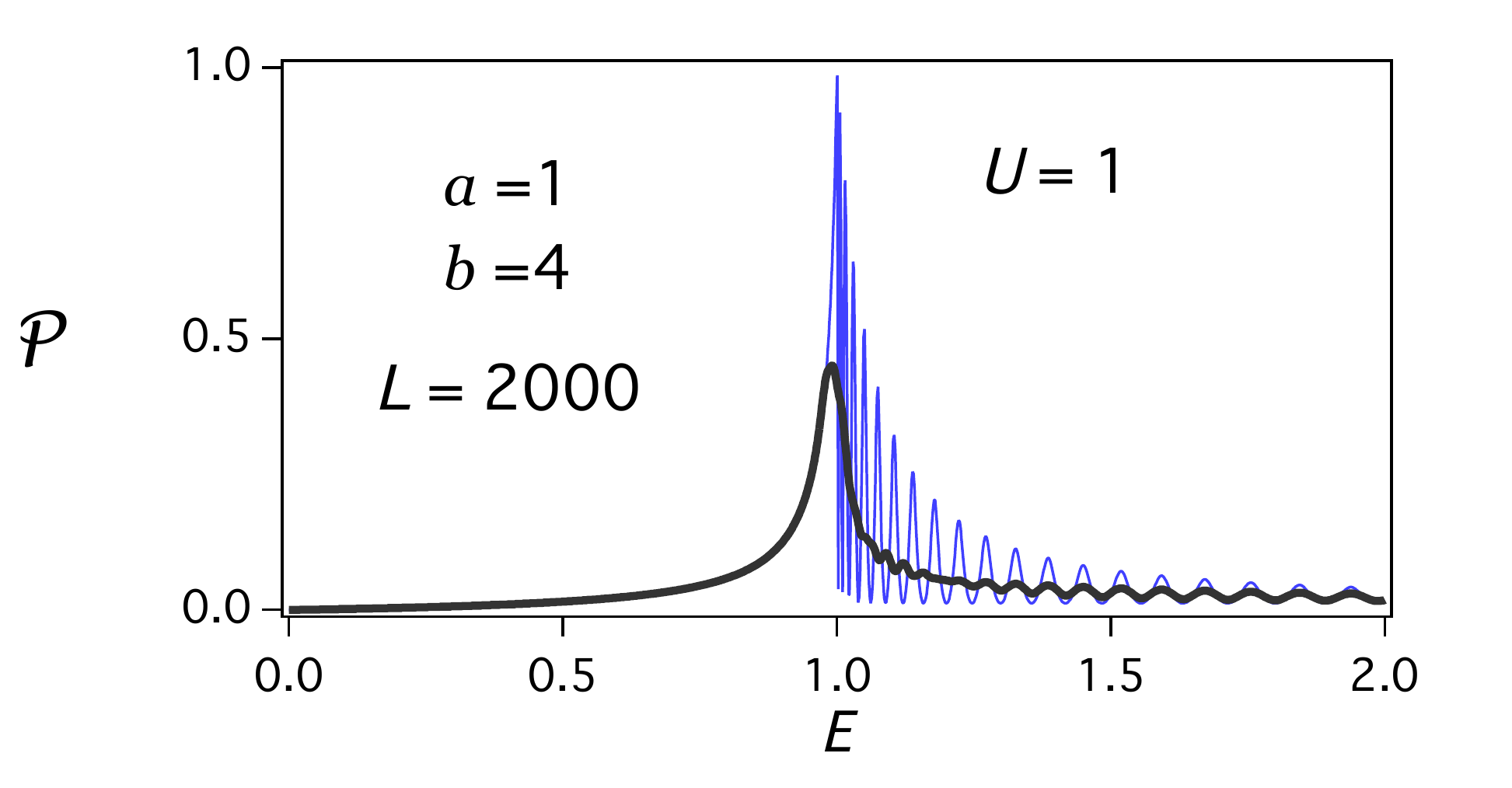} \quad \includegraphics[width=6.5cm]{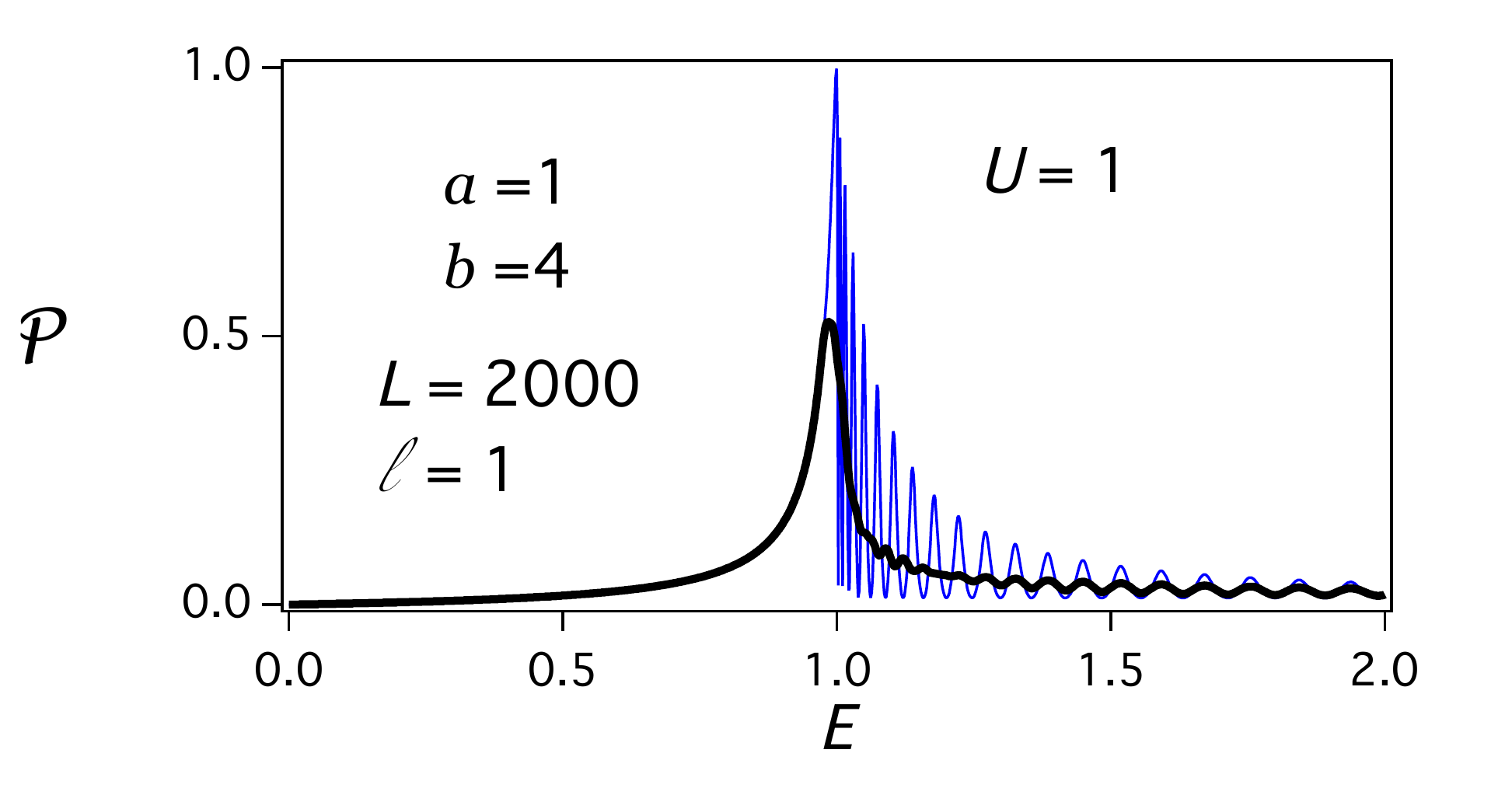}
  \caption{The transmission characteristics of the quantum filter designed in Sect.~\ref{Section: n=3}, controlled by a potential supported by the finite segment $[0,L]$ (left) and $[\ell,L]$ (right). 
  The mollified transmission probabilities are plotted by thick black lines.
  The parameters are chosen as $U=1$, $L=2000$, $\ell=1$.  If we assume the particle to be an electron, the energy scale to be given by {meV}, milielectronvolt, and the length scale to be specified by {nm}, nanometer, then $U=1 \, {\rm meV}$ corresponds to the wavelength $\lambda = h/\sqrt{2m_e U} = 38.8 \, {\rm nm}$. For $L\gg\lambda/(2\pi)\gg d$, the mollified transmission probability is essentially identical to the transmission characteristics obtained in the ideal case $L=\infty$, cf. Fig.~\ref{fig:m2}.
  }
  \label{Fig. Finite supp}
\end{figure}

\subsubsection*{Example 2: Control potential supported by a finite segment $[\ell,L]$, $\ell>0$}

In the second example we focus on a control potential with the support separated from the vertex by a gap of the length $d$, i.e., $U_{[\ell,L]}=U\cdot\chi_{[\ell,L]}$; the corresponding coefficient $\mathcal{R}_{[\ell,L]}$ equals
\begin{equation}
\mathcal{R}_{[\ell,L]}=\frac{(1-\xi^2)\sin\xi k(L-\ell)}{(1+\xi^2)\sin\xi k(L-\ell)+2\i\xi\cos\xi k(L-\ell)}\cdot\e^{2\i k\ell}\,.
\end{equation}
If the lengths $L,\ell$ satisfy
\begin{equation}\label{dL}
L\gg\frac{\hbar}{\sqrt{2mU}} \qquad\text{and}\qquad \ell\ll\frac{\hbar}{\sqrt{2mU}}\,,
\end{equation}
then the values of $\mathcal{R}_{[\ell,L]}$ are close to the values of $\mathcal{R}_{[0,L]}$ obtained in Example 1 (cf. eq.~\eqref{R[0,L]}). Indeed,
\begin{itemize}
\item for $E<U$ or $E\approx U$, we have $k=\sqrt{2mE}/\hbar\leq\sqrt{2mU}/\hbar$, hence $2k\ell\ll1$ due to the second equation~\eqref{dL}, hence $\e^{2\i k\ell}\approx1$;
\item for $E\gg U$, it holds $\xi\approx1$, hence $\mathcal{R}_{[\ell,L]}\approx0\approx\mathcal{R}_{[0,L]}$.
\end{itemize}
Therefore, the transmission characteristics of the filter with $U_{[\ell,L]}=U\cdot\chi_{[\ell,L]}$ are similar to those described in Example 1, see Figure~\ref{Fig. Finite supp} (right). Namely, the transmission probability $\P^{(U_{[\ell,L]})}(E)$ is quickly oscillating, but the mollified transmission probability $\P^{(U_{[\ell,L]})}_\mathrm{moll}(E)=(\omega_\epsilon\ast\P^{(U_{[\ell,L]})})(E)$ well approximates the ideal characteristics obtained in Section~\ref{Section: n=3}.
The conditions~\eqref{dL} mean that $\ell\ll\lambda_U/(2\pi)\ll L$, where $\lambda_U=h/\sqrt{2mU}$ is the de Broglie wavelength of the particle with energy $U$.

\medskip

In practical implementations, the profile of the control potential $U(x)$ might differ from those studied in Examples 1 and 2 above. In particular, the potential may be constant in the interval $[\ell,L]$ and taper off to zero in the segment $[0,\ell]$. 
Generally, if the conditions~\eqref{dL} are satisfied, the result is essentially independent of the shape of the fall-off of $U(x)$. To demonstrate 
this, let us distinguish two situations.
\begin{itemize}
\item If $E<U$ or $E\approx U$, the local wavelength $\lambda(x)=h/\sqrt{2m(E-U(x))}$ of the particle in every point $x\in[0,\ell]$ satisfies $\lambda(x)/(2\pi)\gg\ell$ (the ``long wavelength limit''), hence $\psi_3(0)\approx\psi_3(\ell)$, $\psi_3'(0)\approx\psi_3'(\ell)$. 
Therefore, with regard to Remark~\ref{psi'/psi},
\begin{equation}
\mathcal{R}=\frac{\i k\psi_3(0)-\psi_3'(0)}{\i k\psi_3(0)+\psi_3'(0)}\approx\frac{\i k\psi_3(\ell)-\psi_3'(\ell)}{\i k\psi_3(\ell)+\psi_3'(\ell)}=\mathcal{R}_{[0,L-\ell]}\,,
\end{equation}
where $\mathcal{R}_{[0,L-\ell]}$ is the reflection amplitude corresponding to the potential $U_{[0,L-\ell]}=U\cdot\chi_{[0,L-\ell]}$.
Since moreover $L-\ell\gg\lambda_U/(2\pi)$, it follows from Example 1 that the corresponding mollified transmission probability is similar to the transmission characteristics found in Section~\ref{Section: n=3}.
\item If $E\gg U$, the particle is almost fully transmitted through the potential barrier, hence $\mathcal{R}\approx0$. Therefore, $\P^{(U(x))}(E)\approx1/(1+\beta)^2$ due to equation~\eqref{P U(x)}, which coincides with the result obtained in Section~\ref{Section: n=3}.
\end{itemize}

To sum up, the filter works well also for a finitely supported potential which is not constant precisely up to the junction, on condition that the support of the potential is long and the segment in which the potential tapers off is short, both with respect to the de Broglie wavelength of the particle with energy $U$.
A similar result holds true also for the other filtering devices 
described in subsequent sections.


\section{Spectral filter with two passbands}\label{Section: n=4}

In this section we design a band-pass quantum filter with two passbands such that their positions are independently controllable by two external potentials.

The device will be naturally based on a star graph with $4$ lines that have the following meaning: 1 = input, 2 = output, 3 and 4 = controlling lines subjected to constant external potentials $U,V$; see Figure~\ref{Fig.4}.

The vertex coupling in the graph center is of the F\"ul\"op--Tsutsui type in accordance with the concept introduced in Section~\ref{Section: Device}. This time we start with an ansatz $r=2$, i.e., $T=\begin{pmatrix}a & b \\ c & d\end{pmatrix}$, and, as in the previous section, assume that $T$ is real, thus $a,b,c,d\in\R$. Therefore, the vertex is described by the boundary conditions
\begin{equation}\label{b.c. 4}
  \begin{pmatrix}
  1 &  0 & a & b \\ 0 & 1 & c & d \\ 0 & 0 & 0 & 0 \\ 0 & 0 & 0 & 0
  \end{pmatrix}
  \begin{pmatrix}
  \psi'_1(0) \\ \psi'_2(0) \\ \psi'_3(0) \\ \psi'_4(0)
  \end{pmatrix}
=
  \begin{pmatrix}
  0 & 0 & 0 & 0 \\ 0 & 0 & 0 & 0 \\ -a & -c & 1 & 0 \\ -b & -d & 0 & 1
  \end{pmatrix}
  \begin{pmatrix}
  \psi_1(0) \\ \psi_2(0) \\ \psi_3(0) \\ \psi_4(0)
  \end{pmatrix}\,.
\end{equation}
\begin{figure}[h]
  \centering
  \includegraphics[width=4.5cm]{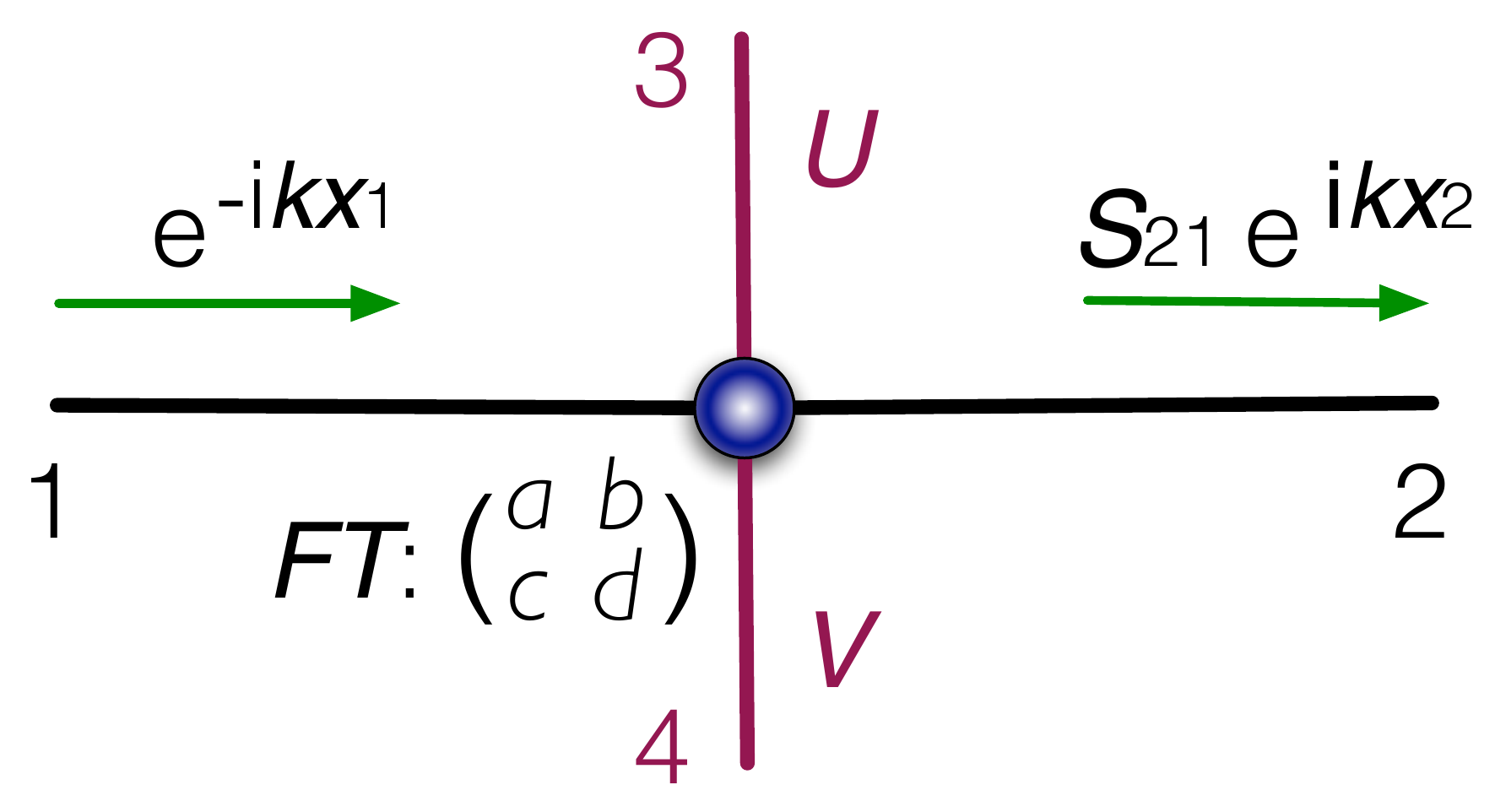}
  \caption{Scheme of a quantum spectral filter controllable by two external potentials $U$ and $V$ on the controlling lines 3 and 4, respectively.}
  \label{Fig.4}
\end{figure}

The calculation of the transmission amplitude $\S_{21}^{(U,V)}(E)\equiv[\S(E;0,0,U,V)]_{21}$ can be performed in the same way as in Section~\ref{Section: n=3} using formula~\eqref{SFTpot}; we find
\begin{equation}\label{S_21 4}
\S_{21}^{(U,V)}(E)=\frac{-2\left(ac\sqrt{1-\frac{U}{E}}+bd\sqrt{1-\frac{V}{E}}\right)}{1+(a^2+c^2)\sqrt{1-\frac{U}{E}}+(b^2+d^2)\sqrt{1-\frac{V}{E}}+(ad-bc)^2\sqrt{1-\frac{U}{E}}\sqrt{1-\frac{V}{E}}}\,.
\end{equation}
The transmission probability input~$\to$~output is given as $\P^{(U,V)}(E)=|\S_{21}^{(U,V)}(E)|^2$. Since 
the filter we seek
shall have two passbands at $E=U$ and $E=V$, we require zero transmission probability for $E\to\infty$ and high transmission probabilities (preferably $\approx1$) at $E=U$ and $E=V$. The first requirement (i.e., $\lim_{E\to\infty}\S_{21}^{(U,V)}(E)=0$) leads to the condition
\begin{equation}\label{cond1}
ac+bd=0\,.
\end{equation}
Then we proceed to the requirement $\P^{(U,V)}(U)\approx1$, $\P^{(U,V)}(V)\approx1$. Let us assume without loss of generality $V<U$. Since it holds
\begin{align}
\P^{(U,V)}(U)&=\left(\frac{2bd\sqrt{1-\frac{V}{U}}}{1+(b^2+d^2)\sqrt{1-\frac{V}{U}}}\right)^2\,, \\
\P^{(U,V)}(V)&=\frac{4a^2c^2\left(\frac{U}{V}-1\right)}{1+(a^2+c^2)^2\left(\frac{U}{V}-1\right)}\,,
\end{align}
we arrive at the condition
\begin{equation}\label{cond1.5}
b^2=d^2\gg1\,,\qquad a^2=c^2\gg1\,.
\end{equation}
Conditions~\eqref{cond1} and~\eqref{cond1.5} together lead to eight possible expressions for $T$, namely
\begin{equation}\label{sol T a}
\pm\begin{pmatrix}a & a \\ a & -a\end{pmatrix},\;\pm\begin{pmatrix}a & a \\ -a & a\end{pmatrix},\;\pm\begin{pmatrix}a & -a \\ a & a\end{pmatrix},\;\pm\begin{pmatrix}-a & a \\ a & a\end{pmatrix}
\end{equation}
for $a>0$ such that $a^2\gg1$. We choose the first of them, i.e.,
\begin{equation}\label{T 4 a}
T=\begin{pmatrix}a & a \\ a & -a\end{pmatrix}\,,
\end{equation}
but it is easy to show that all the solutions~\eqref{sol T a} result in the same transmission probabilities, thus a different choice would not make any difference.
Now we use the formula~\eqref{SFTpot} to calculate the corresponding input~$\to$~output transmission amplitude $\S_{21}^{(U,V)}(E)$, the reflection amplitude $\S_{11}^{(U,V)}(E)$, and the remaining transmission amplitudes $\S_{31}^{(U,V)}(E)$, $\S_{41}^{(U,V)}(E)$:
\begin{align}
S_{21}^{(U,V)}(E)&=\frac{-2a^2\left(\sqrt{1-\frac{U}{E}}-\sqrt{1-\frac{V}{E}}\right)}{\left(1+2a^2\sqrt{1-\frac{U}{E}}\right)\left(1+2a^2\sqrt{1-\frac{V}{E}}\right)}\,, \label{S_ 4,2} \\
S_{11}^{(U,V)}(E)&=\frac{1-4a^4\sqrt{1-\frac{U}{E}}\sqrt{1-\frac{V}{E}}}{\left(1+2a^2\sqrt{1-\frac{U}{E}}\right)\left(1+2a^2\sqrt{1-\frac{V}{E}}\right)}\,,\\
S_{31}^{(U,V)}(E)&=\frac{2a\sqrt[4]{1-\frac{U}{E}}}{1+2a^2\sqrt{1-\frac{U}{E}}}\,\Theta(E-U)\,,\\
S_{41}^{(U,V)}(E)&=\frac{2a\sqrt[4]{1-\frac{V}{E}}}{1+2a^2\sqrt{1-\frac{V}{E}}}\,\Theta(E-V)\,. \label{S_ 4,4}
\end{align}
The input~$\to$~output transmission probability is equal to $\P^{(U,V)}(E)=|\S_{21}^{(U,V)}(E)|^2$, thus
\begin{equation}\label{P(U,V) 4}
\P^{(U,V)}(E)=\left\{\begin{array}{ll}
\frac{4a^4\left(\sqrt{\frac{U}{E}-1}-\sqrt{\frac{V}{E}-1}\right)^2}{\left(1+4a^4\left(\frac{U}{E}-1\right)\right)\left(1+4a^4\left(\frac{V}{E}-1\right)\right)} & \text{ for } E<V, \\ [1em]
\frac{4a^4}{\left(1+4a^4\left(\frac{U}{E}-1\right)\right)\left(1+2a^2\sqrt{1-\frac{V}{E}}\right)^2}\,\frac{U-V}{E} & \text{ for } V<E<U, \\ [1em]
\left(\frac{2a^2\left(\sqrt{1-\frac{U}{E}}-\sqrt{1-\frac{V}{E}}\right)}{\left(1+2a^2\sqrt{1-\frac{U}{E}}\right)\left(1+2a^2\sqrt{1-\frac{V}{E}}\right)}\right)^2 & \text{ for } E>U.
\end{array}\right.
\end{equation}
If $U,V$ satisfy $2a^2\sqrt{1-V/U}\gg1$ and $4a^4(U/V-1)\gg1$, then
\begin{align}
&\lim_{E\to V}\P^{(U,V)}(E)=\frac{4a^4\left(\frac{U}{V}-1\right)}{1+4a^4\left(\frac{U}{V}-1\right)}\approx1\,, \\
&\lim_{E\to U}\P^{(U,V)}(E)=\frac{4a^4\left(1-\frac{V}{U}\right)}{\left(1+2a^2\sqrt{1-\frac{V}{U}}\right)^2}\approx1\,, \\
&\lim_{E\to0}\P^{(U,V)}(E)=0\,, \qquad \lim_{E\to\infty}\P^{(U,V)}(E)=0\,, \\
&\P^{(U,V)}(E)\approx0 \text{ for all $E$ except for certain small neighborhoods of $U$ and $V$}.
\end{align}
Therefore, the function $\P^{(U,V)}(E)$ has two sharp peaks, located at the energies $E=U$ and $E=V$,
see Figure~\ref{Fig. P a} (left).

The situation is different when one of the potentials vanishes. If $V=0$ and $U>0$, then
\begin{align}
&\lim_{E\to 0}\P^{(U,V)}(E)=\frac{1}{(1+2a^2)^2}\approx0\,, \\
&\lim_{E\to\infty}\P^{(U,V)}(E)=0\,, \\
&\lim_{E\to U}\P^{(U,V)}(E)=\frac{4a^4}{\left(1+2a^2\right)^2}\approx1\,, \\
&\P^{(U,V)}(E)\approx0 \text{ for all $E$ except for a certain small neighborhood of $U$}.
\end{align}
Therefore, turning one control potential down to zero (e.g., $V=0$) \emph{does not} result in a peak at $E=0$, but causes the peak corresponding to $V$ to vanish, whereas the peak located at $E=U$ remains essentially unaffected; see Figure~\ref{Fig. P a} for an illustration.

To sum up, the device constructed according to the scheme in Figure~\ref{Fig.4} for $T$ given by~\eqref{T 4 a} with $2a^2\gg1$ works as a band-pass spectral filter with $2$ passbands. The positions of the passbands are directly controlled by the external potentials on the controlling lines, and moreover, one band can be suppressed by setting one of the potentials to zero.

\begin{figure}[h]
  \begin{tabular}{cc}
  \includegraphics[width=6.5cm]{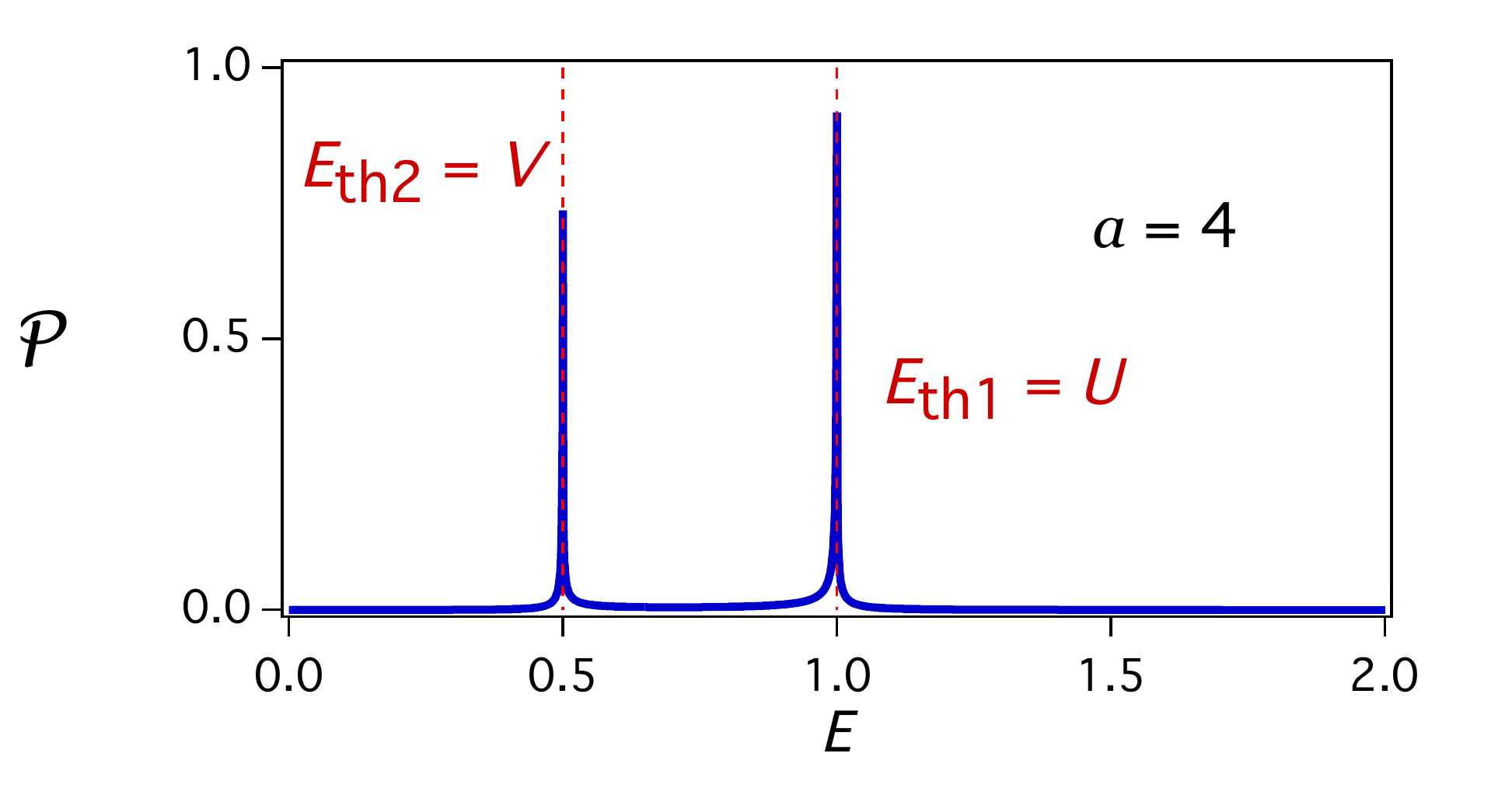} & \includegraphics[width=6.5cm]{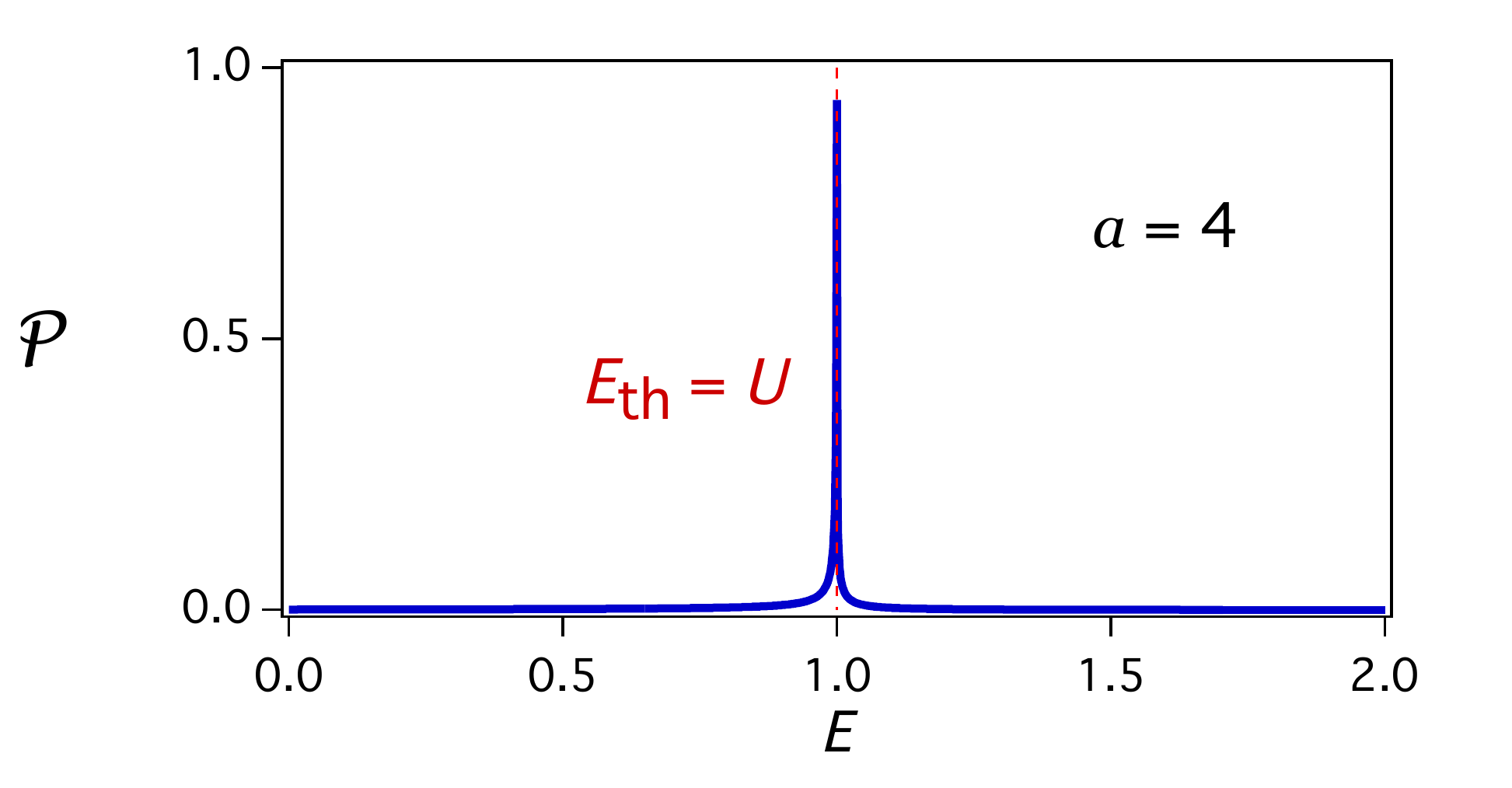} \\
  \includegraphics[width=6.5cm]{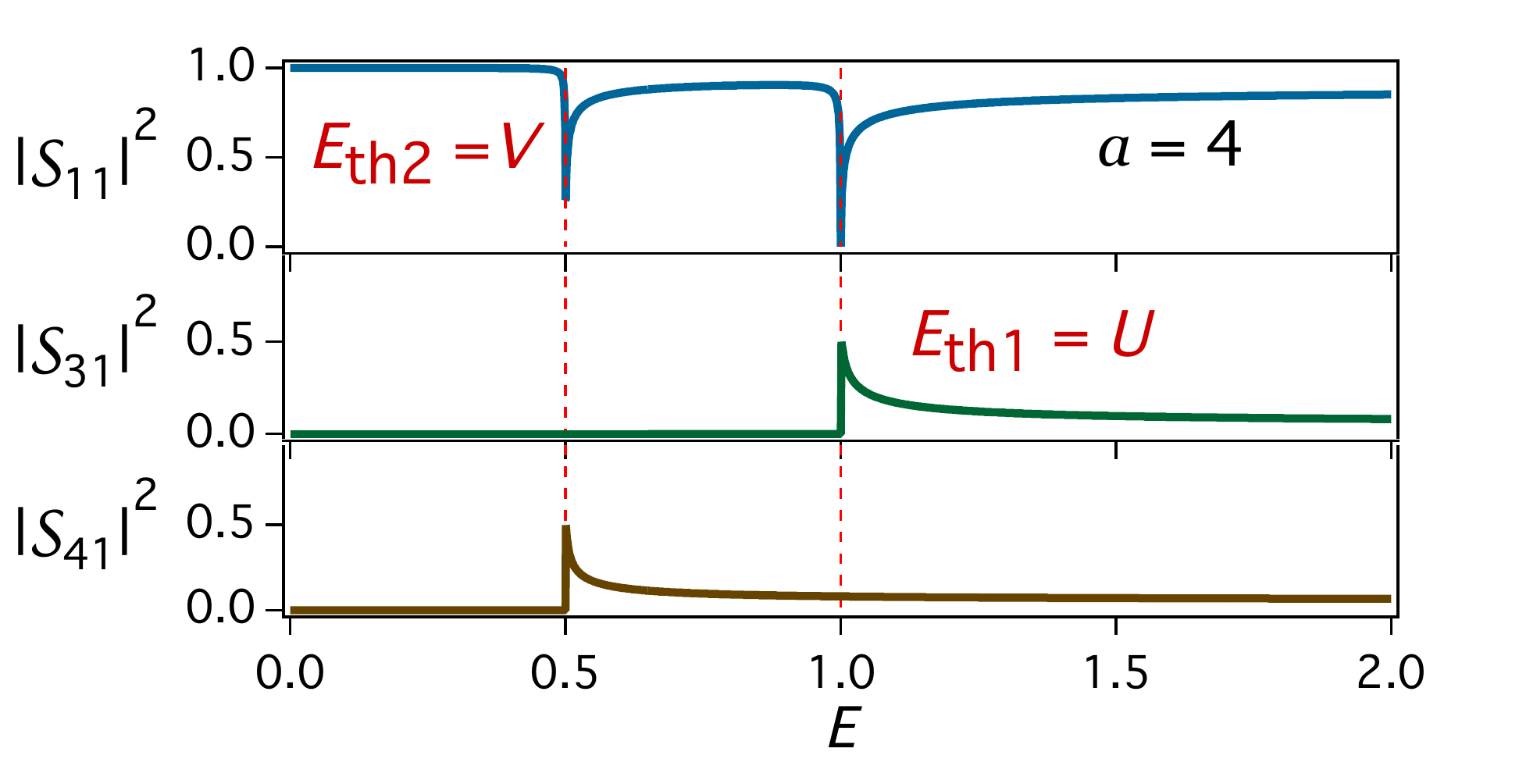} & \includegraphics[width=6.5cm]{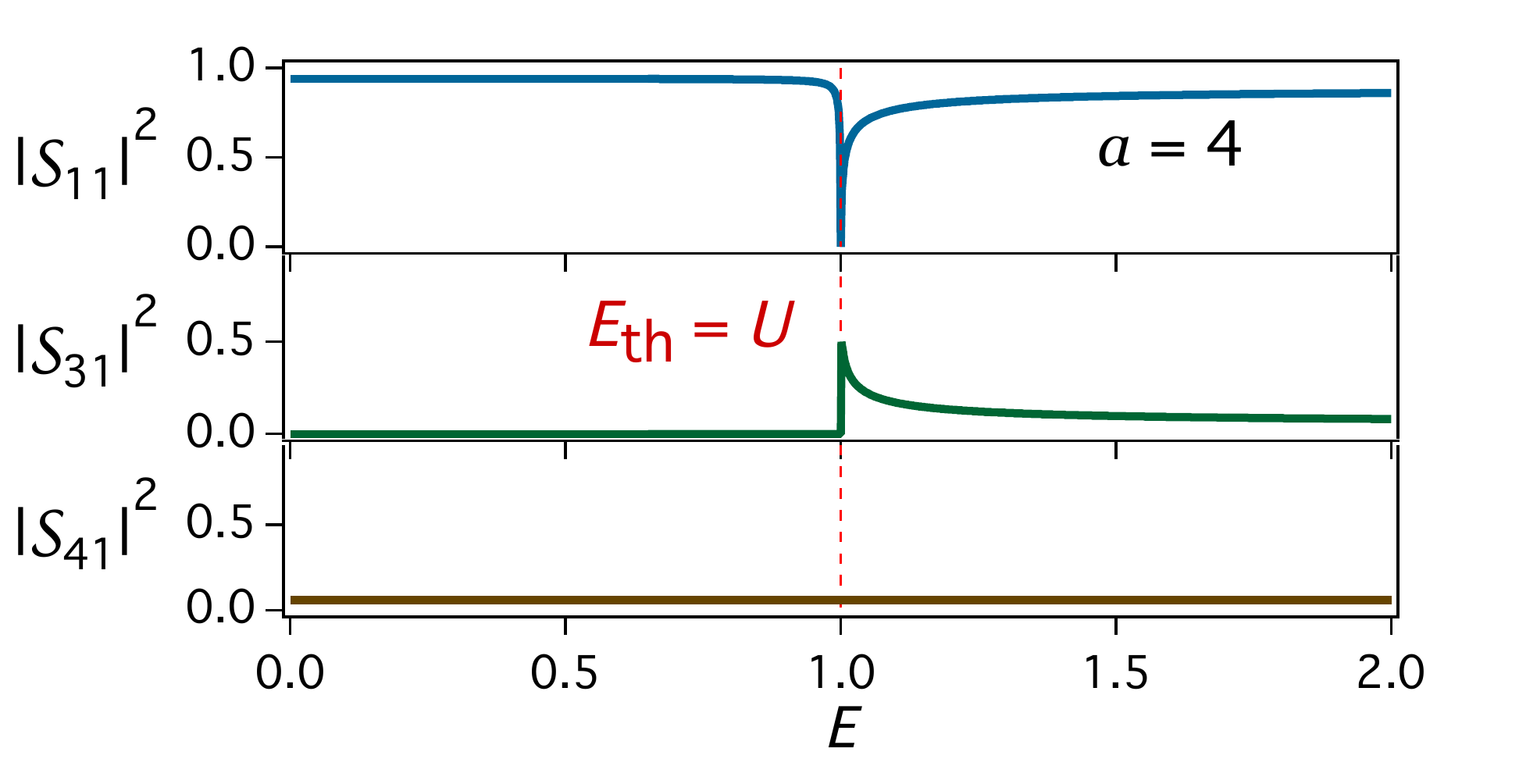}
  \end{tabular}
  \caption{Characteristics of the spectral filter obtained from the graph in Fig.~\ref{Fig.4} for $a=b=c=-d=4$. The left figure, plotted for $U=1$, $V=0.5$, illustrates the standard dual-band regime. The right figure, plotted for $U=1$, $V=0$, shows the effect of setting one of the potentials to zero; the filter is switched to the single-band regime. The top part of every figure displays the transmission probability $\P^{(U,0)}(E)$ as a function of $E$, the lower part shows the reflection probability $|S_{11}^{(U,V)}(E)|^2$ and the probabilities of transmission to the controlling lines $|S_{31}^{(U,V)}(E)|^2$ and $|S_{41}^{(U,V)}(E)|^2$.}
  \label{Fig. P a}
\end{figure}

It can be shown that the bandwidth of the passband at the energy $U>0$ satisfies $W_U\approx(1-1/\sqrt{2})U/a^4$. Similarly, if $V>0$, then $W_V=(1-1/\sqrt{2})V/a^4$.

The peaks of $\P^{(U,V)}(E)$ at $E_{\mathrm{th}1}=U$ and $E_{\mathrm{th}2}=V$ are related to the poles of the scattering matrix in the unphysical Riemann plane at $E^{\mathrm{pol}}_{1}=\frac{4a^4}{4a^4-1}U$ and $E^{\mathrm{pol}}_{2}=\frac{4a^4}{4a^4-1}V$, respectively.

\section{Fully tunable band-pass spectral filter}\label{Section: band-pass}

The filter designed in Section~\ref{Section: n=3} has one controllable parameter, namely the passband position which is adjustable by an external potential put on the controlling line. The aim of this section is to construct a filtering device that allows to control both the passband position and the bandwidth.

Similarly 
to Section~\ref{Section: n=4}, the goal will be achieved using the graph depicted in Figure~\ref{Fig.4} with the F\"ul\"op--Tsutsui vertex coupling given by boundary conditions~\eqref{b.c. 4}. The control potentials $U$ and $V$ will adjust the upper and the lower cutoff energies.

The transmission amplitude input~$\to$~output follows from the results of the previous section, and is given by~\eqref{S_21 4}.
Since the sought filter shall be of the band-pass type, we require zero transmission probability for $E\to\infty$, i.e., $\lim_{E\to\infty}\S_{21}^{(U,V)}(E)=0$. Hence we obtain the condition~\eqref{cond1}, i.e., $ac+bd=0$.

Now let us consider special situation 
where the lower cutoff energy is zero; in this regime the device shall work as a low-pass spectral filter. If we set $V=0$ in~\eqref{S_21 4} and use~\eqref{cond1} to simplify the numerator, we obtain
\begin{equation}\label{S_21(0) 4}
\S_{21}^{(U,0)}(E)=-\frac{2bd\left(1-\sqrt{1-\frac{U}{E}}\right)}{1+b^2+d^2+(a^2+c^2+(ad-bc)^2)\sqrt{1-\frac{U}{E}}}\,.
\end{equation}
Note that for $E\in[V,U]=[0,U]$, the expression $\sqrt{1-U/E}$ is imaginary, and consequently, the transmission probability for $E<U$ (in the intended passband) equals
\begin{equation}\label{P 4}
\P^{(U,0)}(E)=|\S_{21}^{(U,0)}(E)|^2=\frac{4b^2d^2\left(1+\left(1-\frac{U}{E}\right)\right)}{(1+b^2+d^2)^2+(a^2+c^2+(ad-bc)^2)^2\left(1-\frac{U}{E}\right)}\,.
\end{equation}
We observe that there is a special choice of $a,b,c,d$ that leads to a constant transmission probability in the whole interval $(0,U)$, i.e., to a flat passband. Indeed, if the parameters $a,b,c,d$ satisfy condition~\eqref{cond1} and at the same time
\begin{equation}\label{cond2}
1+b^2+d^2=a^2+c^2+(ad-bc)^2\,,
\end{equation}
then
\begin{equation}\label{P(0) 4}
\P^{(U,0)}(E)=\frac{4b^2d^2}{(1+b^2+d^2)^2}=\left(\frac{2bd}{1+b^2+d^2}\right)^2 \qquad\text{for all } E<U\,.
\end{equation}
Therefore, we impose both conditions~\eqref{cond1} and \eqref{cond2}, and require that the transmission probability~\eqref{P(0) 4} is as high as possible (our aim is to minimize the attenuation for $E$ inside the passband). In other words, we are to solve the optimization problem
\begin{equation}\label{optimization}
\text{maximize}\quad F(a,b,c,d)=\left(\frac{2bd}{1+b^2+d^2}\right)^2 \qquad\text{subject to eq. \eqref{cond1}\,\&\,\eqref{cond2}}\,.
\end{equation}
At first, let us simplify the problem by eliminating $a$ and $c$. We express $c$ from~\eqref{cond1}, substitute it to~\eqref{cond2},
\begin{equation}
1+b^2+d^2=a^2+(bd/a)^2+(ad+b^2d/a)^2\,,
\end{equation}
then multiply this equation by $a^2$ and rewrite it as a polynomial equation in $a$:
\begin{equation}
(1+d^2)a^4+(2b^2d^2-1-b^2-d^2)a^2+b^2d^2(1+b^2)=0\,.
\end{equation}
This is a biquadratic equation 
for $a$ with parameters $b,d$; it can be shown that it has a real solution $a$ if and only if $b,d$ satisfy
\begin{equation}
1+b^2+d^2-8b^2d^2\geq0\,.
\end{equation}
Therefore, the optimization problem~\eqref{optimization} is equivalent to the problem
\begin{equation}\label{optimization2}
\text{maximize}\quad F(b,d)=\left(\frac{2bd}{1+b^2+d^2}\right)^2 \qquad\text{subject to}\quad 1+b^2+d^2-8b^2d^2\geq0\,.
\end{equation}
One can easily find its solution:
\begin{equation}\label{bd}
|b|=|d|=1/\sqrt{2}\,.
\end{equation}
We substitute $b,d$ from~\eqref{bd} into conditions~\eqref{cond1} and \eqref{cond2}, and in this way find $a,c$. Altogether, we obtain eight solutions for $T$, namely
\begin{equation}\label{sol T}
\pm\frac{1}{\sqrt{2}}\begin{pmatrix}1 & 1 \\ 1 & -1\end{pmatrix},\;\pm\frac{1}{\sqrt{2}}\begin{pmatrix}1 & 1 \\ -1 & 1\end{pmatrix},\;\pm\frac{1}{\sqrt{2}}\begin{pmatrix}1 & -1 \\ 1 & 1\end{pmatrix},\;\pm\frac{1}{\sqrt{2}}\begin{pmatrix}-1 & 1 \\ 1 & 1\end{pmatrix}
\,.
\end{equation}
We choose the first of them,
\begin{equation}\label{T 4}
T=\frac{1}{\sqrt{2}}\begin{pmatrix}1 & 1 \\ 1 & -1\end{pmatrix}\,,
\end{equation}
but the choice actually makes no difference, as all the solutions~\eqref{sol T} lead to the same transmission probabilities.

We observe that the solution~\eqref{T 4} is nothing but a special case of the matrix $T$ found in Section~\ref{Section: n=4} (see eq.~\eqref{T 4 a}) for $a=1/\sqrt{2}$.
Consequently, the corresponding input~$\to$~output transmission amplitude $\S_{21}^{(U,V)}(E)$, the reflection amplitude $\S_{11}^{(U,V)}(E)$, and the remaining transmission amplitudes $\S_{31}^{(U,V)}(E)$, $\S_{41}^{(U,V)}(E)$ immediately follow from equations~\eqref{S_ 4,2}--\eqref{S_ 4,4}, i.e.,
\begin{equation}
S_{21}^{(U,V)}(E)=-\frac{\sqrt{1-\frac{U}{E}}-\sqrt{1-\frac{V}{E}}}{\left(1+\sqrt{1-\frac{U}{E}}\right)\left(1+\sqrt{1-\frac{V}{E}}\right)}\,,
\end{equation}
etc.
Let us
assume without loss of generality that $V<U$. The input~$\to$~output transmission probability, given by $\P^{(U,V)}(E)=|\S_{21}^{(U,V)}(E)|^2$, follows from equation~\eqref{P(U,V) 4} for $a=1/\sqrt{2}$:
\begin{equation}
\P^{(U,V)}(E)=\left\{\begin{array}{ll}
\left(\sqrt{\frac{U}{E}-1}-\sqrt{\frac{V}{E}-1}\right)^2\cdot\frac{E^2}{UV} & \text{ for } E<V, \\ [1em]
\frac{1}{\left(1+\sqrt{1-\frac{V}{E}}\right)^2}\left(1-\frac{V}{U}\right) & \text{ for } V<E<U, \\ [1em]
\left(\frac{\sqrt{1-\frac{U}{E}}-\sqrt{1-\frac{V}{E}}}{\left(1+\sqrt{1-\frac{U}{E}}\right)\left(1+\sqrt{1-\frac{V}{E}}\right)}\right)^2 & \text{ for } E>U.
\end{array}\right.
\end{equation}
The function $\P^{(U,V)}(E)$ has the following properties:
\begin{itemize}
\item If $V=0$, then
\begin{align}
&\P^{(U,0)}(E)=1/4 \qquad \text{for } E\in(0,U)\,, \\
&\P^{(U,0)}(E) \text{ as a function of $E$ quickly falls off to zero at $E>U$}\,, \\
&\lim_{E\to\infty}\P^{(U,0)}(E)=0\,.
\end{align}
\item If $0<V<U$, then
\begin{align}
&\P^{(U,V)}(V)=1-\frac{V}{U}\,,\quad \P^{(U,V)}(U)=\frac{1}{\left(1+\sqrt{1-\frac{V}{U}}\right)^2}\left(1-\frac{V}{U}\right)\,, \\
&\P^{(U,V)}(E) \text{ decreases in $(V,U)$}\,, \\
&\P^{(U,V)}(E) \text{ quickly decreases for $E>U$ and grows for $E<V$}\,, \\
&\lim_{E\to0}\P^{(U,V)}(E)=0\,, \qquad \lim_{E\to\infty}\P^{(U,V)}(E)=0\,.
\end{align}
\end{itemize}
The behaviour of $\P^{(U,V)}(E)$ is illustrated in Figure~\ref{Fig. P 4} in both situations $V>0$, $V=0$.
\begin{figure}[h]
  \begin{tabular}{cc}
  \includegraphics[width=6.5cm]{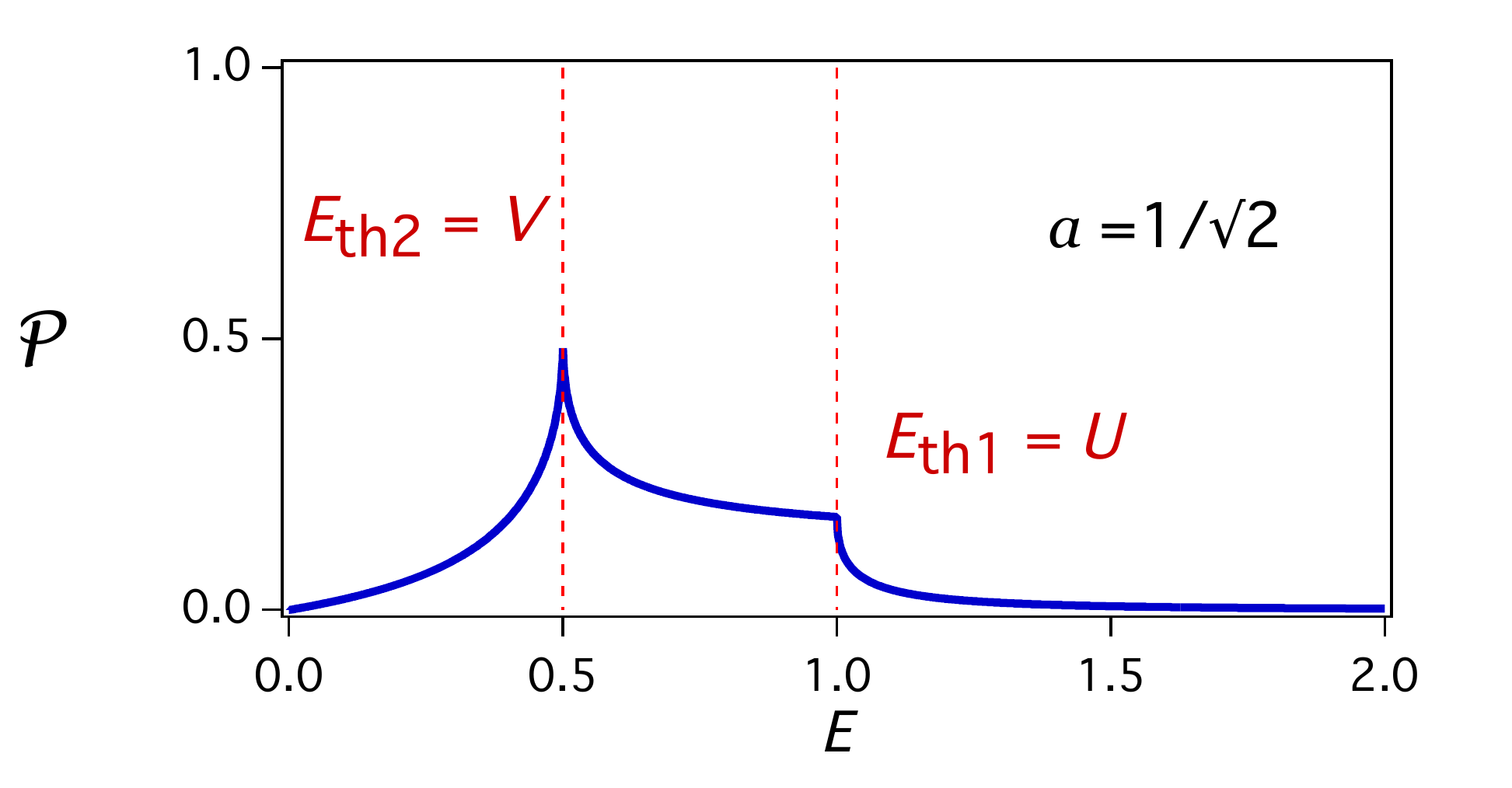} & \includegraphics[width=6.5cm]{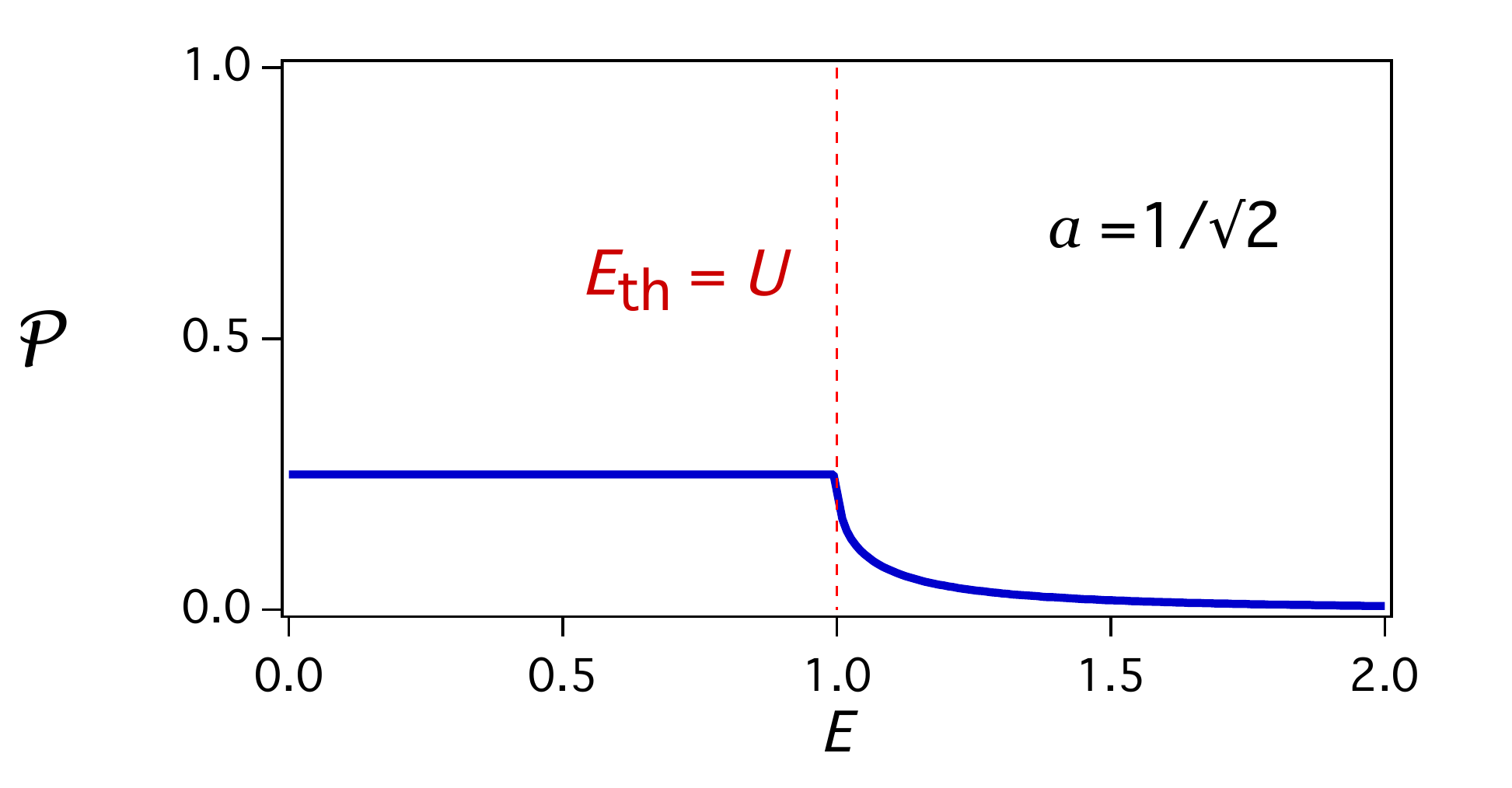} \\
  \includegraphics[width=6.5cm]{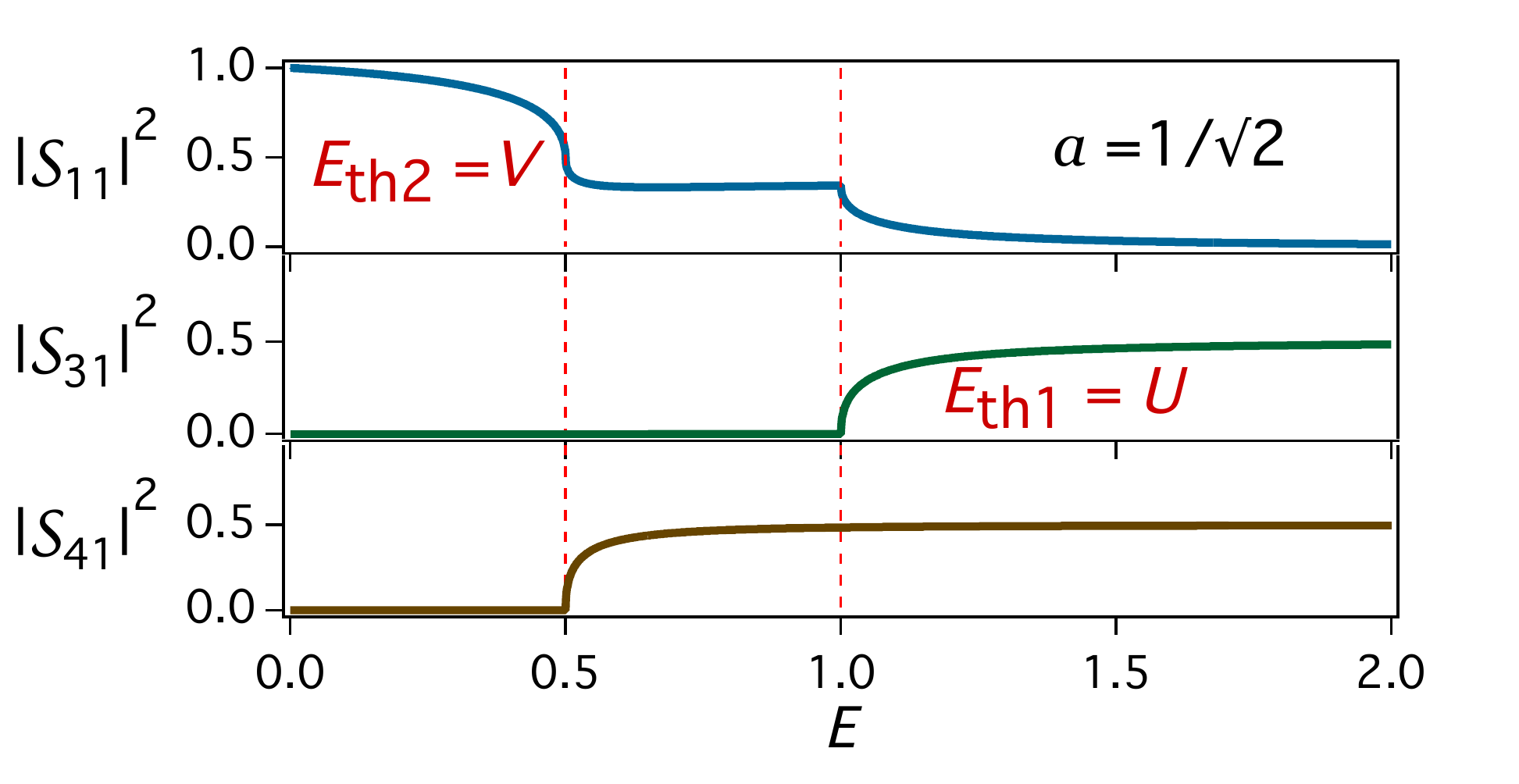} & \includegraphics[width=6.5cm]{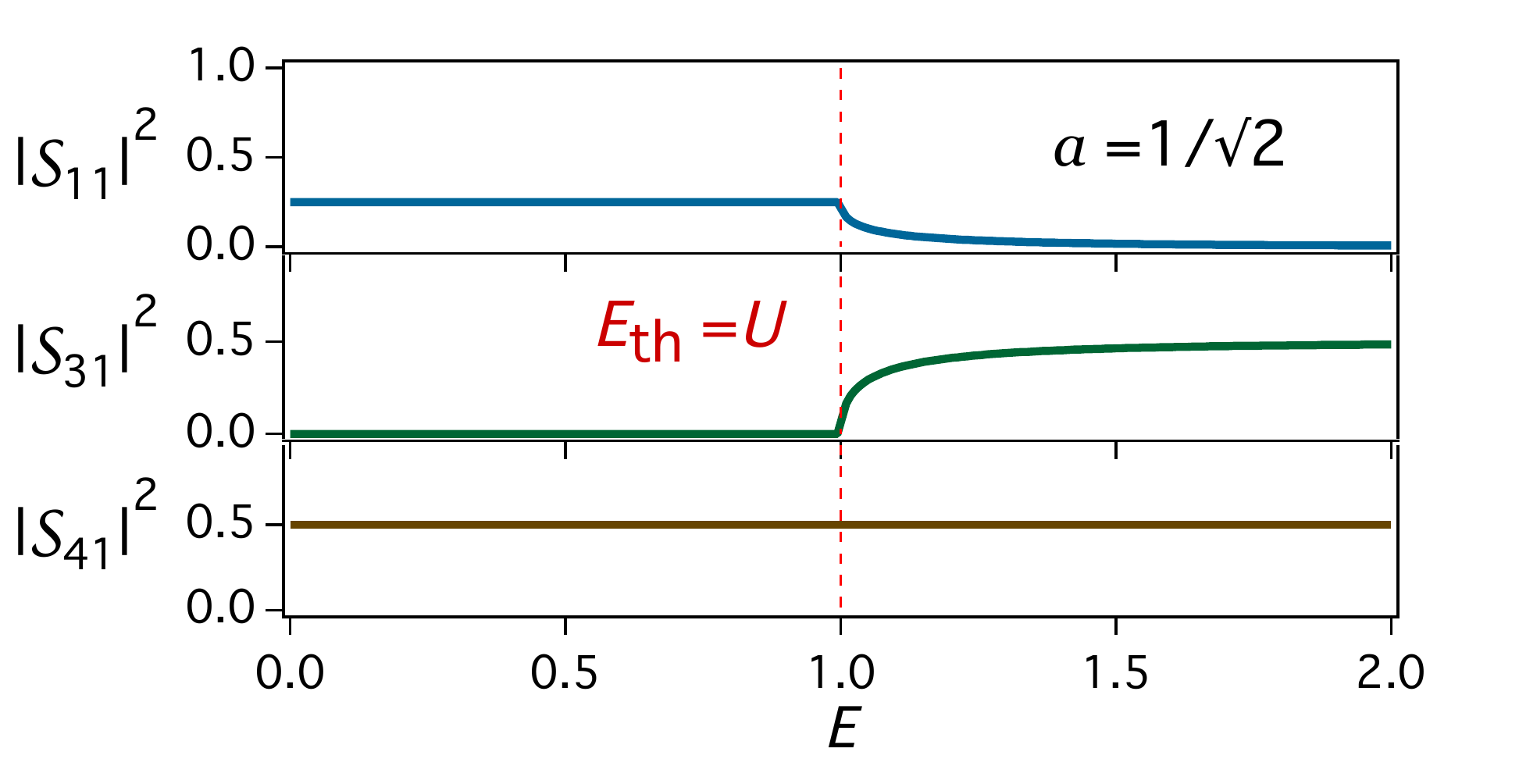}
  \end{tabular}
  \caption{Characteristics of the band-pass spectral filter obtained from the graph in Fig.~\ref{Fig.4} for $a=b=c=-d=1/\sqrt{2}$. The left figure illustrates the situation $U=1$ and $V=0.5$, the right figure the situation $U=1$ and $V=0$ (the flat-passband low-pass regime). The top part of every figure displays the transmission probability $\P^{(U,V)}(E)$ as a function of $E$, the lower part shows the reflection probability $|S_{11}^{(U,V)}(E)|^2$ and the probabilities of transmission to the controlling lines $|S_{31}^{(U,V)}(E)|^2$ and $|S_{41}^{(U,V)}(E)|^2$.}
  \label{Fig. P 4}
\end{figure}

\begin{remark}
The term ``passband'' is used in this section in a weakened sense, because the transmission probability for $E\in[V,U]$ mostly does not exceed $1/2$. However, considering the characteristics of the filter (cf. Fig.~\ref{Fig. P 4}), it is apparently reasonable to regard the interval $\approx[V,U]$ as a passband, especially in the regime $V=0$. 
\end{remark}

\begin{remark}
If the extremality requirement~\eqref{P(0) 4} is left out, conditions~\eqref{cond1}\,\&\,\eqref{cond2}, inducing the flat passband in the low-pass regime $V=0$, have various other solutions, such as
\begin{align}
T &= \begin{pmatrix}  a & \sqrt{1-a^2} \\ \sqrt{1-a^2} & -a \end{pmatrix} \quad \text{for $a\in(0,1)$} \quad\Rightarrow\quad\left.\P^{(U,0)}(E)\right|_{E\in(0,U)}=a^2(1-a^2)\,, \\ 
T &= \frac{1}{\sqrt{2}}\begin{pmatrix}a & a \\ 1/a & -1/a\end{pmatrix} \quad \text{for $a>0$} \quad\Rightarrow\quad\left.\P^{(U,0)}(E)\right|_{E\in(0,U)}=\frac{4a^4}{(a^2+1)^4}\,.
\end{align}
\end{remark}

\subsubsection*{Potential-controlled quantum sluice-gate}

As we have seen, when there is no potential on the line 4 ($V=0$), the device behaves as a low-pass spectral filter with a flat passband that transmits (with the probability of $1/4$) quantum particles with energies $E \in [0,U]$ to the output, whereas particles with higher energies are diverted to the other lines, mainly to 3 and 4.
This property allows to use the device also as a quantum flux controller:
If many particles described by the energy distribution $\rho(E)$ are sent along the line 1, the flux $J$ to the line 2 is given by
\begin{equation}
J(U) = \int_E \rho(E)\P^{(U,0)}(E)\d E\,.
\end{equation}
Assuming the Fermi distribution with Fermi energy $E_F$ larger than our range of operation of $U$, we can set $\rho(E)=\rho=\mathrm{const}$. With the approximation $\P^{(U,0)}(E) \approx \frac{1}{4} \Theta(U-E)$, we obtain
$
J(U) \approx \frac{1}{4}\rho U
$,
which indicates a linear flux control.
Therefore,
the device in the operation mode $V=0$ can be used as a quantum sluice-gate, linearly adjustable by the potential $U$ applied to the controlling line 3. In this regime the line 4 serves as a \emph{drain}.

\section{Spectral filter with multiple passbands}\label{Section: n=2r}

In this section we generalize the idea of Section~\ref{Section: n=4}. For any $r>1$ we design a spectral filter which has $r$ passbands such that their positions are directly controllable by potentials on $r$ controlling lines.

The design is based on a star graph with $n=2r$ lines, as depicted in Figure~\ref{Fig.2r}. The individual lines have the following meanings:
\begin{itemize}
\item Line 1 is \emph{input}.
\item Line 2 is \emph{output}.
\item Lines $3,\ldots,r$ are \emph{auxiliary lines}; their possible use will be discussed later on.
\item Lines $r+1,\ldots,n$ are \emph{controlling lines}, subjected to adjustable external potentials $U_1,\ldots,U_r$.
\end{itemize}

The F\"ul\"op--Tsutsui vertex coupling in the graph center is given by the boundary conditions written in the $ST$-form~\eqref{FT}, and the matrix $T$ is chosen as $T=aG$ for an $a>0$ and a certain unitary matrix $G$ or the order $r$, i.e.,
\begin{equation}\label{b.c.2r}
\left(\begin{array}{cc}
I^{(r)} & aG \\
0 & 0
\end{array}\right)\Psi'(0)=
\left(\begin{array}{cc}
0 & 0 \\
-aG^* & I^{(r)}
\end{array}\right)\Psi(0)\,.
\end{equation}
\begin{figure}[h]
\begin{center}
\includegraphics[width=4.5cm]{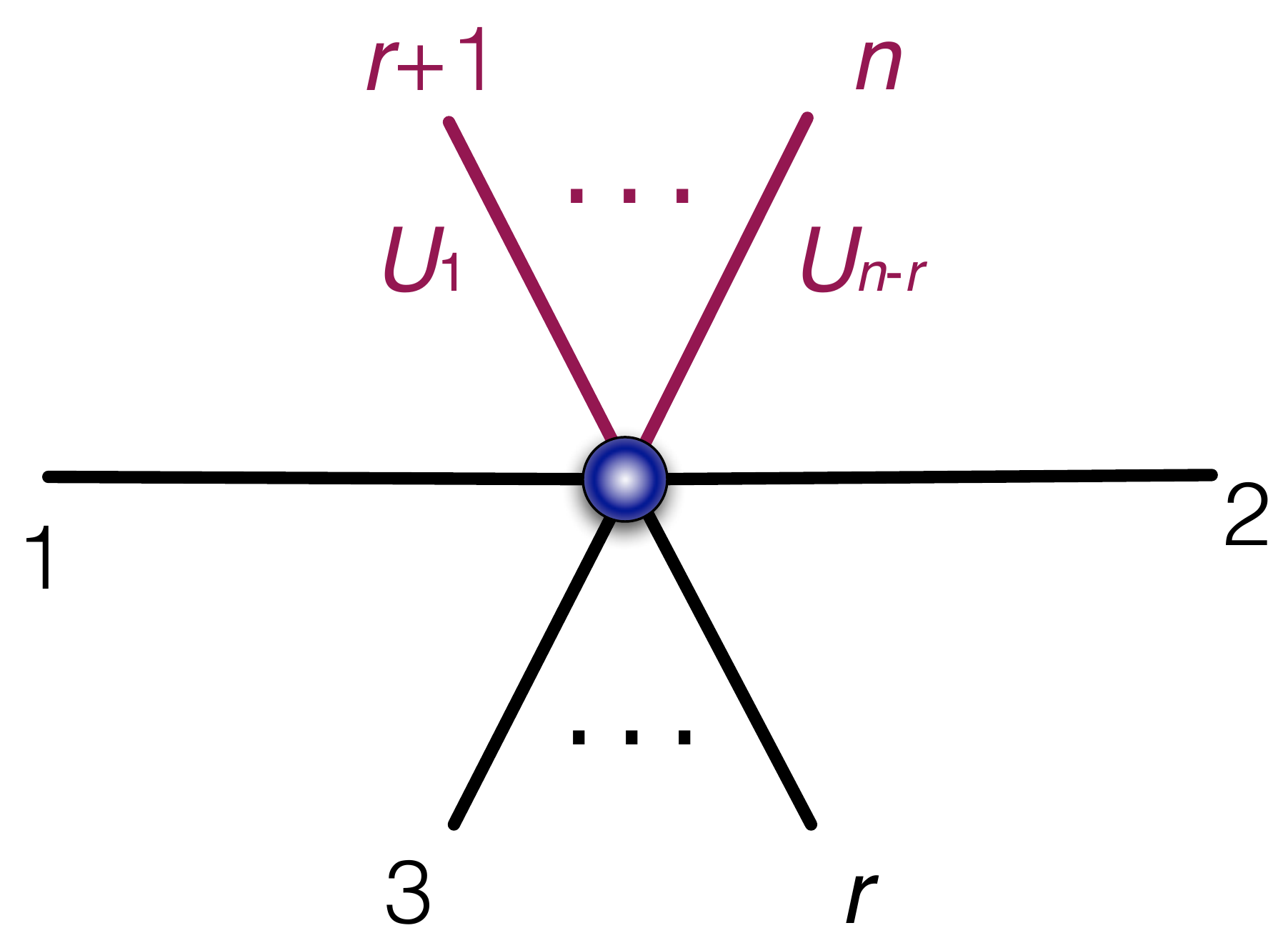}
\end{center}
\caption{Scheme of a quantum spectral filter based on a $n=2r$ star graph. The transmission from the intput 1 to the output 2 is controlled by $r$ external potentials $U_1,\ldots,U_r$ on the controlling lines $r+1,\ldots,n$, respectively.}
\label{Fig.2r}
\end{figure}

For any $j,\ell\leq r$, $j\neq\ell$, the transmission amplitude from the $\ell$-th line to the $j$-th line can be easily obtained from equation~\eqref{SFTpot}. If we denote, for the sake of brevity, $\S_{j\ell}^{(U)}(E)\equiv[\S(E;0,\ldots,0,U_1,\ldots,U_r)]_{j\ell}$, then
\begin{equation}\label{S_jl 2r}
\S_{j\ell}^{(U)}(E)=\frac{2g_{j1}\overline{g_{\ell 1}}}{1+a^2\sqrt{1-\frac{U_1}{E}}}+\frac{2g_{j2}\overline{g_{\ell 2}}}{1+a^2\sqrt{1-\frac{U_2}{E}}}+\cdots+\frac{2g_{jr}\overline{g_{\ell r}}}{1+a^2\sqrt{1-\frac{U_r}{E}}}\,.
\end{equation}
The transmission amplitude input~$\to$~output corresponds to the choice $\ell=1$, $j=2$ in~\eqref{S_jl 2r}. It holds
\begin{equation}
\lim_{E\to\infty}\S_{21}^{(U)}(E)=2g_{21}\overline{g_{11}}+2g_{22}\overline{g_{12}}+\cdots+2g_{2r}\overline{g_{1r}}=2[GG^*]_{21}=2[I^{(r)}]_{21}=0\,,
\end{equation}
and if the control potentials $U_1,\ldots,U_r$ are all nonzero and mutually different, then
\begin{align}
\lim_{E\to0}\S_{21}^{(U)}(E)&=0+0+\cdots+0=0\,, \\
\lim_{E\to U_j}\S_{21}^{(U)}(E)&=2g_{2j}\overline{g_{1j}}+\sum_{\substack{1\leq \ell\leq r\\ \ell\neq j}}\frac{2g_{2\ell}\overline{g_{1\ell}}}{1+a^2\sqrt{1-\frac{U_{\ell}}{U_j}}}\,.
\end{align}
If $a\gg1$ and, moreover, the ``isolated potentials'' condition
\begin{equation}\label{not close U}
a^2\sqrt{\left|1-\frac{U_{\ell}}{U_j}\right|}\gg1 \qquad \text{for all $j,\ell=1,\ldots,r$, $j\neq\ell$},
\end{equation}
is satisfied, we obtain
\begin{equation}\label{S(U_j)}
\lim_{E\to U_j}\S_{21}^{(U)}(E)=2g_{2j}\overline{g_{1j}} \quad\text{for all $j=1,\ldots,r$}\,,
\end{equation}
and at the same time
\begin{equation}
\S_{21}^{(U)}(E)\approx0 \quad\text{for all $E$ except for certain small neighborhoods of $U_1\ldots,U_r$}\,.
\end{equation}
To sum up, the absolute value of the function $\S_{21}^{(U)}(E)$ for $a\gg1$ has sharp peaks at $E=U_1,\ldots,U_n$. Since the input~$\to$~output transmission probability equals $|\S_{21}^{(U)}(E)|^2$, the studied device works as a spectral filter with multiple passbands positioned at the energies determined by the control potentials. The situation is illustrated in Figure~\ref{Fig. P 2r} in the case of the device constructed for
\begin{equation}\label{2r ex.}
r=4, \quad G=\frac{1}{2}\begin{pmatrix}1&1&1&1\\1&-1&1&-1\\1&1&-1&-1\\1&-1&-1&1\end{pmatrix}, \quad a=4\,.
\end{equation}
\begin{figure}[h]
  \begin{tabular}{cc}
  \includegraphics[width=6.5cm]{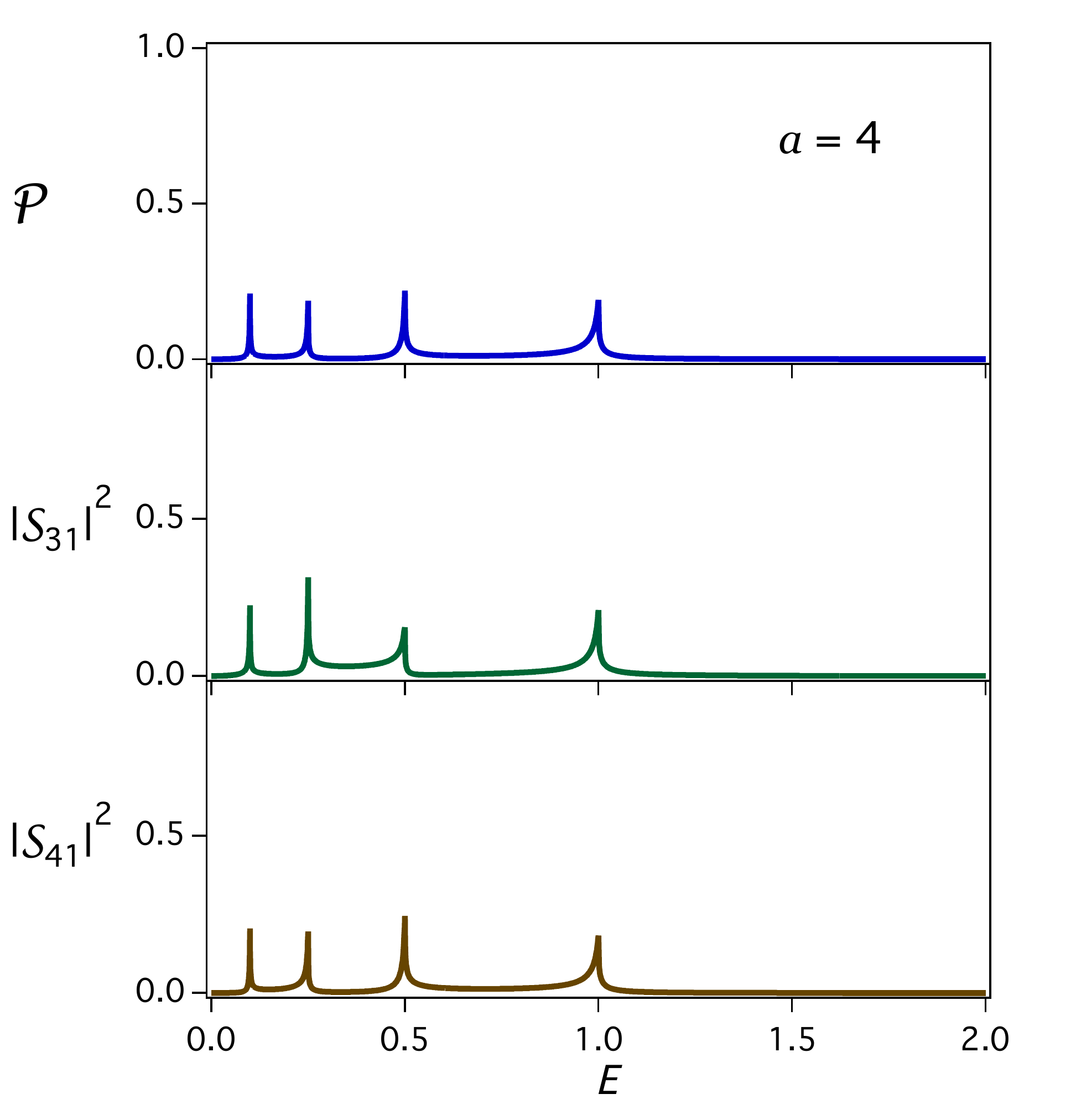} & \includegraphics[width=6.5cm]{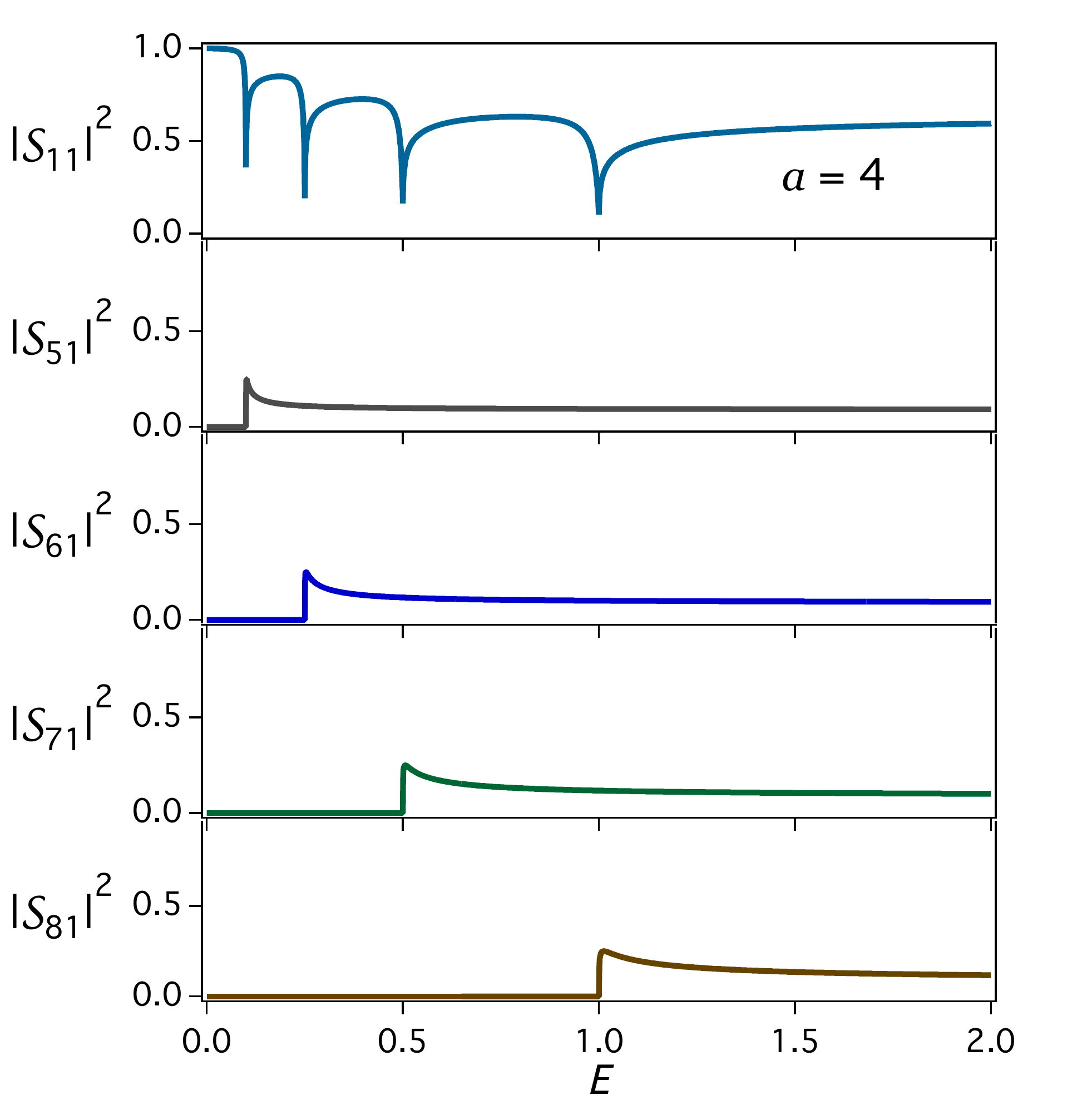}
  \end{tabular}
  \caption{Characteristics of the filter with multiple passbands, obtained from the graph in Fig.~\ref{Fig.2r} with the parameters given by eq.~\eqref{2r ex.}. The control potentials are set to $U_1=0.1$, $U_2=0.25$, $U_3=0.5$, $U_4=1$. The left figure shows the transmission probabilities $|\S_{j1}^{(\{U\})}(E)|^2$ between the lines $1 \to j$; due to the choice of $G$, all the peak heights are almost identical. The right figure shows the reflection probability $|S_{11}^{(U)}(E)|^2$ and the probabilities of transmission to the controlling lines $|S_{\ell 1}^{(U)}(E)|^2$ for $\ell=r+1,\ldots,n$.}
  \label{Fig. P 2r}
\end{figure}

The peaks of $\P^{(\{U\})}(E)$ at the energies $E=U_j$ ($j=1,\ldots,r$) are related to the poles of the scattering matrix in the unphysical Riemann plane at 
$
E^{\mathrm{pol}}_{j}=\frac{a^4}{a^4-1}U_j
$.

\subsubsection*{On the heights of the peaks and the choice of $G$}

Let $U_1,\ldots,U_r$ be all nonzero and satisfying the ``isolated potentials'' condition~\eqref{not close U}.
With regard to equation~\eqref{S(U_j)}, the heights of the probability peaks at $E=U_j$ equal
\begin{equation}\label{P(U_j)}
\P^{(\{U\})}(U_j)=|\S_{21}^{(U)}(U_j)|^2\approx4|g_{2j}|^2|g_{1j}|^2\,,
\end{equation}
thus, in general, they may be mutually different. However, if all the entries of $G$ have the same moduli (equal to $1/\sqrt{r}$), then all the peaks are (approximately) of the uniform height
\begin{equation}\label{peak heights}
\P^{(\{U\})}(U_j)\approx\frac{4}{r^2}\,.
\end{equation}
A unitary matrix $G$ with this property is nothing but $G=\frac{1}{\sqrt{r}}H^{(r)}$, where $H^{(r)}$ is an \emph{Hadamard matrix} of the order $r$. Regarding the existence of Hadamard matrices of order $r$, it holds:
\begin{itemize}
\item a complex Hadamard matrix exists for every $r\in\N$, for example the matrix with the entries
\begin{equation}
[H^{(r)}]_{jk}=\e^{\frac{2\pi\i}{r}(j-1)(k-1)}\,,
\end{equation}
\item a real Hadamard matrix (having elements $\pm1$) of the order $r$ is conjectured to exist if and only if $r=1$, $r=2$ or $r$ is a multiple of $4$.
\end{itemize}
Consequently, a filter with $r$ passbands of the equal height $4/r^2$ can be constructed for every $r\geq2$, but for most numbers $r$ the corresponding matrices $G$ need to be complex.

\medskip

Since the studied device is a generalization of the filter designed in Section~\ref{Section: n=4}, it has similar properties. A more detailed analysis follows.

\paragraph*{Reduction of the number of passbands}

The studied filter allows not only to control the passband positions, but also to reduce their number to any $r'<r$, which is achieved by setting $r-r'$ control potentials to zero. Indeed, if $r-r'$ controlling lines carry zero potentials (we may assume without loss of generality $U_{r'+1}=U_{r'+2}=\cdots=U_{r}=0$), then
\begin{equation}
\S_{21}^{(U)}(E)=\sum_{\ell=1}^{r'}\frac{2g_{2\ell}\overline{g_{1\ell}}}{1+a^2\sqrt{1-\frac{U_\ell}{E}}}+\frac{1}{1+a^2}\sum_{\ell=r'+1}^{r}2g_{2\ell}\overline{g_{1r}}\approx\sum_{\ell=1}^{r'}\frac{2g_{2\ell}\overline{g_{1\ell}}}{1+a^2\sqrt{1-\frac{U_\ell}{E}}}
\end{equation}
with regard to the assumption $a\gg1$. Hence, the effect of setting $U_{r'+1}=U_{r'+2}=\cdots=U_{r}=0$ essentially consists in suppressing $r-r'$ peaks. More precisely speaking, $\P^{(\{U\})}(0)$ is very slightly increased from $0$ to
\begin{equation}
\lim_{E\to0}\P^{(\{U\})}(E)\approx\frac{4}{(1+a^2)^2}\left|\sum_{\ell=1}^r g_{2\ell}\overline{g_{1\ell}}\right|^2\,,
\end{equation}
and the heights of the remaining $r'$ peaks at $U_j$, $j=1,\ldots,r'$, are very slightly changed from $\approx4|g_{21}|^2|\overline{g_{11}}|^2$ to
\begin{equation}
\lim_{E\to U_j}\P^{(\{U\})}(E)\approx4\left|g_{2j}\overline{g_{1j}}+\frac{1}{1+a^2}\sum_{\ell=1}^r g_{2\ell}\overline{g_{1\ell}}\right|^2\,,
\end{equation}
but the effect is not significant due to $a\gg1$.
Note, however, that reducing the number of passbands in this way (when the device is in operation) is not equivalent to using a device constructed for a smaller $r$, because the peak heights generally depend on $r$ (cf. equation~\eqref{peak heights}, related to the Hadamard case, and the discussion above it).

If all the control potentials are zero, then $\P^{(\{U\})}(E)\equiv0$, which follows from equation~\eqref{S_jl 2r} together with the unitarity of $G$.

\paragraph*{Effect of identical potentials}

When several control potentials are set to the same value (or if their values are very close), it results in 
the merging of 
the corresponding passbands into one single passband, and, therefore, to the reduction of the peaks of $\P^{(\{U\})}(E)$. In contrast to the reduction achieved via setting several potentials to zero (see above), this manner influences the height of the merged peak.
If, for instance, $U_1=\cdots=U_j$ ($j<r$), then
\begin{equation}
\S_{21}^{(U)}(E)=\frac{\sum_{\ell=1}^{j}2g_{2\ell}\overline{g_{1\ell}}}{1+a^2\sqrt{1-\frac{U}{E}}}+\sum_{\ell=j+1}^{r}\frac{2g_{2\ell}\overline{g_{1\ell}}}{1+a^2\sqrt{1-\frac{U_\ell}{E}}}\,,
\end{equation}
therefore, the height of the peak at $E=U_1$ is changed from $\approx4|g_{21}\overline{g_{11}}|^2$ (see~\eqref{P(U_j)}) to
\begin{equation}
\P^{(\{U\})}(U_1)\approx4\left|\sum_{\ell=1}^j g_{2\ell}\overline{g_{1\ell}}\right|^2 \, ,
\end{equation}
which may be a significantly different value, depending on the matrix $G$. The other peaks remain essentially unaffected.

If all the control potentials are equal, it holds $\P^{(\{U\})}(E)\equiv0$ due to equation~\eqref{S_jl 2r} and the unitarity of $G$.

A similar situation happens when some of the potentials are close to each other, $U_1\approx\ldots\approx U_j$ ($j\leq r$).

\paragraph*{Remark on the auxiliary lines}

It follows from equation~\eqref{S_jl 2r} that the lines $3,\ldots,r$ can in principle serve as output lines. If the particles passed throught the vertex are gathered on all of the lines $2,3,\ldots,r$, then the total (aggregated) transmission probability 
is
\begin{equation}
\P^{(\{U\})}_\mathrm{tot}(E)=\sum_{j=2}^r|\S_{j1}^{(U)}(E)|^2\,,
\end{equation}
therefore, the cummulative peaks are higher, namely
\begin{equation}
\lim_{E\to U_\ell}\P^{(\{U\})}_\mathrm{tot}(E)\approx4\sum_{j=2}^r |g_{2j}|^2|g_{1j}|^2\,.
\end{equation}
For instance, the cummulative peaks in the Hadamard case $G=(1/\sqrt{r})H^{(r)}$ satisfy $\P^{(\{U\})}_\mathrm{tot}(U_\ell)\approx4(r-1)/r^2$. If $r\gg1$, we have $\P^{(\{U\})}_\mathrm{tot}(U_\ell)\approx4/r$, which is a significantly higher value than the single-channel peak heights $4/r^2$ found in equation~\eqref{peak heights}.


\section{Spectral branching filter}\label{Section: Branching}

In this section we study a filtering device with multiple independently controllable outputs. The device is based on a star graph with $n=2r$ lines, see Figure~\ref{Fig.2r.}:
\begin{itemize}
\item Line 1 is \emph{input}.
\item Lines $2,\ldots,r$ are \emph{outputs}.
\item Line $r+1$ is \emph{drain}.
\item Lines $r+2,\ldots,n$ are \emph{controlling lines}, subjected to adjustable external potentials $U_2,\ldots,U_r$.
\end{itemize}
\begin{figure}[h]
\begin{center}
\includegraphics[width=4.5cm]{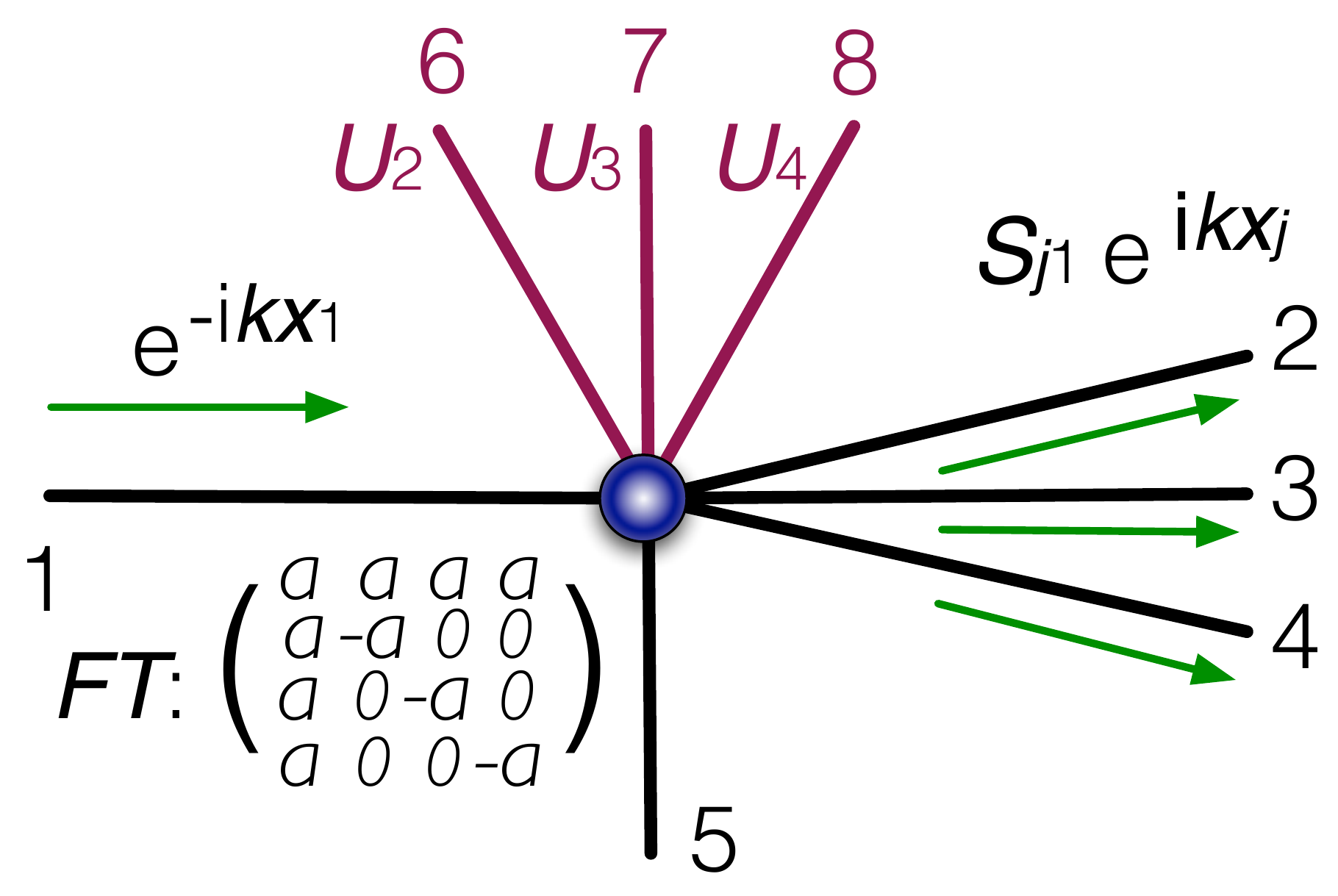}
\end{center}
\caption{Scheme of a quantum branching spectral filter based on a $n=2r$ star graph (for $r=4$). For every $j=2,\ldots,r$, the transmission to the output line $j$ is controlled by the external potential $U_j$ on the line $r+j$.}
\label{Fig.2r.}
\end{figure}
In the center of the graph there is a F\"ul\"op--Tsutsui vertex coupling given by the boundary conditions
\begin{equation}\label{b.c.2r.}
\left(\begin{array}{cc}
I^{(r)} & T \\
0 & 0
\end{array}\right)\Psi'(0)=
\left(\begin{array}{cc}
0 & 0 \\
-T^* & I^{(r)}
\end{array}\right)\Psi(0)
\qquad\text{for}\quad
T=a\begin{pmatrix}
1 & 1 & 1 & \cdots & 1 \\
1 & -1 & 0 & \cdots & 0 \\
1 & 0 & -1 &  & 0 \\
1 & \vdots &   & \ddots & \vdots  \\
1 & 0 & 0 & \cdots & -1
\end{pmatrix}\,,
\end{equation}
where $a>0$ is a parameter such that $a^2\gg1$.

For any $j=2,\ldots,r$, the transmission amplitude of the channel $1\to j$, denoted by $\S_{j1}^{(U)}(E)\equiv[\S(E;0,\ldots,0,U_1,\ldots,U_r)]_{j1}$, is
\begin{equation}\label{S_j1 branch}
\S_{j1}^{(U)}(E)=
\frac{2}{\left(1+a^2 r\right)\left(\frac{1}{a^2}+\xi_j\right)}\left(\xi_j\frac{1+\sum_{\ell=2}^{r}\frac{1}{\frac{1}{a^2}+\xi_\ell}}{1+\sum_{\ell=2}^{r}\frac{\xi_\ell}{\frac{1}{a^2}+\xi_\ell}}-1\right)
\,,
\end{equation}
where $\xi_\ell=\sqrt{1-\frac{U_\ell}{E}}$ for all $\ell=2,\ldots,r$, and the transmission amplitude input~$\to$~drain equals
\begin{equation}
\S_{r+1,1}^{(U)}(E)=\frac{2a}{1+a^2(k+1)}\,.
\end{equation}
With regard to equation~\eqref{S_j1 branch}, it holds
\begin{equation}
\lim_{E\to\infty}\S_{j1}^{(U)}(E)=0 \qquad\forall j=2,\ldots,r.
\end{equation}
Let the nonzero control potentials be mutually different in the sense
\begin{equation}\label{not close U.}
a^2\sqrt{\left|1-\frac{U_{\ell}}{U_j}\right|}\gg1 \quad \text{for all $j,\ell=1,\ldots,r$, $j\neq\ell$ such that $U_j\neq0$},
\end{equation}
and let $h_0$ be the number of the control potentials set to $0$, i.e., $h_0=\#\left\{\ell\in\{2,\ldots,r\}\,|\,U_\ell=0\right\}$.
Then, with regard to the assumption $a^2\gg1$,
\begin{align}
\lim_{E\to0}\S_{j1}^{(U)}(E)&=\frac{2}{1+a^2 r}\cdot\frac{1+h_0+\frac{1}{a^2}}{1+k+\frac{1+k-h_0}{a^2}}\approx0\,, \\ \label{S_j1 U=0}
\lim_{E\to U_j}\S_{j1}^{(U)}(E)&=\frac{-2}{\frac{1}{a^2}+r}\approx\frac{-2}{r}\,,
\end{align}
and furthermore, for every $j,\ell$ such that $U_j\neq0\neq U_\ell$, it holds
\begin{equation}
\lim_{E\to U_\ell}\S_{j1}^{(U)}(E)\approx\frac{2}{r(r-1)}\,.
\end{equation}
The transmission probability to the output line $j$ is given by $\P_{j}^{(U)}(E)=|\S_{j1}^{(U)}(E)|^2$.
If the condition~\eqref{not close  U.} is satisfied, then for every $j=2,\ldots,r$ such that $U_j\neq0$, we have
\begin{align}
\lim_{E\to\infty}\P_{j}^{(U)}(E)&=0\,, \label{lim P branch,infty} \\
\lim_{E\to0}\P_{j}^{(U)}(E)&\approx0\,, \\
\lim_{E\to U_j}\P_{j}^{(U)}(E)&\approx\frac{4}{r^2}\,, \label{lim U_j} \\
\lim_{E\to U_\ell}\P_{j}^{(U)}(E)&\approx\frac{4}{r^2(r-1)^2} \qquad\text{for } U_\ell\neq0\,, \label{lim U_ell}
\end{align}
and $\P_{j1}^{(U)}(E)\approx0$ for all $E$ except for a certain small neighborhoods of nonzero potentials $U_\ell\neq0$.
In other words, if $U_j\neq0$, then the probability $\P_{j}^{(U)}(E)$ has a principal sharp peak of the height $4/r^2$ at $E=U_j$, and secondary, significantly smaller, sharp peaks of the height $4/[r(r-1)]^2$ at $E=U_\ell\neq0$.
The situation is illustrated in Figure~\ref{Fig. P branch}. 
Therefore, the device constructed for $a\gg1$ works as a branching band-pass filter. Every output $j=2,\ldots,r$ is controlled by a dedicated control potential $U_j$ (put on line $j+r$) which determines the position of the principal passband.

\begin{figure}[h]
  \begin{tabular}{cc}
  \includegraphics[width=6.5cm]{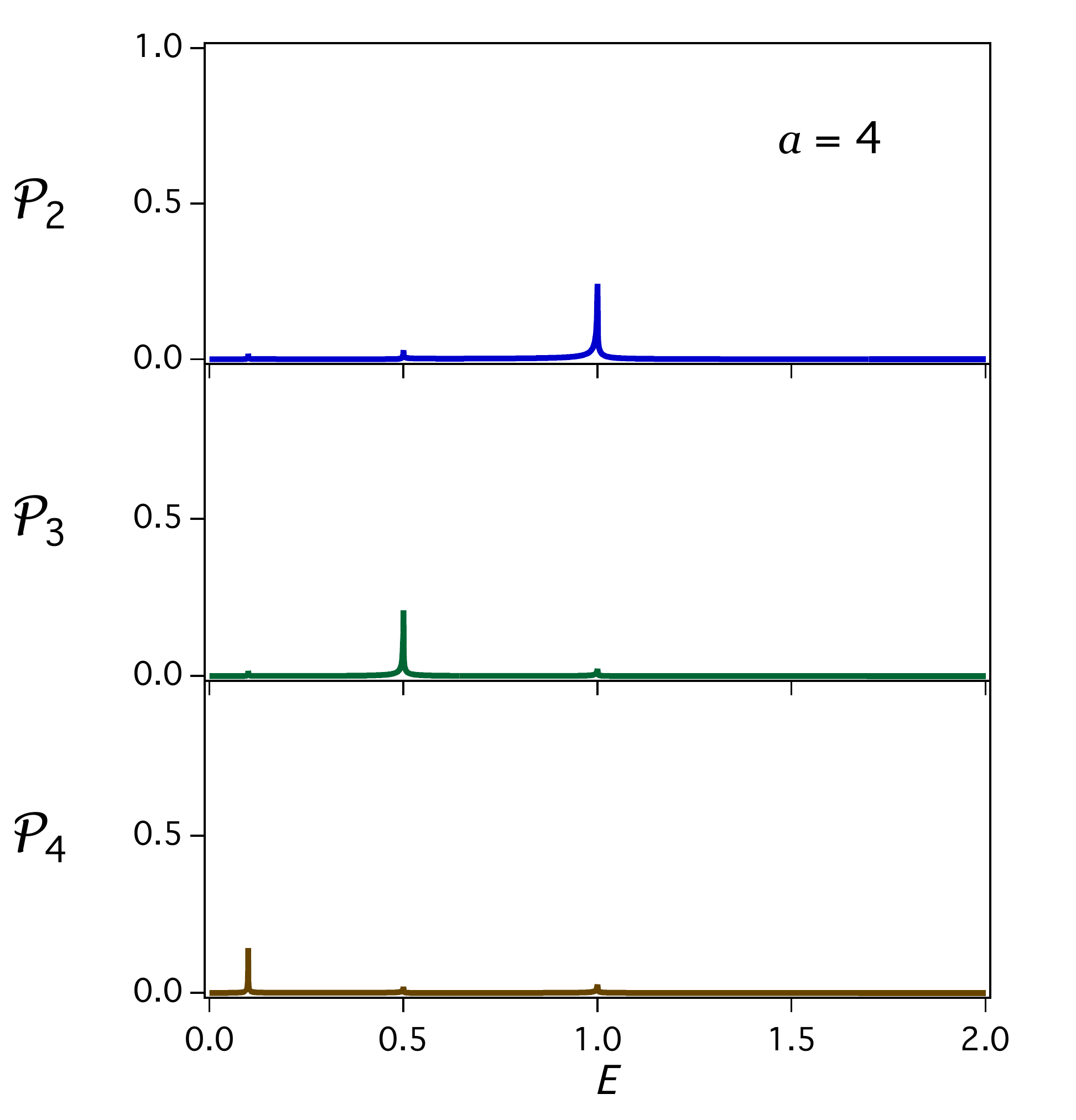} & \includegraphics[width=6.5cm]{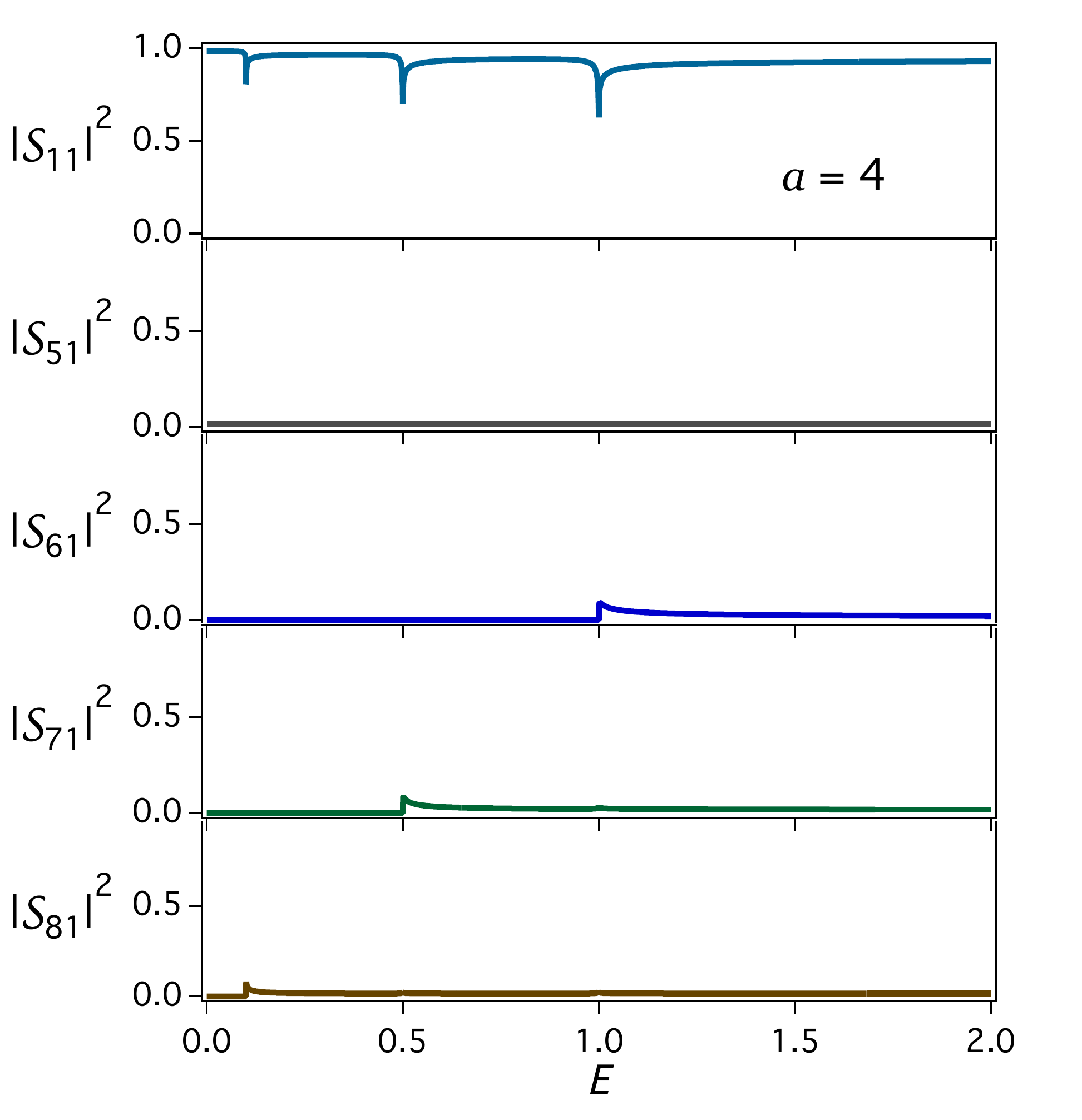}
  \end{tabular}
  \caption{Characteristics of the branching spectral filter, obtained from the graph in Fig.~\ref{Fig.2r.} for $n=2r=8$ and $a=4$. The control potentials are set to $U_2=1$, $U_3=0.5$, $U_4=0.1$. The left figure shows the transmission probabilities $\P_j^{(\{U\})}(E)$ to the outputs $j=2,3,4$, the right figure shows the reflection probability $|S_{11}^{(U)}(E)|^2$, the probability of transmission to the drain and the probabilities of transmission to the controlling lines $|S_{\ell 1}^{(U)}(E)|^2$ for $\ell=r+2,\ldots,n$.}
  \label{Fig. P branch}
\end{figure}

\paragraph*{Identical potentials}

If some of the control potentials are of the same nonzero value, or if their values are very close, it has effect on the secondary peaks at that energy. Namely, equation~\eqref{lim U_ell} is changed to
\begin{equation}
\lim_{E\to U_\ell}\P_{j}^{(U)}(E)\approx\frac{4}{r^2}\cdot\left(\frac{h_\ell}{r-h_\ell}\right)^2 \qquad\text{for } U_\ell\neq0\,,
\end{equation}
where $h_\ell$ can be roughly interpreted as the number of the control potentials that are approximately equal to $U_\ell$, i.e., $h_\ell\approx\#\left\{i\in\{2,\ldots,r\}\,|\,a^2\sqrt{1-U_{i}/U_\ell}\ll1\right\}$. Therefore, the height of the secondary peak at $E=U_\ell$ indicates the number of control potentials equal to $U_\ell$ (or to a certain very close value).

Let us emphasize that the heights of the principal peaks are essentially unaffected.

\paragraph*{The effect of $U_j=0$}

Setting the control potential $U_j$ on the controlling line $j+r$ to $0$ essentially results in cancelling the corresponding principal peak. The limit~\eqref{S_j1 U=0} is changed to
\begin{equation}
\lim_{E\to0}\S_{j1}^{(U)}(E)=\frac{2}{1+a^2 r}\cdot\frac{1+h_0+\frac{1}{a^2}}{1+k+\frac{1+k-h_0}{a^2}}\approx0\,,
\end{equation}
and, therefore, the principal peak height is changed from the value given by equation~\eqref{lim U_j} to
\begin{equation}
\lim_{E\to U_j}\P_{j}^{(U)}(E)\approx0\,.
\end{equation}

If all the control potentials are zero, then $\P_{j}^{(\{U\})}(E)=0$ for all $j=2,\ldots,r$, and the incoming particles are mostly reflected.

\paragraph*{Another operation mode}

One may consider the operation mode when the drain line be subjected to a nonzero potential $U_{1}$. In such a situation a second principal peak at the energy $E=U_1$ appears at every output $j=2,\ldots,r$. More precisely speaking, if $U_j\neq0$ and satisfies the condition $a^2\sqrt{|1-U_0/U_j|}\gg1$, then
\begin{equation}
\lim_{E\to U_1}\P_{j}^{(U)}(E)\approx\frac{4}{r^2}\,,
\end{equation}
whereas the relations~\eqref{lim P branch,infty}--\eqref{lim U_ell} are unaffected.
Therefore, the device with a nonzero potential at the drain line works as a dual-band branching filter with two principal transmission probability peaks at the outputs: the first peak is at $E=U_j\neq0$, where $j$ denotes the output, the second one is at $E=U_1$. Both peaks are essentially of the same height $4/r^2$.

\begin{remark}
The spectral filter designed in Section~\ref{Section: n=4} is obviously a special case of the filter studied in this section, corresponding to $r=2$.
\end{remark}


\section{Practical realization of F\"ul\"op--Tsutsui vertices}\label{Section: Approximation}

The function of the controllable filter devices designed in the previous sections is based on the threshold resonance effects in quantum star graphs with ``exotic'' couplings in the vertices, namely those of the F\"ul\"op--Tsutsui type. It should be emphasized that standard vertex couplings (the free and the $\delta$-coupling) would not work that way. It is 
therefore essential, for the proposed designs to be viable, that the required F\"ul\"op--Tsutsui vertices can be 
physically realized. 
This problem has been addressed in~\cite{CET10} and \cite{CT10}, where it was proved that any F\"ul\"op--Tsutsui coupling
can be approximately constructed by assembling a few short lines in a web with $\delta$-couplings in the vertices and, in some cases, certain vector potentials on the lines. It essentially solves the problem, because the $\delta$-couplings themselves have a simple physical interpretation and can be well approximated by steep smooth potentials \cite{Ex96b}.

In this section we revisit the results of~\cite{CET10} and \cite{CT10}, proposing a construction which is simpler, and, therefore, easier to be experimentally realized. The new construction uses less $\delta$-couplings than the scheme from~\cite{CET10}, and at the same time, in contrast to the scheme from~\cite{CT10}, it requires no vector potentials in case that the matrix $T$ from the boundary conditions~\eqref{FT} is real.

The approximation is constructed in the following way, cf. Figure~\ref{Scheme}:

\begin{figure}[h]
\begin{center}
\includegraphics[width=3.5cm]{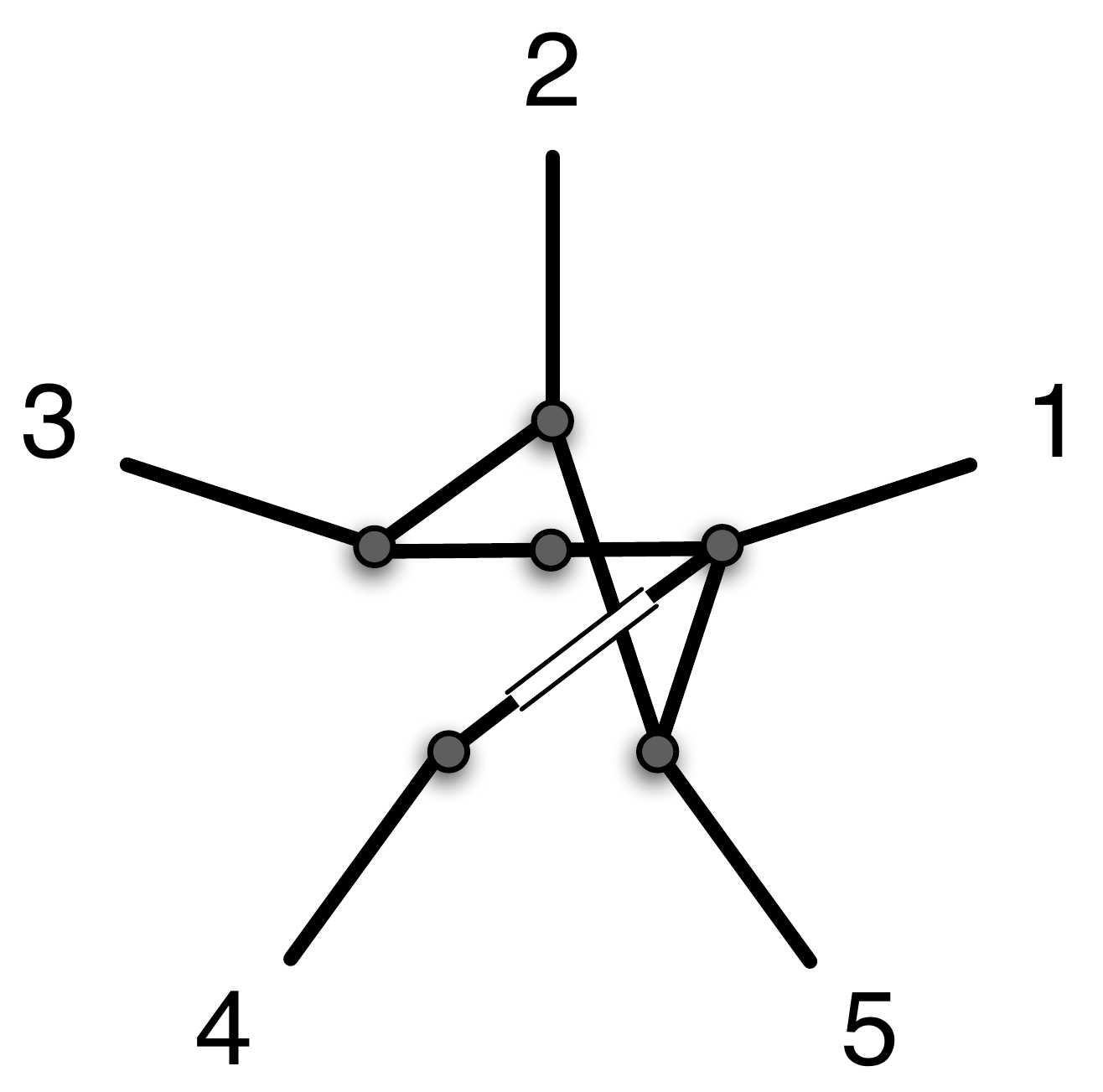}
\end{center}
\caption{An example of the approximation for $n=5$. Certain half lines are connected by short lines (in this example 2--3, 2--5, \ldots), other are without a direct link (e.g., 1--2, 4--5). The bullets represent $\delta$~potentials, the double line symbolizes a vector potential. The connecting lines are generally of different lengths, and each of them carries either a $\delta$-interactions (cf. link 1--3), or a vector potential (cf. 1--4), or none of them (cf. 1--5, 2--3, 2--5), depending on the parameters of the approximated coupling.}
\label{Scheme}
\end{figure}

\begin{itemize}
\item Take $n$ halflines, each parametrized by
$x\in[0,+\infty)$, with the endpoints denoted as $v_j$, $j=1,\ldots,n$.
\item For certain pairs $\{j,\ell\}$ ($j,\ell\in\{1,\ldots,n\}$), join halfline
endpoints $v_j,v_\ell$ by a connecting line of the length $d/\gamma_{j,\ell}$,
where $d$ is a length parameter and $1/\gamma_{j,\ell}$ is a (positive) length coefficient.
The pairs $\{j,\ell\}$ such that the vertices $v_j,v_\ell$ are connected, as well as the values of $\gamma_{j,\ell}$, will be specified later. If the endpoints $v_j,v_\ell$ are not connected, we may formally set $\gamma_{j,\ell}=0$, which represents the ``infinite length'' $d/\gamma_{j,\ell}=\infty$. Note that $\gamma_{j,\ell}=\gamma_{\ell,j}$ for all $j,\ell$. The center of the line connecting $v_j$ and $v_\ell$ will be denoted by $w_{j,\ell}$ (therefore, $w_{j,\ell}\equiv w_{\ell,j}$).

\item Put a $\delta$-coupling with the parameter $\alpha_j(d)$ at the vertex $v_j$ for each $j=1,..., n$.
\item At certain points $w_{j,\ell}$ (to be specified later) place a $\delta$-interaction
with a parameter $\beta_{j,\ell}(d)\neq0$, at the remaining points $w_{j,\ell}$ formally set $\beta_{j,\ell}(d)=0$. (Note that $\beta_{j,\ell}(d)=\beta_{\ell,j}(d)$.)
\item On certain connecting lines not carrying a $\delta$-interaction (i.e., those with $\beta_{j,\ell}(d)=0$) put a constant vector
potential supported by the interval $\left[d/(4\gamma_{j,\ell}),3d/(4\gamma_{j,\ell})\right]$. Its strength will be denoted by $A_{(j,\ell)}(d)$ (if the variable $x$ on the line grows from $v_j$ to $v_\ell$) or $A_{(\ell,j)}(d)$ (if the variable $x$ grows from $v_\ell$ to $v_j$). Both orientation are allowed, and obviously $A_{(\ell,j)}(d)=-A_{(j,\ell)}(d)$. On the remaining lines formally set $A_{(j,\ell)}(d)=0$.
\end{itemize}
In the rest of the section we will specify the dependence of $\gamma_{j,\ell}$, $\alpha_j(d)$, $\beta_{j,\ell}(d)$ and $A_{(j,\ell)}(d)$ on the matrix $T$ and on the length parameter $d$. In order to simplify the notation, we will write $\alpha_j$ instead of $\alpha_j(d)$; the same applies to $\beta_{j,\ell}$, $A_{(j,\ell)}$. We will also use symbol $N_j$ standing for the set containing indices of all the lines that are joined to the $j$-th one, i.e. $\ell\in N_j\Leftrightarrow \gamma_{j,\ell}\neq0$.

For all $j\in\{1,\ldots,n\}$, the wave function on the $j$-th half line is denoted by the symbol $\psi_j(x)$, $x\in[0,\infty)$. The wave function on the line connecting the points $v_j$ and $v_\ell$ is denoted by $\varphi_{(j,\ell)}$, $x\in[0,d/\gamma_{j,\ell}]$. Obviously, one has to take the line orientation into account. We adopt the following convention: If the value $0$ corresponds to the endpoint $v_j$ and the value $d/\gamma_{j,\ell}$ to the endpoint $v_\ell$, then the wave function is denoted by $\varphi_{(j,\ell)}$. In the case of the opposite direction, the wave function is denoted by $\varphi_{(\ell,j)}$. Therefore, $\varphi_{(\ell,j)}(x)=\varphi_{(j,\ell)}\left(\frac{d}{\gamma_{j,\ell}}-x\right)$ for all $x\in\left[0,\frac{d}{\gamma_{j,\ell}}\right]$.

Now we can proceed to determining $\gamma_{j,\ell}$, $\alpha_j(d)$, $\beta_{j,\ell}(d)$ and
$A_{(j,\ell)}(d)$.
At first, let us express the effect of the $\delta$-couplings involved. The $\delta$-coupling at the endpoint of the $j$-th half line ($j=1,\ldots, n$) implies
\begin{subequations}\label{II.serie}
\begin{gather}
\psi_j(0)=\varphi_{(j,\ell)}(0)=\varphi_{(\ell,j)}\left(\frac{d}{\gamma_{j,\ell}}\right)\quad \text{for all}\; \ell\in N_j\,, \label{II.serie.a}
\\
\psi_j'(0)+\sum_{\ell\in N_j}\varphi_{(j,\ell)}'(0)=\alpha_j\cdot\psi_j(0)\,, \label{II.serie.b}
\end{gather}
\end{subequations}
and the $\delta$-interaction in the point $w_{j,\ell}$ means
\begin{subequations}\label{II.serie..}
\begin{gather}
\varphi_{(j,\ell)}\left(\left(\frac{d}{2\gamma_{j,\ell}}\right)_-\right)=\varphi_{(j,\ell)}\left(\left(\frac{d}{2\gamma_{j,\ell}}\right)_+\right)=:\varphi_{(j,\ell)}\left(\frac{d}{2\gamma_{j,\ell}}\right)\quad \text{for all}\; \ell\in N_j\,, \label{II.serie..a}
\\
\varphi_{(j,\ell)}'\left(\left(\frac{d}{2\gamma_{j,\ell}}\right)_+\right)-\varphi_{(j,\ell)}'\left(\left(\frac{d}{2\gamma_{j,\ell}}\right)_-\right)=\beta_{j,\ell}\cdot\varphi_{(j,\ell)}\left(\frac{d}{2\gamma_{j,\ell}}\right)\,, \label{II.serie..b}
\end{gather}
\end{subequations}
where $\varphi(x_+)$ and $\varphi(x_-)$ denote the right-sided limit and the left-sided limit, respectively, of $\varphi$ at $x$.

Further relations which will help us to find the parameters come from Taylor expansions. Consider first the
case without the vector potentials (the effect of the potential will be taken into account later). We have
\begin{eqnarray}\label{bez pot.}
&& \begin{aligned}
&\varphi^0_{(j,\ell)}\left(\frac{d}{2\gamma_{j,\ell}}\right)
=\varphi^0_{(j,\ell)}(0)+\frac{d}{2\gamma_{j,\ell}}\, \varphi^{0\prime}_{(j,\ell)}(0)+\O(d^2)\,,\\
&\varphi^{0\prime}_{(j,\ell)}\left(\left(\frac{d}{2\gamma_{j,\ell}}\right)_-\right)
=\varphi^{0\prime}_{(j,\ell)}(0)+\O(d)\,,
\end{aligned}
\end{eqnarray}
\begin{eqnarray}\label{bez pot..}
&& \begin{aligned}
&\varphi^0_{(j,\ell)}\left(\frac{d}{\gamma_{j,\ell}}\right)
=\varphi^0_{(j,\ell)}\left(\frac{d}{2\gamma_{j,\ell}}\right)+\frac{d}{2\gamma_{j,\ell}}\, \varphi^{0\prime}_{(j,\ell)}\left(\left(\frac{d}{2\gamma_{j,\ell}}\right)_+\right)+\O(d^2)\,,\\
&\varphi^{0\prime}_{(j,\ell)}\left(\frac{d}{\gamma_{j,\ell}}\right)
=\varphi^{0\prime}_{(j,\ell)}\left(\left(\frac{d}{2\gamma_{j,\ell}}\right)_+\right)+\O(d)\,,
\end{aligned}
\end{eqnarray}
where the superscript $0$ refers to zero vector potential on the short line connecting $v_j$ and $v_\ell$.
If the vector potential on a connecting line is nonzero, the relations \eqref{bez pot.} and \eqref{bez pot..} are changed according to the following lemma.
\begin{lemma}\label{potencial}
Let $\lambda\in\R$, $s,t\in\R$, $m\in\R\backslash\{0\}$.
Let $\psi^0_{s,t}$ be the solution of the initial value problem
\begin{equation}
-\frac{\hbar^2}{2m}\frac{\d^2}{\d x^2}\psi=\lambda\psi\,,\qquad \psi(0)=s,\quad {\psi}'(0)=t\,,
\end{equation}
and $\psi^A_{s,t}$ be the solution of the initial value problem
\begin{equation}
\frac{1}{2m}\left(-\i\hbar\frac{\d}{\d x}-qA(x)\right)^2\psi=\lambda\psi\,,\qquad \psi(0)=s,\quad {\psi}'(0)=t \, ,
\end{equation}
for a certain $q\in\R$ and a function $A:\R\to\R$ such that $0\notin\overline{\mathrm{supp}(A)}$, where $\overline{\mathrm{supp}(A)}$ denotes the closure of the support of $A$.
Then
\begin{equation}
\psi^A_{s,t}(x)=\e^{i\frac{q}{\hbar}\int_0^x A(\xi)\d\xi}\cdot\psi^0_{s,t}(x)
\qquad\text{for all } x\in\R\,.
\end{equation}
\end{lemma}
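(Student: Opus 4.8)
The plan is to prove the statement by exhibiting the gauge transformation explicitly: I will check that the function
$\tilde\psi(x):=\e^{\i\frac{q}{\hbar}\int_0^x A(\xi)\,\d\xi}\cdot\psi^0_{s,t}(x)$
solves the \emph{same} initial value problem as $\psi^A_{s,t}$, and then conclude by uniqueness of the solution. First I would set $g(x):=\int_0^x A(\xi)\,\d\xi$; since $A$ is bounded, $g$ is Lipschitz with $g'=A$ almost everywhere, and — this is where the hypothesis enters — $0\notin\overline{\mathrm{supp}(A)}$ provides an $\varepsilon>0$ with $A\equiv0$ on $(-\varepsilon,\varepsilon)$, hence $g\equiv0$ and $\e^{\i(q/\hbar)g}\equiv1$ on $(-\varepsilon,\varepsilon)$.

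The core of the argument is the elementary \emph{covariant-derivative identity}: for every smooth $f$,
\[
\left(-\i\hbar\frac{\d}{\d x}-qA\right)\!\left[\e^{\i(q/\hbar)g}f\right]=\e^{\i(q/\hbar)g}\left(-\i\hbar f'\right),
\]
which follows from the product rule (valid for absolutely continuous functions, using $g'=A$ a.e.), because the $A$-term produced by differentiating the exponential exactly cancels the $-qA$ term. Applying this identity once with $f=\psi^0_{s,t}$ — which is smooth and obeys $-\frac{\hbar^2}{2m}f''=\lambda f$ — gives $(-\i\hbar\frac{\d}{\d x}-qA)\tilde\psi=-\i\hbar\,\e^{\i(q/\hbar)g}f'$, and since the right-hand side is again ``exponential times smooth'', a second application yields $(-\i\hbar\frac{\d}{\d x}-qA)^2\tilde\psi=-\hbar^2\,\e^{\i(q/\hbar)g}f''$, so that $\frac1{2m}(-\i\hbar\frac{\d}{\d x}-qA)^2\tilde\psi=\e^{\i(q/\hbar)g}\bigl(-\frac{\hbar^2}{2m}f''\bigr)=\lambda\tilde\psi$. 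Thus $\tilde\psi$ solves the magnetic equation. For the initial data I would use that $\e^{\i(q/\hbar)g}\equiv1$ near $0$, so $\tilde\psi\equiv f$ on $(-\varepsilon,\varepsilon)$, whence $\tilde\psi(0)=f(0)=s$ and $\tilde\psi'(0)=f'(0)=t$. Finally, uniqueness for the IVP (the magnetic Schrödinger equation, written as a first-order linear system for $\psi$ and its covariant derivative, has bounded coefficients, so Carathéodory's theorem applies) forces $\tilde\psi=\psi^A_{s,t}$, which is the assertion.

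The only point demanding care — and the main, rather mild, obstacle — is the low regularity of $A$: in the intended applications $A$ is piecewise constant, so $A'$ contains $\delta$'s and $\tilde\psi$ is merely $H^1_{\mathrm{loc}}$ rather than $C^2$. I would handle this by working throughout with the absolutely continuous representatives and noting that the \emph{covariant} derivative $(-\i\hbar\frac{\d}{\d x}-qA)\tilde\psi=-\i\hbar\,\e^{\i(q/\hbar)g}f'$ is actually smooth, so the second differentiation is classical; equivalently, one verifies the two displayed identities in the distributional sense. No genuinely new idea is needed beyond this bookkeeping, and the length parameters $d,\gamma_{j,\ell}$ play no role in the proof — they only fix which interval carries $A$.
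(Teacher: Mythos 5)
Your proof is correct: the covariant-derivative identity, the use of $0\notin\overline{\mathrm{supp}(A)}$ to match both initial data, and the reduction to a first-order system with bounded measurable coefficients for uniqueness are all sound, and you rightly flag that only the regularity bookkeeping needs care. The paper itself gives no argument here (it only cites earlier works), and your gauge-transformation proof is exactly the standard one those references employ, so this is essentially the intended approach.
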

The statement can be proved easily, cf. \cite{SMMC99} or \cite{CET10}. Lemma~\ref{potencial} implies that the constant vector potential $A$ supported by the interval $[L_1,L_2]$ shifts the phase of the wave function and of its derivative at $x>L_2$ by $\i qA(L_2-L_1)/\hbar$. 
Therefore, equations~\eqref{bez pot.} and \eqref{bez pot..} are transformed into
\begin{subequations}\label{III.serie}
\begin{align}
&\varphi_{(j,\ell)}\left(\frac{d}{2\gamma_{j,\ell}}\right) =\e^{\i \frac{d}{4\gamma_{j,\ell}}\frac{qA_{(j,\ell)}}{\hbar}}\left(\varphi_{(j,\ell)}(0)+\frac{d}{2\gamma_{j,\ell}}\, \varphi_{(j,\ell)}'(0)\right)+\O(d^2)\,, \label{III.serie.a} \\
&\varphi_{(j,\ell)}'\left(\left(\frac{d}{2\gamma_{j,\ell}}\right)_-\right)=\e^{\i \frac{d}{4\gamma_{j,\ell}}\frac{qA_{(j,\ell)}}{\hbar}}\cdot\varphi_{(j,\ell)}'(0)+\O(d)\,, \label{III.serie.b}
\end{align}
\end{subequations}
\begin{subequations}\label{III.serie..}
\begin{align}
&\varphi_{(j,\ell)}\left(\frac{d}{\gamma_{j,\ell}}\right) =\e^{\i \frac{d}{4\gamma_{j,\ell}}\frac{qA_{(j,\ell)}}{\hbar}}\left(\varphi_{(j,\ell)}\left(\frac{d}{2\gamma_{j,\ell}}\right)+\frac{d}{2\gamma_{j,\ell}}\, \varphi_{(j,\ell)}'\left(\left(\frac{d}{2\gamma_{j,\ell}}\right)_+\right)\right)+\O(d^2)\,, \label{III.serie..a}\\
&\varphi_{(j,\ell)}'\left(\frac{d}{\gamma_{j,\ell}}\right)=\e^{\i \frac{d}{4\gamma_{j,\ell}}\frac{qA_{(j,\ell)}}{\hbar}}\cdot\varphi_{(j,\ell)}'\left(\left(\frac{d}{2\gamma_{j,\ell}}\right)_+\right)+\O(d)\,. \label{III.serie..b}
\end{align}
\end{subequations}

Equations \eqref{II.serie}, \eqref{II.serie..}, \eqref{III.serie}
and~\eqref{III.serie..} connect values of
the wave functions and their derivatives at all the vertices $v_j$, $j=1, ..., n$. In the next step we
eliminate the terms with the ``auxiliary'' functions
$\varphi_{(j,\ell)}$.
To do so, we at first use equations~\eqref{II.serie..b} and \eqref{III.serie.b} to obtain
\begin{multline}
\varphi_{(j,\ell)}'\left(\left(\frac{d}{2\gamma_{j,\ell}}\right)_+\right)=\varphi_{(j,\ell)}'\left(\left(\frac{d}{2\gamma_{j,\ell}}\right)_-\right)+\beta_{j,\ell}\cdot\varphi_{(j,\ell)}\left(\frac{d}{2\gamma_{j,\ell}}\right)
\\
=\e^{\i \frac{d}{4\gamma_{j,\ell}}\frac{qA_{(j,\ell)}}{\hbar}}\cdot\varphi_{(j,\ell)}'(0)+\beta_{j,\ell}\cdot\varphi_{(j,\ell)}\left(\frac{d}{2\gamma_{j,\ell}}\right)+\O(d)
\,.
\end{multline}
Then equation~\eqref{III.serie..a} gives
\begin{multline}\label{fijl}
\varphi_{(j,\ell)}\left(\frac{d}{\gamma_{j,\ell}}\right)
=\e^{\i \frac{d}{4\gamma_{j,\ell}}\frac{qA_{(j,\ell)}}{\hbar}}\left(\varphi_{(j,\ell)}\left(\frac{d}{2\gamma_{j,\ell}}\right)+\frac{d}{2\gamma_{j,\ell}}\, \left(\e^{\i \frac{d}{4\gamma_{j,\ell}}\frac{qA_{(j,\ell)}}{\hbar}}\cdot\varphi_{(j,\ell)}'(0)+\beta_{j,\ell}\cdot\varphi_{(j,\ell)}\left(\frac{d}{2\gamma_{j,\ell}}\right)\right)\right)+\O(d^2)
\\
=
\e^{\i \frac{d}{4\gamma_{j,\ell}}\frac{qA_{(j,\ell)}}{\hbar}}\left(\left(1+\frac{d}{2\gamma_{j,\ell}}\beta_{j,\ell}\right)\varphi_{(j,\ell)}\left(\frac{d}{2\gamma_{j,\ell}}\right)+
\frac{d}{2\gamma_{j,\ell}}\, \e^{\i \frac{d}{4\gamma_{j,\ell}}\frac{qA_{(j,\ell)}}{\hbar}}\cdot\varphi_{(j,\ell)}'(0)\right)+\O(d^2)\,.
\end{multline}
We substitute 
for $\varphi_{(j,\ell)}(d/(2\gamma_{j,\ell}))$ in equation~\eqref{fijl} from equation~\eqref{III.serie.a}, and find
\begin{equation}
\varphi_{(j,\ell)}\left(\frac{d}{\gamma_{j,\ell}}\right)=
\e^{\i \frac{d}{2\gamma_{j,\ell}}\frac{qA_{(j,\ell)}}{\hbar}}\left(\left(1+\frac{d}{2\gamma_{j,\ell}}\beta_{j,\ell}\right)\varphi_{(j,\ell)}(0)+
\frac{d}{2\gamma_{j,\ell}}\left(2+\frac{d}{2\gamma_{j,\ell}}\beta_{j,\ell}\right)\cdot\varphi_{(j,\ell)}'(0)\right)+\O(d^2)\,,
\end{equation}
then we apply equation~\eqref{II.serie.a} to rewrite the equation above in the form
\begin{equation}
\psi_\ell(0)=
\e^{\i \frac{d}{2\gamma_{j,\ell}}\frac{qA_{(j,\ell)}}{\hbar}}\left(\left(1+\frac{d}{2\gamma_{j,\ell}}\beta_{j,\ell}\right)\psi_j(0)+
\frac{d}{2\gamma_{j,\ell}}\left(2+\frac{d}{2\gamma_{j,\ell}}\beta_{j,\ell}\right)\cdot\varphi_{(j,\ell)}'(0)\right)+\O(d^2)\,,
\end{equation}
hence we obtain
\begin{equation}
d\varphi_{(j,\ell)}'(0)=
-2\gamma_{j,\ell}\frac{1+\frac{d}{2\gamma_{j,\ell}}\beta_{j,\ell}}{2+\frac{d}{2\gamma_{j,\ell}}\beta_{j,\ell}}\psi_j(0)+
\e^{-\i \frac{d}{2\gamma_{j,\ell}}\frac{qA_{(j,\ell)}}{\hbar}}\frac{2\gamma_{j,\ell}}{2+\frac{d}{2\gamma_{j,\ell}}\beta_{j,\ell}}\cdot\psi_\ell(0)+\O(d^2)\,,
\end{equation}
and finally we sum the last equation over $k$, which yields
\begin{equation}\label{suma}
d\sum_{\ell=1}^n\varphi_{(j,\ell)}'(0)=-\sum_{\ell=1}^n 2\gamma_{j,\ell}\frac{1+\frac{d}{2\gamma_{j,\ell}}\beta_{j,\ell}}{2+\frac{d}{2\gamma_{j,\ell}}\beta_{j,\ell}}\psi_j(0)
+\e^{-\i \frac{d}{2\gamma_{j,\ell}}\frac{qA_{(j,\ell)}}{\hbar}}\frac{2\gamma_{j,\ell}}{2+\frac{d}{2\gamma_{j,\ell}}\beta_{j,\ell}}\cdot\psi_\ell(0)+\O(d^2)\,.
\end{equation}
According to equation \eqref{II.serie.b}, the LHS of equation~\eqref{suma} is equal to $-d\psi_j'(0)+d\alpha_j\psi_j(0)$. This fact allows us to eliminate the term $\sum_{\ell=1}^n\varphi_{(j,\ell)}'(0)$ from equation~\eqref{suma}, which leads to
\begin{equation}\label{j-ty radek}
d\psi_j'(0)=\left(d\alpha_j+\sum_{\ell=1}^n 2\gamma_{j,\ell}\frac{1+\frac{d}{2\gamma_{j,\ell}}\beta_{j,\ell}}{2+\frac{d}{2\gamma_{j,\ell}}\beta_{j,\ell}}\right)\psi_j(0)-
\sum_{\ell=1}^n\,\e^{-\i \frac{d}{2\gamma_{j,\ell}}\frac{qA_{(j,\ell)}}{\hbar}}\frac{2\gamma_{j,\ell}}{2+\frac{d}{2\gamma_{j,\ell}}\beta_{j,\ell}}\cdot\psi_\ell(0)
+\O(d^2)
\end{equation}
for all $j=1,\ldots,n$.

Our objective is to choose $\alpha_j(d)$, $\gamma_{j,\ell}$, $\beta_{j,\ell}(d)$ and $A_{(j,\ell)}(d)$ in such a way that the system of equations~\eqref{j-ty radek} with $j=1,\ldots,n$ is equivalent to the boundary conditions~\eqref{FT} in the limit $d\to0$.
It is convenient to adopt the following convention on a shift of the column indices of $T$:
\begin{notation}\label{indT}
The lines of the matrix $T$ are indexed from 1 to $r$, the columns
are indexed from $r+1$ to $n$.
\end{notation}
Taking this convention into account, we rewrite the boundary conditions~\eqref{FT}
as a system of $n$ equations,
\begin{subequations}\label{FTFT}
\begin{gather}
\psi_j'(0)+\sum_{\ell=r+1}^n t_{j\ell}\psi'_\ell(0)
=0 \qquad j=1,\ldots,r\,, \label{j<=m} \\
0=-\sum_{\ell=1}^r \overline{t_{\ell j}}\psi_\ell(0)+\psi_j(0)
\qquad j=r+1,\ldots,n \label{j>=m+1} \,.
\end{gather}
\end{subequations}
The rows of the system~\eqref{FTFT} are of two types. We start with the type \eqref{j>=m+1}, corresponding to
$j\geq r+1$. We see immediately that if the parameters $\alpha_j$, $\gamma_{j,\ell}$, $\beta_{j,\ell}$ and $A_{(j,\ell)}$ are chosen such that
\begin{gather}
\gamma_{j,\ell}=0\qquad\text{for $\ell\geq r+1$}\,, \label{implicitni_g}\\
\e^{-\i \frac{d}{2\gamma_{j,\ell}}\frac{qA_{(j,\ell)}}{\hbar}}\frac{2\gamma_{j,\ell}}{2+\frac{d}{2\gamma_{j,\ell}}\beta_{j,\ell}}=\overline{t_{\ell j}} \qquad \text{for $\ell\leq r$}\,, \label{implicitni_b}\\
d\alpha_j+\sum_{\ell=1}^n 2\gamma_{j,\ell}-\sum_{\ell=1}^n\frac{2\gamma_{j,\ell}}{2+\frac{d}{2\gamma_{j,\ell}}\beta_{j,\ell}}=1\,, \label{implicitni_c}
\end{gather}
then \eqref{j-ty radek} tends to
\begin{equation}\label{j>=m+1.}
0=\psi_j(0)-\sum_{\ell=1}^r \overline{t_{\ell j}}\psi_\ell(0) \qquad \text{for every } j=r+1,\ldots, n \, ,
\end{equation}
in the limit $d\to0$, and, therefore, equations~\eqref{j-ty radek} for $j\geq r+1$ are equivalent to equations~\eqref{j>=m+1}.
It suffices to find a solution of the system of equations~\eqref{implicitni_g}--\eqref{implicitni_c}.
Equation~\eqref{implicitni_g} implies that there is no connecting line between $v_j$ and $v_\ell$ if $\ell\geq r+1$.
If $\ell\leq r$, we use equation~\eqref{implicitni_b}; it is easy to check that it is satisfied by
\begin{subequations}\label{spoj}
\begin{align}
\gamma_{j,\ell}&=|t_{\ell j}|\,, \label{gamma} \\
A_{(j,\ell)}&=\left\{\begin{array}{cl}
0 & \text{for $t_{\ell j}\in\R$} \\
\frac{2}{d}\cdot\frac{\hbar}{q}|t_{\ell j}|\arg t_{\ell j} & \text{for $t_{\ell j}\in\C\backslash\R$}
\end{array}\right. \label{A} 
\, ,\\
\beta_{j,\ell}&=\left\{\begin{array}{cl}
0 & \text{for $t_{\ell j}>0 \vee t_{\ell j}\notin\R$} \\
-\frac{8}{d}|t_{\ell j}| & \text{for $t_{\ell j}<0$}
\end{array}\right. \label{beta}
\, .
\end{align} 
\end{subequations}

\begin{remark}\label{delta xor vector}
For every $j\geq r+1$, the point $v_j$ is connected to a point $v_\ell$ if and only if $\ell\leq r$ and, moreover, $t_{\ell j}\neq0$. Furthermore, if $v_j$ and $v_\ell$ are connected, then the $\delta$-interaction in $w_{j,\ell}$ is present iff $t_{\ell j}<0$, and the vector potential on the short line connecting $v_j$ and $v_\ell$ is present iff $t_{\ell j}\notin\R$. Consequently, none of the short lines connected to $v_j$, $j\geq r+1$, carries both the $\delta$-coupling and a vector potential.
\end{remark}

Finally, we use equation~\eqref{implicitni_c} together with~\eqref{spoj} to find $\alpha_j$:
\begin{equation}\label{alpha}
\alpha_j=\frac{1}{d}\left(1-\sum_{k=1}^r |t_{kj}|+2\sum_{\{k\leq r|t_{kj}<0\}} t_{kj}\right)\,.
\end{equation}

The case $j\geq r+1$ is now solved. Let us proceed to the case $j\leq r$. We
substitute for $\psi_\ell'(0)$ in the left-hand side of equation~\eqref{j<=m} from equation~\eqref{j-ty radek}, and in this way we get
\begin{multline}\label{j-ty radek1}
\psi_j'(0)+\sum_{k=r+1}^n t_{jk}\cdot\psi_k'(0)
\\
=\left(\alpha_j+\frac{1}{d}\sum_{\ell=1}^n 2\gamma_{j,\ell}\frac{1+\frac{d}{2\gamma_{j,\ell}}\beta_{j,\ell}}{2+\frac{d}{2\gamma_{j,\ell}}\beta_{j,\ell}}\right)\psi_j(0)
-\frac{1}{d}\sum_{\ell=1}^n\e^{-\i \frac{d}{2\gamma_{j,k}}\frac{qA_{(j,\ell)}}{\hbar}}\frac{2\gamma_{j,\ell}}{2+\frac{d}{2\gamma_{j,\ell}}\beta_{j,\ell}}\psi_\ell(0)+\\
+\sum_{k=r+1}^n
t_{jk}\left(\left(\alpha_k+\frac{1}{d}\sum_{\ell\in N_k}2\gamma_{l\ell}\frac{1+\frac{d}{2\gamma_{k,\ell}}\beta_{k,\ell}}{2+\frac{d}{2\gamma_{k,\ell}}\beta_{k,\ell}}\right)\psi_k(0)
-\frac{1}{d}\sum_{\ell=1}^n\e^{-\i \frac{d}{2\gamma_{k,\ell}}\frac{qA_{(k,\ell)}}{\hbar}}\frac{2\gamma_{k,\ell}}{2+\frac{d}{2\gamma_{k,\ell}}\beta_{k,\ell}}\psi_\ell(0)\right)+\O(d)\,.
\end{multline}
Then we substitute all the already derived parameters from equations~\eqref{implicitni_g}, \eqref{spoj} and \eqref{alpha} into equation~\eqref{j-ty radek1}. After a manipulation we arrive at
\begin{multline}\label{j-ty radek11}
\psi_j'(0)+\sum_{k=r+1}^n t_{jk}\cdot\psi_k'(0)
\\
=\left(\alpha_j+\frac{1}{d}\sum_{\ell=1}^r 2\gamma_{j,\ell}\frac{1+\frac{d}{2\gamma_{j,\ell}}\beta_{j,\ell}}{2+\frac{d}{2\gamma_{j,\ell}}\beta_{j,\ell}}+\frac{1}{d}\sum_{k=r+1}^n |t_{jk}|-\frac{2}{d}\sum_{\{k>r|t_{jk}<0\}} t_{jk}-\frac{1}{d}\sum_{k=r+1}^n |t_{jk}|^2
\right)\psi_j(0)\\
-\frac{1}{d}\sum_{\ell\leq r,\ell\neq j}\left(\e^{-\i \frac{d}{2\gamma_{j,\ell}}\frac{qA_{(j,\ell)}}{\hbar}}\frac{2\gamma_{j,\ell}}{2+\frac{d}{2\gamma_{j,\ell}}\beta_{j,\ell}}
+\sum_{k=r+1}^nt_{jk}\overline{t_{\ell k}}\right)\psi_\ell(0)+\O(d)\,.
\end{multline}
Similarly as in the case $j\geq r+1$, we require that the system~\eqref{j-ty radek11} is equivalent to the system~\eqref{j<=m} in the limit $d\to0$. In this way we obtain a system of (sufficient) conditions on the sought parameters $\alpha_j$, $\gamma_{j\ell}$ and $A_{(j,\ell)}$ ($j\leq r$), namely
\begin{gather}
\e^{-\i \frac{d}{2\gamma_{j,\ell}}\frac{qA_{(j,\ell)}}{\hbar}}\frac{2\gamma_{j,\ell}}{2+\frac{d}{2\gamma_{j,\ell}}\beta_{j,\ell}}
+\sum_{k=r+1}^nt_{jk}\overline{t_{\ell k}}=0 \qquad \text{for } \ell\leq r\,, \label{g,b,A} \\
\alpha_j+\frac{1}{d}\sum_{\ell=1}^r 2\gamma_{j,\ell}\frac{1+\frac{d}{2\gamma_{j,\ell}}\beta_{j,\ell}}{2+\frac{d}{2\gamma_{j,\ell}}\beta_{j,\ell}}+\frac{1}{d}\sum_{k=r+1}^n |t_{jk}|-\frac{2}{d}\sum_{\{k>r|t_{jk}<0\}} t_{jk}-\frac{1}{d}\sum_{k=r+1}^n |t_{jk}|^2=0\,. \label{a}
\end{gather}
Hence we derive explicit expressions for $\alpha_j$, $\gamma_{j,\ell}$, $\beta_{j,\ell}$ and $A_{(j,\ell)}$ ($j\leq r$). From~\eqref{g,b,A} we have
\begin{equation}\label{gamma j<=m}
\gamma_{j,\ell}=\sum_{k=r+1}^nt_{jk}\overline{t_{\ell k}}=(TT^*)_{j\ell}\,,
\end{equation}
therefore, there is a connecting line between $v_j$ and $v_\ell$ if and only if $(TT^*)_{j\ell}\neq0$.
Furthermore, for $(TT^*)_{j\ell}\neq0$, we use equation~\eqref{g,b,A} to determine $\beta_{j,\ell}$ and $A_{(j,\ell)}=0$:
\begin{subequations}\label{beta,A j<=m}
\begin{itemize}
\item If $(TT^*)_{j\ell}<0$, then
\begin{equation}
\beta_{j,\ell}=0\,, \quad A_{(j,\ell)}=0\,.
\end{equation}
\item If $(TT^*)_{j\ell}>0$, then
\begin{equation}
\beta_{j,\ell}=-\frac{8}{d}(TT^*)_{j\ell}\,, \quad A_{(j,\ell)}=0\,.
\end{equation}
\item If $(TT^*)_{j\ell}\notin\R$, then
\begin{equation}
\beta_{j,\ell}=0\,, \quad A_{(j,\ell)}=-\frac{2}{d}\cdot\frac{\hbar}{q}\left|(TT^*)_{j\ell}\right|\cdot\arg\left(-(TT^*)_{j\ell}\right)\,.
\end{equation}
\end{itemize}
\end{subequations}
Note that on none of the short connecting lines there is both the $\delta$-coupling and a vector potential (similarly as in the case $j\geq r+1$, cf. Remark~\ref{delta xor vector}).

Finally, we substitute $\gamma_{j,\ell}$, $\beta_{j,\ell}$ and $A_{(j,\ell)}$ found above into equation~\eqref{a}, and obtain the expression for $\alpha_j$, $j\leq r$:
\begin{equation}\label{alpha j<=m}
\alpha_j=\frac{1}{d}\left(\sum_{k=r+1}^n |t_{jk}|^2-\sum_{k=r+1}^n |t_{jk}|+\sum_{\{k>r|t_{jk}<0\}} 2t_{jk}
-\sum_{\ell=1,\ell\neq j}^r\left|(TT^*)_{j\ell}\right|-\sum_{\{\ell\leq r|\ell\neq j \wedge (TT^*)_{j\ell}>0\}}^r 2(TT^*)_{j\ell}\right)\,. \end{equation}

\subsubsection*{Example of the construction}

We conclude this section by a practical demonstration. We use the technique to construct approximations of the F\"ul\"op--Tsutsui couplings needed for the filtering devices designed in this paper.

Let us consider the coupling used in Section~\ref{Section: n=3}, corresponding to $T=\begin{pmatrix}a&b\end{pmatrix}$. Since $r=1$, there is no connecting line between the endpoints $v_2$ and $v_3$ due to equation~\eqref{implicitni_g}, cf. also Remark~\ref{delta xor vector}. If we substitute $t_{12}=a$, $t_{13}=b$ into equation~\eqref{gamma}, we obtain $\gamma_{21}=|a|$, $\gamma_{31}=|b|$, therefore, the endpoints $v_1$ and $v_2$ are connected by a short line of the length $d/|a|$, the points $v_1$ and $v_3$ by a short line of the length $d/|b|$, see Figure~\ref{fig:6a}. Since we have assumed $a,b\in\R$, we get $A_{(3,1)}=A_{(3,2)}=0$ from equation~\eqref{A}; moreover, the device is contructed for $a,b>0$, hence $\beta_{3,1}=\beta_{3,2}=0$ due to equation~\eqref{beta}. Therefore, each short line carries neither a $\delta$-interaction, nor a vector potential. The parameters of the $\delta$-couplings in the endpoints $v_2$ and $v_3$ follow from equation~\eqref{alpha}: $\alpha_2=(1-a)/d$, $\alpha_3=(1-b)/d$. Finally, the parameter of the $\delta$-coupling in the endpoint $v_1$ is given by equation~\eqref{alpha j<=m}: $\alpha_1=(a^2+b^2-a-b)/d$. The quantity $d$ appearing in the formulae is the length parameter. The small size limit $d\to 0$, together with the $\delta$~potentials parameters properly scaled according to the formulae above, effectively produces the required F\"ul\"op--Tsutsui vertex coupling.

\begin{figure}[h]
\begin{center}
  \includegraphics[width=5cm]{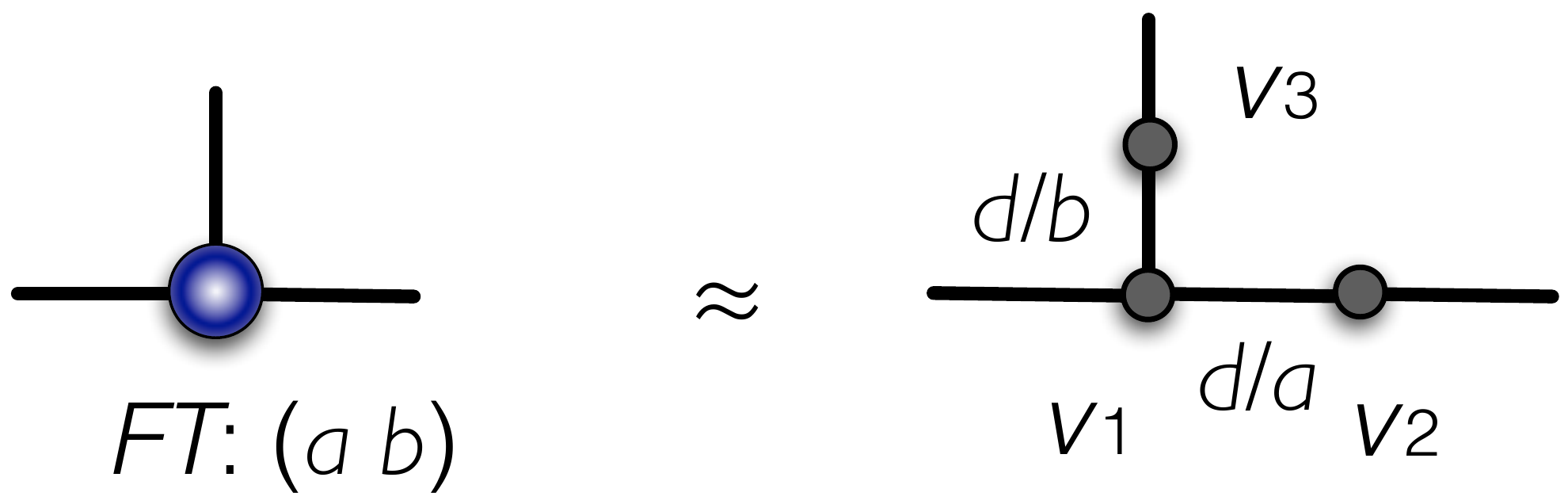}
\end{center}
  \caption[Text excluding the matrix]{Approximate construction of the F\"ul\"op--Tsutsui couplings for $n=3$ and $T=\begin{pmatrix}a&b\end{pmatrix}$. 
  For the special choice $a=1$ corresponding to the filter constructed in Sect.~\ref{Section: n=3} (cf. eq.~\eqref{a=1,b>>1}), 
  the parameters of the $\delta$~potentials in the points $v_1$, $v_2$ and $v_3$ are
  $\alpha_1=b(b-1)/d$, $\alpha_2=0$ (i.e., there is no $\delta$-interaction in the point $v_2$) and $\alpha_3=(1-b)/d$.
  }
  \label{fig:6a}
\end{figure}

The approximation of the coupling corresponding to $T=\begin{pmatrix}a&a\\a&-a\end{pmatrix}$, used in Section~\ref{Section: n=4}, can be obtained in the same way. The result is illustrated in Figure~\ref{fig:6b}.

\begin{figure}[h]
\begin{center}
  \includegraphics[width=5cm]{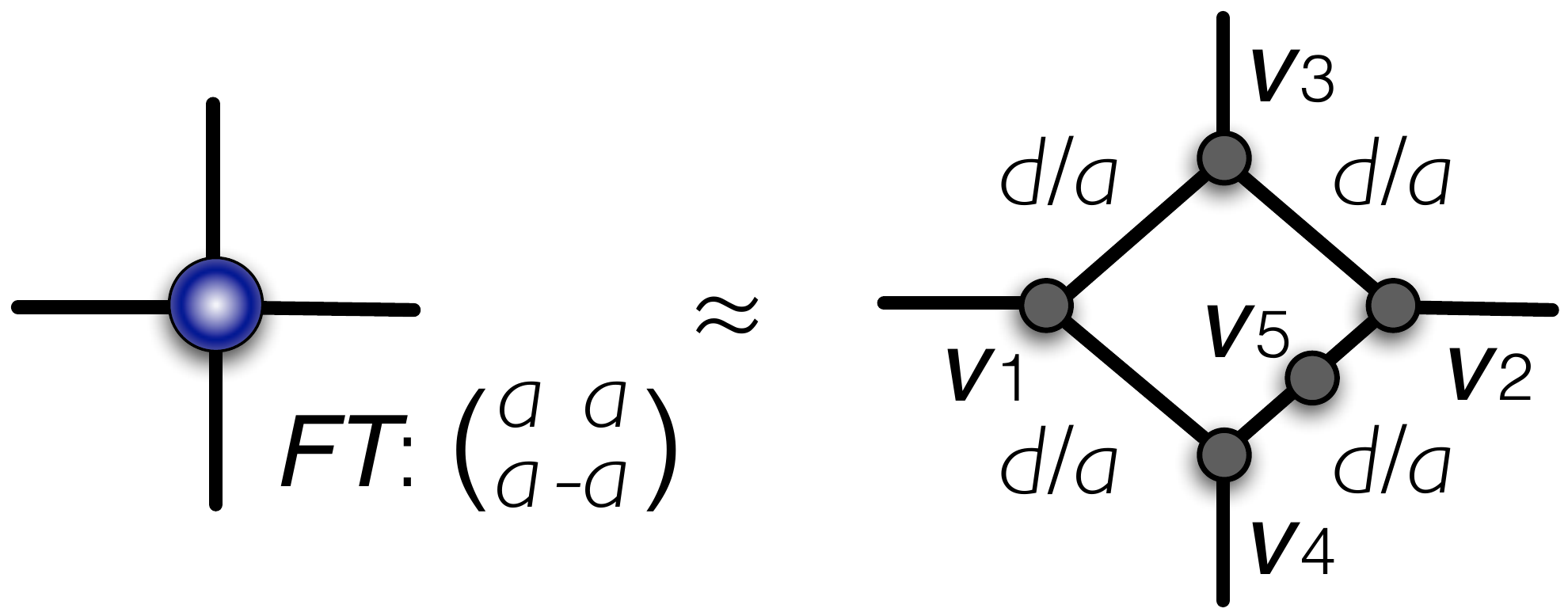}
\end{center}
  \caption[Text excluding the matrix]{Approximate construction of the F\"ul\"op--Tsutsui couplings for $n=4$ and $T=\begin{pmatrix}a&a\\a&-a\end{pmatrix}$. 
  The parameters of the $\delta$~potentials in the points $v_1$, $v_2$, $v_3$, $v_4$ and $w_{2,4}$ are $\alpha_1=2a(a-1)/d$, $\alpha_2=2a(a-2)/d$, $\alpha_3=(1-2a)/d$, $\alpha_4 =(1-4a)/d$ and $\beta_{2,4}=-8a/d$, respectively.}
  \label{fig:6b}
\end{figure}

Finally, let us briefly comment on the case when $T=aG$ for a unitary $r\times r$ matrix $G$; the corresponding F\"ul\"op--Tsutsui coupling is essential for constructing the device designed in Section~\ref{Section: n=2r}. It holds:
\begin{itemize}
\item If $j,\ell\geq r+1$, then the endpoints $v_j$ and $v_\ell$ are not connected due to equation~\eqref{implicitni_g},
\item If $j,\ell\leq r$, then the endpoints $v_j$ and $v_\ell$ are not connected, which follows from the property $TT^*=a^2 I$ and equation~\eqref{gamma j<=m}.
\end{itemize}
Consequently, the short lines connect only pairs $v_j$, $v_\ell$ such that $\ell\leq r<j$. In other words, they form a bipartite graph in which $\{v_1,\ldots,v_r\}$ and $\{v_{r+1},\ldots,v_n\}$ are independent sets of vertices, see Figure~\ref{Fig:bipartite}. In addition, if the matrix $G$ is chosen as a multiple of an Hadamard matrix, then the connecting lines are all of identical length $da/\sqrt{r}$ for $d$ being the length parameter.

\begin{figure}[h]
\begin{center}
\includegraphics[width=6cm]{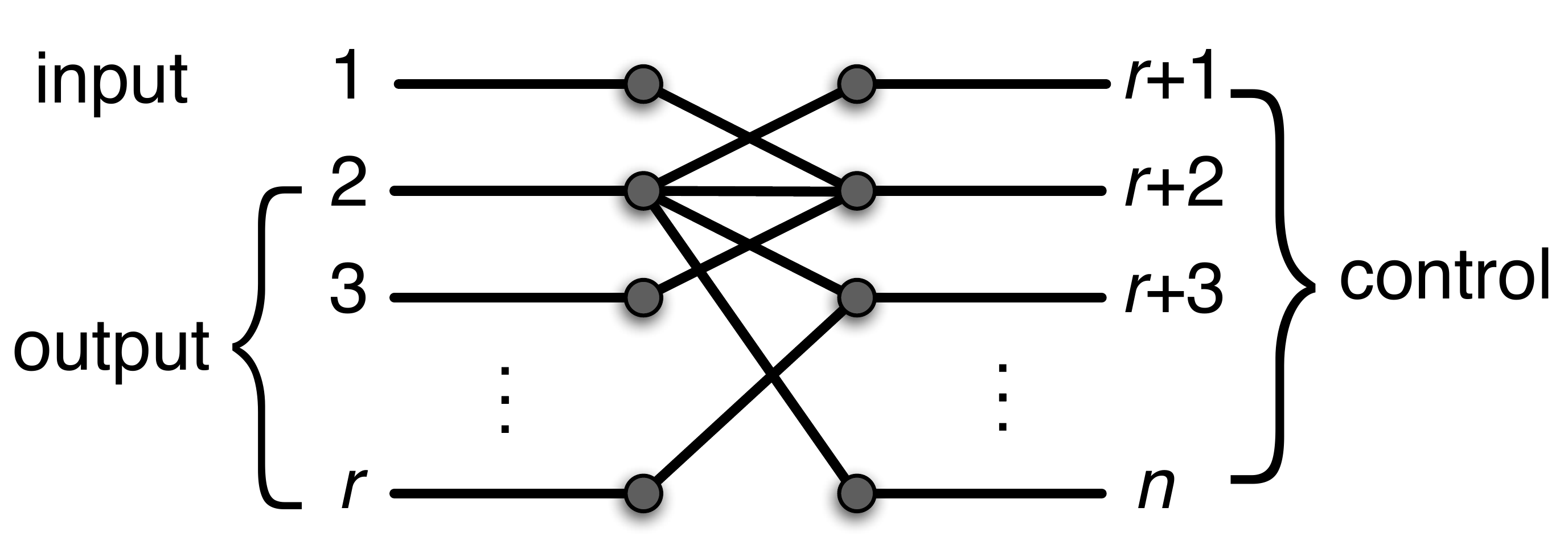}
\end{center}
\caption{Illustration of the approximate construction of the F\"ul\"op--Tsutsui couplings for $n=2r=8$ and $T=aG$, $G\in\mathrm{U}(r)$. Couplings of this type are used in Sect.~\ref{Section: n=2r}. The approximating graph in the center has a bipartite structure. The input and the output lines are on the left hand side, the controlling lines are on the right hand side. Some of the connecting lines carry a $\delta$-interaction or a vector potential, depending on the phases of $t_{j\ell}$.}
\label{Fig:bipartite}
\end{figure}


\section{Scattering matrix of the approximation}\label{Section: S-matrix}

Now we demonstrate the convergence of the approximating scheme proposed in Section~\ref{Section: Approximation}. The proof is performed in terms of the scattering matrices. We calculate the on-shell scattering matrix of the approximating graph, and show that it tends to the scattering matrix of the approximated F\"ul\"op--Tsutsui coupling in the limit $d\to0$.
This approach is different from the previous papers on this topic \cite{CET10,CT10}, where the proof was either omitted, or consisted in performing a full, but very lengthy and cumbersome, demonstration of the norm-resolvent convergence.
In addition, the convergence of the scattering matrices is proved here in a general setting when the lines coupled in the vertex carry potentials $V_1,\ldots,V_n$, which is another 
novel point of this paper.

In what follows, we denote the on-shell scattering matrix of the approximating graph by
\begin{equation}
\S_d(E;V_1,\ldots,V_n)\,.
\end{equation}
The matrix $\S_d(E;V_1,\ldots,V_n)$ will be calculated in the following three steps:
\begin{enumerate}
\item We derive relations between the on-shell boundary values $\psi_{j}(0)$ and derivatives $\psi_{j}'(0)$ by analyzing the physical properties of the internal web.
\item We substitute the final-state wave function components $\Psi^{(\ell)}_{j}(x)$, expressed in terms of $\S_d(E;V_1,\ldots,V_n)$ (cf. eq.~\eqref{psi_ji}), into the system of equations obtained in the first step.
\item We solve the system, in this way we find $\S_d(E;V_1,\ldots,V_n)$.
\end{enumerate}

\emph{Step 1.}
Let $\psi_j$ and $\varphi_{j\ell}$  ($j=1,\ldots,n$, $\ell\in N_j$) be the components of a wave function on the approximated graph, corresponding to a given energy $E$. They satisfy the boundary conditions in the vertices; our goal is to derive relations among the boundary values $\psi_{j}(0)$, $\psi_{j}'(0)$.
The $\delta$-coupling at the point $v_j$ implies
\begin{equation}\label{phipsi}
\sum_{h\in N_j}\varphi_{j h}'(0)+\psi_j'(0)=\alpha_j\psi_j(0) \qquad\text{for all $j=1,\ldots,n$}\,,
\end{equation}
where $\varphi_{j h}(x)$ is the wave function component on the line connecting $v_j$ and $v_h$, with the convention $x=0$ at $v_j$ and $x=d/\gamma_{j,h}$ at $v_h$.

We want to express $\varphi_{j h}'(0)$ in terms of $\psi_j(0)$ and $\psi_h(0)$; the result will be then substituted into equation~\eqref{phipsi}. Let us assume at first that the short line connecting $v_j$ and $v_h$ carries a $\delta$-interaction with parameter $\beta_{j,h}$ in the center. The corresponding wave function component is given by
\begin{equation}\label{phiCD}
\varphi_{j h}(x)=\left\{\begin{array}{cl}
C^+\e^{\i kx}+C^-\e^{-\i kx} & \text{for all}\quad x\in\left[0,\frac{d}{2\gamma_{j,h}}\right]\,, \\ [1em]
D^+\e^{\i kx}+D^-\e^{-\i kx} & \text{for all}\quad x\in\left[\frac{d}{2\gamma_{j,h}},\frac{d}{\gamma_{j,h}}\right]\,,
\end{array}\right.
\end{equation}
where $C^+,C^-$, $D^+,D^-$ are certain constants and
\begin{equation}\label{k}
k=\sqrt{\frac{2mE}{\hbar^2}}\,.
\end{equation}
Everywhere in this section, the symbol $k$ stands exclusively for the quantity given by equation~\eqref{k}.
From equation~\eqref{phiCD}, we have
\begin{subequations}\label{elimCD}
\begin{align}
\psi_j(0)=&\varphi_{j h}(0)=C^++C^-\,, \\
\psi_h(0)=&\varphi_{j h}\left(\frac{d}{\gamma_{j,h}}\right)=D^+\e^{\i k\frac{d}{\gamma_{j,h}}}+D^-\e^{-\i k\frac{d}{\gamma_{j,h}}}\,.
\end{align}
Furthermore, the $\delta$-interaction in the center of the line means
$$
\varphi_{j h}((\frac{d}{2\gamma_{j,h}})_-)=\varphi_{j h}((\frac{d}{2\gamma_{j,h}})_+)
\qquad\text{and}\qquad
\varphi_{j h}'((\frac{d}{2\gamma_{j,h}})_+)-\varphi_{j h}'((\frac{d}{2\gamma_{j,h}})_-)=\beta_{j,h}\varphi_{j h}(\frac{d}{2\gamma_{j,h}})\,,
$$
hence
\begin{gather}
C^+\e^{\i k\frac{d}{2\gamma_{j,h}}}+C^-\e^{-\i k\frac{d}{2\gamma_{j,h}}}=D^+\e^{\i k\frac{d}{2\gamma_{j,h}}}+D^-\e^{-\i k\frac{d}{2\gamma_{j,h}}}\,, \\
\i kC^+\e^{\i k\frac{d}{2\gamma_{j,h}}}-\i kC^-\e^{-\i k\frac{d}{2\gamma_{j,h}}}-\i kD^+\e^{\i k\frac{d}{2\gamma_{j,h}}}+\i kD^-\e^{-\i k\frac{d}{2\gamma_{j,h}}}=\beta_{j,h}\left(C^+\e^{\i k\frac{d}{2\gamma_{j,h}}}+C^-\e^{-\i k\frac{d}{2\gamma_{j,h}}}\right)\,.
\end{gather}
\end{subequations}
By solving the system~\eqref{elimCD}, we find $C^+,C^-$ and $D^+,D^-$ in terms of $\psi_j(0)$ and $\psi_h(0)$.
Then we substitute $C^+,C^-$ into the equation
\begin{equation}
\varphi_{j h}'(0)=\i k\left(C^+-C^-\right)\,,
\end{equation}
which follows from~\eqref{phiCD}, and in this way we find
\begin{equation}\label{phi' delta}
\varphi_{j h}'(0)=\frac{k}{\sin k\frac{d}{\gamma_{j,h}}+\frac{\beta_{j,h}}{k}\sin^2 k\frac{d}{2\gamma_{j,h}}}\left(-\psi_j(0)\left(\cos k\frac{d}{\gamma_{j,h}}+\frac{\beta_{j,h}}{2k}\sin k\frac{d}{\gamma_{j,h}}\right)+\psi_h(0)\right)\,.
\end{equation}
Recall that the formula~\eqref{phi' delta} has been derived on the assumption that the short line connecting $v_j$ and $v_h$ carries a $\delta$-interaction with the parameter $\beta_{j,h}$. The solution for the case when the line carries a vector potential $A_{(j,h)}$ can be obtained using Lemma~\ref{potencial}. Both solutions can be merged into the general formula
\begin{equation}\label{phi'}
\varphi_{j h}'(0)=\frac{k}{\sin k\frac{d}{\gamma_{j,h}}+\frac{\beta_{j,h}}{k}\sin^2 k\frac{d}{2\gamma_{j,h}}}\left(-\psi_j(0)\left(\cos k\frac{d}{\gamma_{j,h}}+\frac{\beta_{j,h}}{2k}\sin k\frac{d}{\gamma_{j,h}}\right)+\e^{-\i\frac{d}{2\gamma_{j,h}}A_{(j,h)}}\psi_h(0)\right)
\end{equation}
which is applicable to any of the three possible situations (a $\delta$-interaction, a vector potential, none of them), it suffices to set $\beta_{j,h}=0$ or $A_{(j,h)}=0$ if necessary.
Now we sum equations~\eqref{phi'} over $h$ and apply~\eqref{phipsi}, in this way we get $n$ equations relating $\psi_j(0)$ and $\psi_j'(0)$:
\begin{multline}\label{PsiPsi'}
-k\sum_{h\in N_j}\frac{\cos k\frac{d}{\gamma_{j,h}}+\frac{\beta_{j,h}}{2k}\sin k\frac{d}{\gamma_{j,h}}}{\sin k\frac{d}{\gamma_{j,h}}+\frac{\beta_{j,h}}{k}\sin^2 k\frac{d}{2\gamma_{j,h}}}\cdot\psi_j(0)
+k\sum_{h\in N_j}\frac{\e^{-\i\frac{d}{2\gamma_{j,h}}A_{(j,h)}}}{\sin k\frac{d}{\gamma_{j,h}}+\frac{\beta_{j,h}}{k}\sin^2 k\frac{d}{2\gamma_{j,h}}}\psi_h(0)
+\psi_j'(0)=\alpha_j\psi_j(0)\,, \\
j=1,\ldots,n\,.
\end{multline}

\emph{Step 2.} If a particle with energy $E$ comes from the $\ell$-th line into the internal approximating web, the final-state wave function naturally obeys relations~\eqref{PsiPsi'}. Let $\Psi^{(\ell)}_j(x)$ denote the components of the final-state wave function on the half lines and $\Psi^{(\ell)}_j(x)$, $(\Psi^{(\ell)}_j)'(0)$ be the boundary values and derivatives.
We express $\Psi_j(0)$ and $\Psi_j'(0)$ in terms of the entries of $\S_d(E;V_1,\ldots,V_n)$ using the definition of the scattering matrix~\eqref{psi_ji}:
\begin{gather}
\Psi^{(\ell)}_j(0)=\left\{\begin{array}{cl}
\frac{1}{\sqrt{k_\ell}}\left(1+[\S_d]_{\ell\ell}\right) & \text{for $j=\ell$}, \\
\frac{1}{\sqrt{k_j}}[\S_d]_{j \ell} & \text{for $j\neq \ell$},
\end{array}\right. \label{RT }
\\[1em]
(\Psi^{(\ell)}_j)'(0)=\left\{\begin{array}{cl}
\i\sqrt{k_\ell}\left(-1+[\S_d]_{\ell\ell}\right) & \text{for $j=\ell$}, \\
\i\sqrt{k_j}[\S_d]_{j \ell} & \text{for $j\neq \ell$},
\end{array}\right. \label{RT '}
\end{gather}
where $k_j=\sqrt{2m(E-V_j)/\hbar^2}$ for all $j=1,\ldots,n$.

If we multiply equation~\eqref{PsiPsi'} by $d$, then for all $j=1,\ldots,n$ replace $\psi_j(0)$ and $\psi_j'(0)$ by $\Psi^{(\ell)}_j(0)$ and $(\Psi^{(\ell)}_j)'(0)$, respectively, and finally, substitute for $\Psi^{(\ell)}_j(0)$ and $(\Psi^{(\ell)}_j)'(0)$ from equations~\eqref{RT } and \eqref{RT '}, we obtain a system of $n^2$ equations for $[\S_d]_{j\ell}$, $j,\ell=1,\ldots,n$. They take different forms for $j\neq\ell$ and $j=\ell$:

\begin{subequations}\label{Z jl}
\noindent $\bullet$\quad Case $j\neq \ell$
\begin{multline}\label{Z j<>l}
\left(-kd\sum_{h\in N_j}\frac{\cos k\frac{d}{\gamma_{j,h}}+\frac{\beta_{j,h}}{2k}\sin k\frac{d}{\gamma_{j,h}}}{\sin k\frac{d}{\gamma_{j,h}}+\frac{\beta_{j,h}}{k}\sin^2 k\frac{d}{2\gamma_{j,h}}}-\alpha_j d+\i k_j d\right)\frac{1}{\sqrt{k_j}}[\S_d]_{j \ell}+\\
+kd\sum_{h\in N_j}\frac{\e^{-\i\frac{d}{2\gamma_{j,h}}A_{(j,h)}}}{\sin k\frac{d}{\gamma_{j,h}}+\frac{\beta_{j,h}}{k}\sin^2 k\frac{d}{2\gamma_{j,h}}}\frac{1}{\sqrt{k_h}}[\S_d]_{h\ell}
\\
=-kd\frac{\e^{-\i\frac{d}{2\gamma_{j,\ell}}A_{(j,\ell)}}}{\sin k\frac{d}{\gamma_{j,\ell}}+\frac{\beta_{j,\ell}}{k}\sin^2 k\frac{d}{2\gamma_{j,\ell}}}\frac{1}{\sqrt{k_\ell}}\cdot\chi_{N_j}(\ell)\,,
\end{multline}
where $\chi_{N_j}(\ell)=1$ for $\ell\in N_j$ and $\chi_{N_j}(\ell)=0$ otherwise.

\noindent $\bullet$\quad Case $j=\ell$
\begin{multline}\label{Z j=l}
\left(-kd\sum_{h\in N_\ell}\frac{\cos k\frac{d}{\gamma_{\ell,h}}+\frac{\beta_{\ell,h}}{2k}\sin k\frac{d}{\gamma_{\ell,h}}}{\sin k\frac{d}{\gamma_{\ell,h}}+\frac{\beta_{\ell,h}}{k}\sin^2 k\frac{d}{2\gamma_{\ell,h}}}-\alpha_\ell d+\i k_\ell d\right)\frac{1}{\sqrt{k_\ell}}[\S_d]_{\ell\ell}+\\
+kd\sum_{h\in N_\ell}\frac{\e^{-\i\frac{d}{2\gamma_{\ell,h}}A_{(\ell,h)}}}{\sin k\frac{d}{\gamma_{\ell,h}}+\frac{\beta_{\ell,h}}{k}\sin^2 k\frac{d}{2\gamma_{\ell,h}}}\frac{1}{\sqrt{k_h}}[\S_d]_{h\ell}
\\
=\left(kd\sum_{h\in N_\ell}\frac{\cos k\frac{d}{\gamma_{\ell,h}}+\frac{\beta_{\ell,h}}{2k}\sin k\frac{d}{\gamma_{\ell,h}}}{\sin k\frac{d}{\gamma_{\ell,h}}+\frac{\beta_{\ell,h}}{k}\sin^2 k\frac{d}{2\gamma_{\ell,h}}}+\alpha_\ell d+\i k_\ell d\right)\frac{1}{\sqrt{k_\ell}}
\end{multline}\end{subequations}

\emph{Step 3.} The sought scattering matrix $\S_d(E;V_1,\ldots,V_n)$ can be obtained by solving the system of equations~\eqref{Z jl}.
We observe that the system~\eqref{Z jl} can be rewritten in the matrix form
\begin{equation}
(ZD^{-1}+\i dD)\S_d=-ZD^{-1}+\i dD,
\end{equation}
where $D=\diag(\sqrt{k_1},\ldots,\sqrt{k_n})$ is the matrix introduced by equation~\eqref{K} and
\begin{equation}\label{Z}
Z_{j \ell}=\left\{\begin{array}{cl}
kd\frac{\e^{-\i\frac{d}{2\gamma_{j,\ell}}A_{(j,\ell)}}}{\sin k\frac{d}{\gamma_{j,\ell}}+\frac{\beta_{j,\ell}}{k}\sin^2 k\frac{d}{2\gamma_{j,\ell}}}\cdot\chi_{N_j}(\ell) & \text{for $j\neq \ell$}\,, \\
-kd\sum_{h\in N_\ell}\frac{\cos k\frac{d}{\gamma_{\ell,h}}+\frac{\beta_{\ell,h}}{2k}\sin k\frac{d}{\gamma_{\ell,h}}}{\sin k\frac{d}{\gamma_{\ell,h}}+\frac{\beta_{\ell,h}}{k}\sin^2 k\frac{d}{2\gamma_{\ell,h}}}-\alpha_\ell d & \text{for $j=\ell$}\,.
\end{array}\right.
\end{equation}
Therefore,
\begin{equation}\label{SZ}
\S_d(E;V_1,\ldots,V_n)=(ZD^{-1}+\i dD)^{-1}(-ZD^{-1}+\i dD)=-I+2\i dD\left(Z+\i dD^2\right)^{-1}D.
\end{equation}

\begin{remark}
Note that the matrices $Z$ and $D$ occuring in equation~\eqref{SZ} have very different meanings. The matrix $Z$ represents primarily the \emph{internal} properties of the approximating graph, and depends on the approximated boundary conditions, on $d$ and on $E$. The matrix $D$ represents only the \emph{external} setting, and depends on $V_1,\ldots,V_n$ and on $E$.
\end{remark}

It suffices to 
determine $Z$. This will be done by substituting the values of the coupling parameters $\alpha_j$, $\beta_{j,\ell}$, the length factors $\gamma_{j,\ell}$ and the vector potential strengths $A_{(j,\ell)}$ into equation~\eqref{Z}. All these parameters are given by equations \eqref{spoj}, \eqref{alpha}, \eqref{gamma j<=m}, \eqref{beta,A j<=m}, \eqref{alpha j<=m}. We shall keep in mind that the expressions are different when $j,\ell$ belong to $\{1,\ldots,r\}$ and when $j,\ell$ belong to $\{r+1,\ldots,n\}$, and moreover, $Z_{j\ell}$ is defined in different ways for $j\neq\ell$ and for $j=\ell$. Therefore, the formula for $Z_{j \ell}$ is divided into several expressions:

\begin{subequations}\label{Z form}
\noindent $\bullet$\quad
Case $j,\ell\leq r$ and $j\neq \ell$
\begin{equation}
Z_{j \ell}=\left\{\begin{array}{cl}
-kd\,\e^{\i\arg (TT^*)_{j \ell}}\left(\sin\frac{kd}{|(TT^*)_{j \ell}|}\right)^{-1} & \text{if $(TT^*)_{j \ell}<0$ or $(TT^*)_{j \ell}\notin\R$}\,, \\
-kd\,\e^{\i\arg (TT^*)_{j \ell}}\left(\sin\frac{kd}{|(TT^*)_{j \ell}|}-8\frac{|(TT^*)_{j \ell}|}{kd}\sin^2\frac{kd}{2|(TT^*)_{j \ell}|}\right)^{-1} & \text{if $(TT^*)_{j \ell}>0$}\,, \\
0 & \text{if $(TT^*)_{j \ell}=0$}\,.
\end{array}\right.
\end{equation}
$\bullet$\quad
Case $j=\ell\leq r$
\begin{multline}
Z_{\ell \ell}=
-\sum_{\{h\leq r,h\neq \ell|(TT^*)_{h \ell}<0\vee t_{h \ell}\notin\R\}}kd\,\cotg\frac{kd}{|(TT^*)_{h \ell}|} \\
-\sum_{\{h\leq r,h\neq \ell|(TT^*)_{h \ell}<0\}}kd\frac{\cos\frac{kd}{|(TT^*)_{h \ell}|}-4\frac{|(TT^*)_{h \ell}|}{kd}\sin\frac{kd}{|(TT^*)_{h \ell}|}}{\sin\frac{kd}{|(TT^*)_{h \ell}}-8\frac{|(TT^*)_{h \ell}|}{kd}\sin^2\frac{kd}{2|(TT^*)_{h \ell}|}}\\
-\left(\sum_{h=r+1}^n |t_{\ell h}|^2-\sum_{h=r+1}^n |t_{\ell h}|+\sum_{\{h>r|t_{\ell h}<0\}} 2t_{\ell h}
-\sum_{\{h\leq r,h\neq \ell\}}\left|(TT^*)_{\ell h}\right|-\sum_{\{h\leq r|h\neq \ell\wedge (TT^*)_{\ell h}>0\}} 2(TT^*)_{\ell h}\right)\,.
\end{multline}
$\bullet$\quad
Case $j\leq r<\ell$
\begin{equation}\label{j<=m<l}
Z_{j \ell}=\left\{\begin{array}{cl}
kd\,\e^{\i\arg t_{j \ell}}\left(\sin\frac{kd}{|t_{j \ell}|}\right)^{-1} & \text{if $t_{j \ell}>0$ or $t_{j \ell}\notin\R$}\,, \\
kd\,\e^{\i\arg t_{j \ell}}\left(\sin\frac{kd}{|t_{j \ell}|}-8\frac{|t_{j \ell}|}{kd}\sin^2\frac{kd}{2|t_{j \ell}|}\right)^{-1} & \text{if $t_{j \ell}<0$}\,, \\
0 & \text{if $t_{j \ell}=0$}\,.
\end{array}\right.
\end{equation}
$\bullet$\quad
Case $\ell\leq r<j$
\begin{equation}
Z_{j \ell}=\overline{Z_{\ell j}}\,,\qquad\text{where $Z_{\ell j}$ is given by equation~\eqref{j<=m<l}}\,.
\end{equation}
$\bullet$\quad
Case $j,\ell>r$ and $j\neq \ell$
\begin{equation}
Z_{j \ell}=0\,.
\end{equation}
$\bullet$\quad
Case $j=\ell>r$
\begin{multline}
Z_{\ell \ell}=
-\sum_{\{h\leq r|t_{h \ell}>0\vee t_{h \ell}\notin\R\}}kd\,\cotg\frac{kd}{|t_{h\ell}|}
-\sum_{\{h\leq r|t_{h \ell}<0\}}kd\frac{\cos\frac{kd}{|t_{h\ell}|}-4\frac{|t_{h\ell}|}{kd}\sin\frac{kd}{|t_{h\ell}|}}{\sin\frac{kd}{|t_{h\ell}|}-8\frac{|t_{h\ell}|}{kd}\sin^2\frac{d}{2|t_{h\ell}|})}\\
-\left(1-\sum_{h=1}^r |t_{h\ell}|+\sum_{\{h\leq r|t_{h\ell}<0\}} 2t_{h\ell}\right)\,.
\end{multline}
\end{subequations}

Equations~\eqref{Z form} above fully determine the matrix $Z$. Therefore, for a given $T$, $V_1,\ldots,V_n$, $E$ and $d$, we can easily find $\S_d(E;V_1,\ldots,V_n)$ using the formula~\eqref{SZ} together with equations~\eqref{k_j}, \eqref{K} and \eqref{k}.

If $d\to0$, i.e., $kd\ll1$, we can expand each of the expressions~\eqref{Z form}. The calculation leads to
\begin{equation}\label{Zaprox}
Z=\left(\begin{array}{cc}
-TT^* & T \\
T^* & -I^{(n-r)}
\end{array}\right)+\O\left((kd)^2\right)\,.
\end{equation}
When the matrix~\eqref{Zaprox} is substituted into equation~\eqref{SZ}, one obtains the following approximative formula for $\S_d(E;V_1,\ldots,V_n)$:
\begin{multline}\label{S_d aprox}
\S_d(E;V_1,\ldots,V_n)=
-I+2\left(\begin{array}{c}
Q_{(1)} \\
Q_{(2)}T^*
\end{array}\right)
\left(Q_{(1)}^2+TQ_{(2)}^2T^*\right)^{-1}
\left(\begin{array}{cc}
Q_{(1)} & TQ_{(2)}
\end{array}\right)+\\
+2\i d\left(\begin{array}{c}
Q_{(1)}W^{-1}TQ_{(2)}^2 \\
Q_{(2)}\left(I-T^*W^{-1}TQ_{(2)}^2\right)
\end{array}\right)
\left(\begin{array}{cc}
Q_{(2)}^2T^*W^{-1}Q_{(1)} & \left(I-Q_{(2)}^2T^*W^{-1}T\right)Q_{(2)}
\end{array}\right)+\O(d^2)\,,
\end{multline}
where $W=Q_{(1)}^2+TQ_{(2)}^2T^*$ and $Q_{(1)}$, $Q_{(2)}$ are the diagonal matrices introduced in Proposition~\ref{S-matrix FT}.
In equation~\eqref{S_d aprox}, the scattering matrix of the approximating graph is found up to a small error. The error tends to zero for $d\to0$,
i.e.,
\begin{equation}\label{lim S}
\lim_{d\to0}\S_d(E;V_1,\ldots,V_n)=-I+2\left(\begin{array}{c}
Q_{(1)} \\
Q_{(2)}T^*
\end{array}\right)
\left(Q_{(1)}^2+TQ_{(2)}^2T^*\right)^{-1}
\left(\begin{array}{cc}
Q_{(1)} & TQ_{(2)}
\end{array}\right)\,.
\end{equation}
Now we compare the limit~\eqref{lim S} with the scattering matrix of the approximated star graph. Since the vertex of the approximated graph carries a F\"ul\"op--Tsutsui coupling, described by the boundary conditions~\eqref{FT}, its scattering matrix $\S(E;V_1,\ldots,V_n)$ is given by formula~\eqref{SFTpot}. Comparing equations~\eqref{lim S} and \eqref{SFTpot}, we find
\begin{equation}
\lim_{d\to0}\S_d(E;V_1,\ldots,V_n)=\S(E;V_1,\ldots,V_n)\,.
\end{equation}
To sum up, the scattering matrix of the approximating system converges to the scattering matrix of the requested F\"ul\"op--Tsutsui vertex coupling 
in the limit $d\to0$.


\section{Conclusions}\label{Section: Conclusion}

The resonance phenomenon found in this paper allows to control the transmission of a quantum particle along a line by external potentials put on other lines connected to the channel at a vertex. We have demonstrated that for certain well-chosen F\"ul\"op--Tsutsui vertices, the resonance is strong enough to enable a design of quantum spectral filters with fine characteristics.
The scattering characteristics are very different depending on the matrix $T$ determining the F\"ul\"op--Tsutsui coupling in the graph vertex. Therefore, it is likely that many more controllable quantum devices based on this concept can be obtained for other choices of $T$.

Some of the designed filters have rather small transmission probabilities inside the passbands. It would be useful to find how to increase the probabilities, and in this way to improve the filter characteristics.

Macroscopic controllability can be achieved also by an application of a magnetic field, for instance, by tuning a magnetic flux through rings carried by the controlling lines, or through ring-shaped controlling lines. It turns out that the use of suitably chosen F\"ul\"op--Tsutsui couplings together with the application of magnetic field lead to transmission characteristics with adjustable high peaks. We plan to address this problem in a subsequent article.

It seems that the idea can be extended to the control of the spin flow, because the mathematical description of a spin one-half particle propagating in a quantum star graph with $n$ lines is equivalent to the description of a spinless particle in a quantum star graph with $2n$ lines. Exploring the spin-filtering mechanism could be of interest.

Finally, there are known parallels between the particle propagation in quantum graphs and the wave propagation in waveguides.
Consequently, we are convinced that the results of this paper can be adapted to the control of a transmission of macroscopic
electromagnetic microwaves in a thin waveguide.

\section*{Acknowledgements}
We thank Prof. Atushi Tanaka for helpful suggestions.  This research was supported  by the Japan Ministry of Education, Culture, Sports, Science and Technology under the Grant number 21540402.


\end{document}